\documentclass[11pt]{article}
\synctex=1 
\usepackage{amsfonts}
\usepackage{amsmath}
\usepackage{lipsum}
\usepackage{amssymb}
\usepackage{amsmath}
\usepackage{amsthm}
\usepackage{multirow}
\usepackage{booktabs}
\usepackage[dvipsnames]{xcolor} 
\usepackage{xr}
\usepackage[letterpaper, margin=1.25in]{geometry}
\usepackage{setspace}

\bibliographystyle{ecta}
\usepackage[authoryear]{natbib}
\renewcommand{\theequation}{\thesection.\arabic{equation}}
\newcommand{\myref}[2]{\hyperref[#1]{#2}}
\numberwithin{equation}{section}

\usepackage{latexsym}
\usepackage{graphicx} 
\usepackage{enumerate}
\usepackage{setspace}
\onehalfspacing

\parskip = 1.5ex plus 0.5 ex minus0.2 ex
\usepackage[toc,title,titletoc,header]{appendix}
\usepackage[bottom]{footmisc} 
\usepackage[pdfborder={0 0 0}]{hyperref}
\hypersetup{
 colorlinks=true,linkcolor=magenta, citecolor=blue, filecolor=magenta, 
 linkbordercolor={0 1 1}, citebordercolor={1 0 0},
}


\usepackage{lineno}
\setpagewiselinenumbers

\allowdisplaybreaks

\newtheorem{theorem}{Theorem}
\newtheorem{lemma}{Lemma}
\newtheorem{assumption}{Assumption}
\newtheorem{definition}{Definition}
\newtheorem{example}{Example}
\newtheorem{remark}{Remark}
\newcounter{assumptionM}
\newcounter{assumptionA}
\def\theassumptionM{M.\arabic{assumptionM}}
\def\theassumptionA{A.\arabic{assumptionA}}
\setcounter{assumptionM}{0}
\setcounter{assumptionA}{0}

\begin{document}
	\relax
	\hypersetup{pageanchor=false}
	\hypersetup{pageanchor=true}
\author{%
\begin{tabular*}{\textwidth}{@{\extracolsep{\fill}}cc@{}}
\begin{minipage}[t]{0.46\textwidth}
\centering
Federico A. Bugni\\
Department of Economics\\
Northwestern University\\
\href{mailto:federico.bugni@northwestern.edu}{\texttt{federico.bugni@northwestern.edu}}
\end{minipage}
&
\begin{minipage}[t]{0.46\textwidth}
\centering
Mengsi Gao\\
Department of Economics\\
USC\\
\href{mailto:mengsi.gao@usc.edu}{\texttt{mengsi.gao@usc.edu}}
\end{minipage}
\\[2.5cm]
\begin{minipage}[t]{0.46\textwidth}
\centering
Filip Obradovi{\'c}\\
Department of Economics\\
UCLA\\
\href{mailto:obradovicfilip@ucla.edu}{\texttt{obradovicfilip@ucla.edu}}
\end{minipage}
&
\begin{minipage}[t]{0.46\textwidth}
\centering
Amilcar Velez\\
Department of Economics\\
Cornell University\\
\href{mailto:amilcare@cornell.edu}{\texttt{amilcare@cornell.edu}}
\end{minipage}
\end{tabular*}%
}

\title{\vspace{-1.7cm}On the Power Properties of Inference for Parameters\\ with Interval Identified Sets\thanks{Corresponding author: \href{mailto:federico.bugni@northwestern.edu}{federico.bugni@northwestern.edu}. We thank the Co-Editors, Arthur Lewbel and Patrik Guggenberger, and three anonymous referees for their comments and suggestions, which have greatly improved the manuscript. We also thank Eric Auerbach, Ivan Canay, Joel Horowitz, Chuck Manski, and seminar participants at Northwestern for their helpful comments on this paper. Obradovi{\'c} and Velez gratefully acknowledge financial support from the Robert Eisner Memorial Fellowship.
}\vspace{0.5cm} 
}

\maketitle

\vspace{-0.5in}
\thispagestyle{empty} 

\begin{spacing}{1.2}

\begin{abstract}
This paper studies the power properties of confidence intervals (CIs) for a partially-identified parameter of interest with an interval identified set. We assume the researcher has bounds estimators needed to construct the CIs proposed by \cite{imbens/manski:2004}, \cite{stoye:2009}, and \cite{stoye:2020}, denoted by $CI_{\alpha}^{1}$, $CI_{\alpha}^{2}$, $CI_{\alpha}^{3}$, and $CI_{\alpha}^{4}$. We also assume these bounds estimators are ``ordered'': the lower bound estimator is less than or equal to the upper bound estimator. This setup arises in economic applications involving missing data and treatment effects.

Under these conditions, we establish two results. First, we show that $CI_{\alpha}^{1}$ and $CI_{\alpha}^{2}$ are equally powerful, and both dominate $CI_{\alpha}^{3}$ and $CI_{\alpha}^{4}$. Second, we consider a favorable situation in which there are two possible bounds estimators to construct these CIs, and one is more efficient than the other. One would expect that the more efficient bounds estimator yields more powerful inference. We prove that this desirable result holds for $CI_{\alpha}^{1}$ and $CI_{\alpha}^{2}$, but not necessarily for $CI_{\alpha}^{3}$ or $CI_{\alpha}^{4}$. In summary, within the class of models considered, $CI_{\alpha}^{1}$ and $CI_{\alpha}^{2}$ have identical power properties, and both compare favorably to $CI_{\alpha}^{3}$ or $CI_{\alpha}^{4}$.
\end{abstract}
\end{spacing}

\medskip
\noindent KEYWORDS: bounds, interval identified set, partial identification, confidence intervals, hypothesis testing, power analysis.


\thispagestyle{empty}  
\newpage

\section{Introduction}

This paper contributes to the literature on inference in partially-identified econometric models. Our setup is as in \cite{imbens/manski:2004} and \cite{stoye:2009} where the econometric model implies that, for a data distribution $P$, the real-valued parameter of interest $\theta_{0}(P)$ belongs to an interval identified set $[ \theta_{l}(P), \theta_{u}(P) ]$. We focus on the case in which the endpoints of the interval do not cross, i.e., $\theta_{l}(P) \leq \theta_{u}(P)$, so the identified set is non-empty.  

We assume the researcher can implement asymptotically valid inference for the partially-identified parameter based on asymptotically normal estimators of the identified set's endpoints. That is, we assume the availability of a pair of estimators $( \hat{\theta}_{l},\hat{\theta}_{u}) $ such that
\begin{equation*}
    \sqrt{N}
    \left( \begin{array}{c}
~\hat{\theta}_{l}-\theta _{l}(P)\\\hat{\theta}_{u}-\theta _{u}(P) 
    \end{array}~\right)
    ~\overset{d}{ \to }~\mathcal{N}(  {\bf 0}_{2\times 1}, \Sigma ( P) ) 
\end{equation*}
uniformly in a suitable set of distributions, along with a uniformly consistent estimator of $\Sigma (P) $. Furthermore, we assume that the bounds estimators are ``ordered'', in the sense that $\hat{\theta}_{l}\leq \hat{\theta}_{u}$ with probability one. We refer to these conditions as the ``ordered bounds setup'' or OBS.

Under our conditions, the researcher has several natural options for constructing a confidence interval (CI, henceforth) for $\theta _{0}( P) $ with a confidence level of $(1-\alpha)$. The first option is the CI proposed by \cite{imbens/manski:2004}, which we denote by $CI_{\alpha }^{1}$. Two additional options were introduced by \cite{stoye:2009}, and we denote them by $CI_{\alpha }^{2}$ and $CI_{\alpha }^{3}$. A final option was recently introduced by \cite{stoye:2020}, and we denote it by $CI_{\alpha }^{4}$. 

One notable aspect is that $CI_{\alpha}^{1}$ is straightforward to implement, whereas $CI_{\alpha}^{2}$, $CI_{\alpha}^{3}$ and $CI_{\alpha }^{4}$ are relatively more complex. \cite{stoye:2009} shows that $CI_{\alpha}^{1}$ and $CI_{\alpha}^{2}$ rely on the so-called superefficiency condition, whereas $CI_{\alpha}^{3}$ dispenses with this requirement at the expense of introducing a tuning parameter that must be selected in practice. See Section \ref{sec:CI} for a detailed description of these CIs. \cite{stoye:2009} shows that $CI_{\alpha }^{1}$, $CI_{\alpha }^{2}$, and $CI_{\alpha }^{3}$ are all uniformly asymptotically valid and exact. In addition, \cite{stoye:2020} shows the uniform asymptotic validity of $CI_{\alpha }^{4}$. 
This implies that the four CIs are equivalent in terms of uniform coverage of parameter values in the identified set, all of which are valid candidates for $\theta_{0}(P)$.

The previously mentioned results do not discuss the statistical power of inference for the four CIs. That is, they are silent about the ability of these CIs to rule out parameters outside of the identified set, which are not valid candidates for $\theta_{0}(P)$. In fact, to our knowledge, the literature has not compared these CIs in terms of power in the context of OBS. Our first contribution is to conduct this comparison. To this end, we study the limiting coverage probabilities of the four CIs for all possible parameter sequences outside the identified set. A higher limiting coverage rate for these parameters corresponds to lower power. We formally show that $CI_{\alpha }^{1}$ and $CI_{\alpha }^{2}$ are equally powerful, and both dominate $CI_{\alpha }^{3}$ and $CI_{\alpha }^{4}$. This is a favorable result from a practical standpoint, as $CI_{\alpha}^{1}$ is straightforward to implement, and $CI_{\alpha}^{1}$ and $CI_{\alpha}^{2}$ are free from tuning parameter choices, unlike $CI_{\alpha}^{3}$.

For our second result, we consider the advantageous situation in which the researcher has not one but two pairs of estimators that satisfy OBS, and one of them is known to be more efficient than the other, in the sense of having smaller diagonal elements of their asymptotic covariance matrices.\footnote{This is weaker than the standard definition of efficiency, where the difference of the asymptotic covariance matrices of the efficient and inefficient estimators, respectively, is negative semidefinite.} While one could implement asymptotically exact CIs for $\theta _{0}( P) $ using either one of these pairs of estimators, it is reasonable to expect that inference based on the more efficient pair is preferable. In particular, one would expect that the more efficient estimator would result in more powerful inference. We formally demonstrate that this result generally holds for $CI_{\alpha }^{1}$ and $CI_{\alpha }^{2}$, but not necessarily for $CI_{\alpha }^{3}$ or $CI_{\alpha }^{4}$. Specifically, for $CI_{\alpha }^{1}$ and $CI_{\alpha }^{2}$, inference based on the more efficient bounds estimators is always at least as powerful as inference based on the less efficient estimators. In contrast, for $CI_{\alpha }^{3}$ and $CI_{\alpha }^{4}$, the reverse can occur. That is, inference based on the more efficient estimators can be strictly less powerful than inference based on the less efficient estimators. We explain this counterintuitive and undesirable phenomenon and provide the conditions under which it arises.

One can summarize both results in our paper by stating that, within our OBS framework, $CI_{\alpha}^{1}$ and $CI_{\alpha}^{2}$ have identical power properties, and both compare favorably to $CI_{\alpha}^{3}$ and $CI_{\alpha}^{4}$.

Our econometric framework with OBS is motivated by a general class of econometric problems involving missing data; see \cite{manski:1989,manski:1990,manski:1994,manski:1995}. We illustrate our setup with the standard missing data problem in Example \ref{ex:OBS_Ex}. Beyond this canonical example, our analysis is also motivated by the treatment effects problem studied in \cite{bugni/gao/obradovic/velez:2024a}. 
In that paper, we investigate inference for treatment effect parameters such as the average treatment effect (ATE) in a randomized controlled trial (RCT) with imperfect compliance. In this context, the ATE is partially identified, with its identified set being an interval. We also propose consistent and asymptotically normal estimators of these bounds that satisfy the OBS. Accordingly, we can implement asymptotically valid and exact inference using either $CI_{\alpha}^{1}$, $CI_{\alpha}^{2}$, $CI_{\alpha}^{3}$, or $CI_{\alpha}^{4}$. Our first contribution demonstrates that $CI_{\alpha}^{1}$ and $CI_{\alpha}^{2}$ are equally powerful, and both are more powerful than $CI_{\alpha}^{3}$ and $CI_{\alpha}^{4}$. Based on this finding, we choose to conduct inference using either $CI_{\alpha}^{1}$ or $CI_{\alpha}^{2}$, instead of $CI_{\alpha}^{3}$ or $CI_{\alpha}^{4}$.

Within the empirical application in \cite{bugni/gao/obradovic/velez:2024a}, we have two possible implementations of our bounds estimators for the ATE: we can estimate treatment probabilities using sample frequencies or exact probabilities (known in an RCT). By similar arguments to those in \cite{hahn:1998} and \cite{hirano/imbens/ridder:2003}, the bounds estimators that use sample analogs are shown to be more efficient than those using exact probabilities. Intuitively, one would expect that the more efficient implementation of the bounds estimator produces a more powerful inference of the partially-identified parameter value. Our second contribution shows that this intuitive result holds for $CI_{\alpha}^{1}$ and $CI_{\alpha}^{2}$, but may fail to hold for $CI_{\alpha}^{3}$ or $CI_{\alpha}^{4}$. That is, it is possible for the more efficient bounds estimator to result in less powerful inference when $CI_{\alpha}^{3}$ or $CI_{\alpha}^{4}$ is used. Based on the two contributions, we suggest using $CI_{\alpha}^{1}$ or $CI_{\alpha}^{2}$ under OBS. 

The econometric model described by our identified set $[\theta_l(P), \theta_u(P)]$ is a partially-identified model that is both simple and empirically relevant. Within the OBS framework, the methods proposed by \cite{imbens/manski:2004} and \cite{stoye:2009} are particularly relevant to this model. Nonetheless, there is a substantial literature on inference in partially-identified models, with a significant focus on moment (in)equality models; see \citet{tamer:2010, canay/shaikh:2017, ho/rosen:2017, molinari:2020,canay/illanes/velez:2026} for comprehensive reviews. If the endpoints of the identified set, $\theta_l(P)$ and $\theta_u(P)$, are specified as functions of expectations of observed random variables (e.g., linear functions or ratios of expectations), our model could be viewed as a special case within the class of moment (in)equality models. However, this is not a necessary assumption in our framework, as we do not require $\theta_l(P)$ and $\theta_u(P)$ to be linked to expectations. Thus, our framework is not inherently part of the moment (in)equality literature.

The rest of the paper is organized as follows. Section \ref{sec:setup} describes the econometric model. Section \ref{sec:OBS} presents our setup (i.e., OBS), and Section \ref{sec:CI} details the four CIs. We present our main results in Section \ref{sec:Main}, which is divided into two subsections. Section \ref{sec:Power1} compares the power of inference across the CIs. Section \ref{sec:Power2} compares the power of inference for each CI when two bounds estimators are available, with one being more efficient than the other. 
Section \ref{sec:conclusions} concludes. The appendix contains most proofs and intermediate results. Proofs related to $CI_{\alpha }^{4}$ are provided in an online supplement.

\section{Setup}\label{sec:setup}

For a data distribution $P$, the real-valued parameter of interest is denoted by $\theta _{0}(P)$. The econometric model indicates that $\theta _{0}(P)$ belongs to an interval identified set $\Theta _{I}(P) = [\theta _{l}(P), \theta _{u}(P)]$. Moreover, we assume that the identified set is non-empty, i.e., $\theta _{l}(P) \leq \theta _{u}(P)$.

We consider a confidence interval $CI_{\alpha }$ that covers $\theta _{0}(P)$ with a minimum prespecified coverage probability of $(1-\alpha)$ as the sample size $N$ increases. Furthermore, we require this coverage condition to be satisfied uniformly for all parameters in the identified set $\Theta _{I}(P)$ and for all probability distributions $P$ in a suitable space $\mathcal{P}$. Specifically, we require our CI to be {\it uniformly asymptotically valid} and, if possible, {\it uniformly asymptotically exact}, which we define next.

The confidence interval $CI_{\alpha }$ for $\theta _{0}(P) \in \Theta _{I}(P)$ is {\it uniformly asymptotically valid} if it satisfies the following property:
\begin{equation}
\underset{N \to \infty }{\lim\inf}~\inf_{P\in \mathcal{P}}~\inf_{\theta \in \Theta _{I}(P) }~P\left( \theta \in CI_{\alpha }\right) ~~\geq~~ 1-\alpha.
\label{eq:covGoal}
\end{equation}
Moreover, $CI_{\alpha }$ is {\it uniformly asymptotically exact} if \eqref{eq:covGoal} holds with equality, i.e.,
\begin{equation}
\underset{N \to \infty }{\lim\inf}~\inf_{P\in \mathcal{P}}~\inf_{\theta \in \Theta _{I}(P) }~P\left( \theta \in CI_{\alpha }\right) ~~=~~ 1-\alpha.
\label{eq:covGoal_exact}
\end{equation}
By definition, a CI that is uniformly asymptotically exact is also uniformly asymptotically valid.

The remainder of this section is organized as follows. Section \ref{sec:OBS} specifies our main assumptions, referred to as OBS. Section \ref{sec:CI} describes the four CIs considered in this paper. Under OBS, these four CIs are uniformly asymptotically exact.

\subsection{Ordered bounds setup (OBS)}\label{sec:OBS}

Throughout our paper, we assume that the researcher can construct estimators of the bounds of the identified set and its limiting distribution that satisfy the following condition.

\begin{definition}[{\it OBS} or Ordered Bounds Setup]\label{def:setup}
We say that the estimator $(\hat{\theta}_{l},\hat{\theta} _{u},\hat{\sigma}_{l},\hat{\sigma}_{u},\hat{\rho}) \in \mathbb{R} \times \mathbb{R} \times \mathbb{R}_{+} \times \mathbb{R}_{+} \times [-1,1]$ satisfies OBS with parameters $({\theta }_{l}(P),{\theta }_{u}(P),{\sigma }_{l}(P),{\sigma } _{u}(P),{\rho }(P))$ and set of distributions $\mathcal{P}\equiv \mathcal{P}(\underline{\sigma }^{2},\bar{\sigma}^{2},\overline{\Delta})$ if the following conditions are satisfied:
\begin{enumerate}[(a)]
\item $( \hat{\theta}_{l},\hat{\theta}_{u})$ is ``ordered'', i.e., $P(\hat{ \theta}_{l}\leq \hat{\theta}_{u}) = 1$.

\item $(\hat{\theta}_{l},\hat{\theta}_{u})$ is uniformly asymptotically normal, i.e.,
\begin{equation*}
    \sqrt{N}\left( 
\begin{array}{c}
\hat{\theta}_{l}-\theta _{l}(P) \\ 
\hat{\theta}_{u}-\theta _{u}(P)
\end{array}
\right) ~\overset{d}{ \to }~\mathcal{N}\left( \mathbf{0}_{2\times 1},\left( 
\begin{tabular}{cc}
$\sigma _{l}(P)^2$ & $\rho (P)\sigma _{l}(P)\sigma _{u}(P)$ \\ 
$\rho (P)\sigma _{l}(P)\sigma _{u}(P)$ & $\sigma _{u}(P)^2$
\end{tabular}
\right) \right)
\end{equation*}
uniformly in $P\in\mathcal{P}$. Moreover, assume that $\sigma _{l}(P)^2,\sigma _{u}(P)^2 \in \left[ \underline{\sigma }^{2},\bar{\sigma}^{2}\right] $ with $0<\underline{\sigma }^{2}\leq \bar{\sigma}^{2}<\infty $ and $\theta _{u}(P)-\theta _{l}(P)\leq \overline{\Delta}<\infty$.

\item $(\hat{\sigma}_{l},\hat{\sigma}_{u},\hat{\rho})$ is uniformly consistent for $({\sigma}_{l}(P),{\sigma}_{u}(P),{\rho}(P))$, i.e.,
\begin{equation*}
    \left( \hat{\sigma}_{l},\hat{\sigma}_{u},\hat{\rho}\right) ~\overset{p}{ \to }~\left( \sigma _{l}(P),\sigma _{u}(P),\rho (P)\right) 
\end{equation*}
uniformly in $P\in\mathcal{P}$.
\end{enumerate}
\end{definition}

The set of distributions $\mathcal{P}$ in Definition \ref{def:setup} encodes high-level assumptions that facilitate our asymptotic analysis. Condition (a) occurs frequently in economic applications where bound estimators are constructed as sample analogs of ordered population bounds, so that the ordering is preserved by construction. It arises, for example, in models with treatment effects or missing data that are not assumed to be missing at random; see, e.g., \cite{manski:1990, horowitz/manski:2000}. Importantly, condition (a) is closely related to existing approaches in the literature. In particular, \citet[Lemma 3]{stoye:2009} shows that, under conditions (b)-(c), condition (a) implies the superefficiency condition stated in \citet[Assumption 3]{stoye:2009}.\footnote{The superefficiency condition requires the existence of a sequence $\{a_N\}_{N \in \mathbb{N}}$ such that $a_N \to 0$, $a_N \sqrt{N} \to \infty$, and $\sqrt{N}| (\hat{\theta}_u - \hat{\theta}_l) - (\theta_{u}(P_N) - \theta_l(P_N)) | \overset{p}{\to} 0$ for any sequence $\{P_N \in \mathcal{P}\}_{N \in \mathbb{N}}$ with $\theta_{u}(P_N) - \theta_l(P_N) \leq a_N$.} As discussed in \cite{stoye:2009}, this condition is sufficient for the inference results in \cite{imbens/manski:2004}, who impose a stronger assumption.

Conditions (b)-(c) are high-level requirements on the asymptotic distribution of the bound estimators and on the estimation of the parameters of the corresponding limiting distribution. They coincide exactly with Assumptions 1(i)-(ii) in \cite{imbens/manski:2004} and \cite{stoye:2009}. In applications, one would typically impose low-level conditions (e.g., i.i.d.\ sampling and bounded moments) to establish conditions (b)-(c) in Definition \ref{def:setup} for a suitable estimator $(\hat{\theta}_l,\hat{\theta}_u,\hat{\sigma}_l,\hat{\sigma}_u,\hat{\rho})$, typically constructed as sample analogs of the corresponding population quantities. We leave $\mathcal{P}$ unspecified to maintain generality.

The previous discussion implies that the OBS framework satisfies all of the assumptions required by \cite{imbens/manski:2004} and \cite{stoye:2009}. Thus, an estimator $(\hat{\theta}_{l}, \hat{\theta}_{u}, \hat{\sigma}_{l}, \hat{\sigma}_{u}, \hat{\rho})$ that satisfies OBS can be used to implement any of the CIs proposed in these papers to achieve uniformly asymptotically exact inference. These CIs are reviewed in the next section.

The following example illustrates OBS in the canonical missing-data model, inspired by \cite{manski:1989}.

\begin{example} 
\label{ex:OBS_Ex}
Consider the following missing-data problem. Let $\{(Y_{i},Z_{i})\}_{i=1}^{N}$ be an i.i.d.\ sample from a distribution $P\in \mathcal{P}$, with $Y_{i}\in [\underline{Y},\overline{Y}]$, where $\underline{Y}$ and $\overline{Y}$ are known constants with $\underline{Y}\leq \overline{Y}$, and $ Z_{i}\in \{0,1\}$ is a binary variable that indicates whether $Y_{i}$ is observed ($Z_{i}=1$) or not ($Z_{i}=0$). Further, assume that $E_{P}[ Y_{i}|Z_{i}=1] =\mu (P)$, $V_{P}[ Y_{i}|Z_{i}=1] =\sigma ^{2}(P) $ with $\sigma ^{2}(P)\geq \delta_1 >0$. Finally, assume that $P( Z_{i}=0) =\pi (P) \in [0,\delta_2]$ with $\delta_2<1$.

The parameter of interest is $\theta =E_{P}[Y_{i}]$. By elementary arguments:
\begin{equation*}
\theta ~=~E_{P}[Y_{i}]~\in ~[E_{P}[Y_{i}Z_{i}+\underline{Y}(1-Z_{i})],~E_{P}[Y_{i}Z_{i}+\overline{Y}(1-Z_{i})]]. 
\end{equation*}
Then, the sharp identified set for $\theta $ is
\begin{equation*}
\Theta _{I}(P)~=~[\theta _{l}(P),~\theta _{u}(P)], 
\end{equation*}
where $\theta _{l}(P)=E_{P}[Y_{i}Z_{i}+\underline{Y}(1-Z_{i})]$ and $\theta _{u}(P)=E_{P}[Y_{i}Z_{i}+\overline{Y}(1-Z_{i})]$. Also, 
\begin{align*}
\theta _{l}(P)& ~=~E_{P}[Y_{i}Z_{i}+\underline{Y}(1-Z_{i})]=\mu (P)( 1-\pi (P)) +\underline{Y}\pi (P), \\
\theta _{u}(P)& ~=~E_{P}[Y_{i}Z_{i}+\overline{Y}(1-Z_{i})]=\mu (P)( 1-\pi (P)) +\overline{Y}\pi (P).
\end{align*}

In this context, it is natural to propose the following sample analog estimators for the bounds: 
\begin{equation}
(\hat{\theta}_{l},\hat{\theta}_{u})~=~\left( \frac{1}{N} \sum_{i=1}^{N}Y_{i}Z_{i}+\underline{Y}(1-Z_{i}),\frac{1}{N} \sum_{i=1}^{N}Y_{i}Z_{i}+\overline{Y}(1-Z_{i})\right). 
\label{eq:estimator_bounds_example}
\end{equation}
Note that $\underline{Y}\leq \overline{Y}$ implies $\hat{\theta}_{l}\leq \hat{\theta}_{u}$, i.e., Definition \ref{def:setup}(a) holds. Since the observations are bounded, the triangular array central limit theorem implies 
\begin{align*}
 \sqrt{N}(\hat{\theta}_{l}-\theta _{l}(P),\hat{\theta}_{u}-\theta _{u}(P)) 
 ~\overset{d}{ \to }~\mathcal{N}( \mathbf{0}_{2\times 1},\Sigma (P) ) ,
\end{align*}
uniformly in $P$, where $\Sigma (P)$ is the following matrix 
\begin{equation*}
(1-\pi (P))\left[ 
\begin{array}{cc}
\sigma ^{2}(P)+\pi (P)(\mu (P)-\underline{Y})^{2} & 
\sigma ^{2}(P)+ \pi (P)(\mu (P)-\underline{Y})(\mu (P) -\overline{Y}) \\ 
\sigma ^{2}(P)+\pi (P)(\mu (P)-\underline{Y})(\mu (P) -\overline{Y}) & \sigma ^{2}(P)+\pi (P)(\mu (P)-\overline{Y})^{2}
\end{array}
\right]. 
\end{equation*}
From here, we deduce that Definition \ref{def:setup}(b) holds with $\underline{\sigma } ^{2}=(1-\delta _{2})\delta _{1}>0$, $\bar{\sigma}^{2}=5(\overline{Y} - \underline{Y})^2/4<\infty$, and $ \bar{\Delta}=\overline{Y} - \underline{Y}<\infty$. Finally, standard asymptotic arguments imply that Definition \ref{def:setup}(c) holds for sample analogs, i.e., $\hat{\sigma}_{l}$ and $\hat{\sigma}_{u}$ based on
\begin{align*}
\hat{\sigma}_{l}^{2}& ~=~\frac{1}{N}\sum_{i=1}^{N}(Y_{i}Z_{i}+\underline{Y}(1-Z_{i}))^{2}-\Big(\frac{1}{N}\sum_{i=1}^{N}(Y_{i}Z_{i}+\underline{Y}(1-Z_{i}))\Big)^{2} \\
\hat{\sigma}_{u}^{2}& ~=~\frac{1}{N}\sum_{i=1}^{N}(Y_{i}Z_{i}+\overline{Y}(1-Z_{i}))^{2}-\Big(\frac{1}{N}\sum_{i=1}^{N}(Y_{i}Z_{i}+\overline{Y}(1-Z_{i}))\Big)^{2},
\end{align*}
and $\hat{\rho}=\hat{\sigma}_{lu}/(\hat{\sigma}_{l}\hat{\sigma}_{u})$ with
\begin{equation*}
\hat{\sigma}_{lu}~=~\frac{1}{N}\sum_{i=1}^{N}\left( 
\begin{array}{c}
(Y_{i}Z_{i}+\underline{Y}(1-Z_{i}))\times \\ 
(Y_{i}Z_{i}+\overline{Y}(1-Z_{i}))
\end{array}
\right) -\left( 
\begin{array}{c}
\frac{1}{N}\sum_{i=1}^{N}(Y_{i}Z_{i}+\underline{Y}(1-Z_{i}))\times \\ 
\frac{1}{N}\sum_{i=1}^{N}(Y_{i}Z_{i}+\overline{Y}(1-Z_{i}))
\end{array}
\right) . 
\end{equation*}
\end{example}

We conclude this section by noting that there are well-known examples of interval-identified sets that do not satisfy OBS. In particular, the ``ordered'' condition in Definition \ref{def:setup}(a) can be restrictive in certain cases. To illustrate this, Example \ref{ex:NoOBS_Ex} in Appendix \ref{sec:OBSfails} presents a missing-data problem in a linear regression setting where the bounds estimators are not ``ordered'', causing OBS to fail. It is important to note that our results depend critically on OBS, and in particular on the requirement that the bounds be ``ordered''; without it, the results of our paper do not apply.

\subsection{Confidence intervals (CIs)}\label{sec:CI}

This paper considers four CIs. Our first confidence interval is $CI_{\alpha }^{1}$, originally proposed by \citet[Section 4]{imbens/manski:2004}, and revisited by \cite{stoye:2009}. Given an estimator $(\hat{\theta}_{l},\hat{\theta}_{u},\hat{\sigma}_{l},\hat{\sigma}_{u},\hat{\rho})$ and a confidence level $(1-\alpha)$, $CI_{\alpha }^{1}$ is defined as follows: 
\begin{equation}
CI_{\alpha }^{1}~=~\left[ ~\hat{\theta}_{l}-\frac{\hat{\sigma}_{l}c^{1}}{\sqrt{N}},~\hat{\theta}_{u}+\frac{\hat{\sigma}_{u}c^{1}}{\sqrt{N}}~\right] ,
\label{eq:CI1}
\end{equation}
where $c^{1}$ solves 
\begin{equation}
\Phi \left( c^{1}+\frac{\sqrt{N}(\hat{\theta}_{u}-\hat{\theta}_{l})}{\max \{\hat{\sigma}_{l},\hat{\sigma}_{u}\}}\right) ~-~\Phi \left( -c^{1}\right) ~=~1-\alpha .  \label{eq:CI1_problem}
\end{equation}
Provided that $\max \{\hat{\sigma}_{l},\hat{\sigma}_{u}\}>0$, it follows that $c^{1}$ is uniquely determined by \eqref{eq:CI1_problem}.\footnote{Under OBS, $\max \{\hat{\sigma}_{l},\hat{\sigma}_{u}\}>0$ occurs with probability approaching one.} Under OBS, \citet[Lemma 4]{imbens/manski:2004} implies that $CI_{\alpha }^{1}$ is uniformly asymptotically valid, while \citet[Proposition 1]{stoye:2009} shows that $CI_{\alpha }^{1}$ is uniformly asymptotically exact. Importantly, the implementation of $CI_{\alpha}^{1}$ does not require estimating the correlation coefficient $\hat{\rho}$, whereas the other methods do.

Our second confidence interval is $CI_{\alpha }^{2}$, developed by \cite{stoye:2009}. Given a generic estimator $(\hat{\theta}_{l},\hat{\theta}_{u},\hat{\sigma}_{l},\hat{\sigma}_{u},\hat{\rho})$ and a confidence level $(1-\alpha)$, $CI_{\alpha }^{2}$ is defined as follows: 
\begin{equation}
CI_{\alpha }^{2}~=~\left[ ~\hat{\theta}_{l}-\frac{\hat{\sigma}_{l}c_{l}^{2}}{\sqrt{N}},~\hat{\theta}_{u}+\frac{\hat{\sigma}_{u}c_{u}^{2}}{\sqrt{N}}~\right] ,\label{eq:CI2}
\end{equation}
where $(c_{l}^{2},c_{u}^{2})$ solve:
\begin{align}
& \min_{c_{l},c_{u}\in \mathbb{R}}~(\hat{\sigma}_{l}c_{l}+\hat{\sigma}_{u}c_{u})~\text{s.t.}~  \notag \\
& P\Big(~\Big\{-c_{l}\leq z_{1}\Big\}~\cap ~\Big\{\hat{\rho}z_{1}\leq c_{u}+\tfrac{\sqrt{N}(\hat{\theta}_{u}-\hat{\theta}_{l})}{\hat{\sigma}_{u}}+\sqrt{1-\hat{\rho}^{2}}z_{2}\Big\}~\Big|~\hat{\theta}_{l},\hat{\theta}_{u},\hat{\sigma}_{l},\hat{\sigma}_{u},\hat{\rho}~\Big)~\geq ~1-\alpha ~~\text{and}~~ \notag \\
& P\Big(~\Big\{-c_{l}-\tfrac{\sqrt{N}(\hat{\theta}_{u}-\hat{\theta}_{l})}{\hat{\sigma}_{l}}+\sqrt{1-\hat{\rho}^{2}}z_{2}\leq \hat{\rho}z_{1}\Big\}~\cap ~\Big\{z_{1}\leq c_{u}\Big\}~\Big|~\hat{\theta}_{l},\hat{\theta}_{u},\hat{\sigma}_{l},\hat{\sigma}_{u},\hat{\rho}~\Big)~\geq ~1-\alpha,
\label{eq:CI2_problem}
\end{align}
where $(z_{1},z_{2})\sim \mathcal{N}(\mathbf{0}_{2\times 1},\mathbf{I}_{2\times 2})$. It is unclear to us whether \eqref{eq:CI2_problem} has a unique solution.\footnote{\citet[Page 1305]{stoye:2009} states that typically $(c_{l}^{2},c_{u}^{2})$ is uniquely determined by the fact that both constraints in \eqref{eq:CI2_problem} hold with equality.} Be that as it may, our formal arguments will only require the researcher to choose $ (c_{l}^{2},c_{u}^{2})$ arbitrarily whenever \eqref{eq:CI2_problem} has multiple solutions.

As explained in \citet[Page 1305]{stoye:2009}, $CI_{\alpha }^{2}$ calibrates the critical values $(c_{l}^{2},c_{u}^{2})$ taking into account that the underlying problem is bivariate. In this sense, $CI_{\alpha }^{2}$ is viewed as an improvement upon $CI_{\alpha }^{1}$. In fact, \citet[Page 1305]{stoye:2009} argues that $CI_{\alpha }^{2}$ is the shortest CI with correct nominal size. \citet[Proposition 2]{stoye:2009} shows that $CI_{\alpha }^{2}$ is uniformly asymptotically exact under OBS.

Our third confidence interval is $CI_{\alpha }^{3}$, also developed by \cite{stoye:2009}. Unlike $ CI_{\alpha }^{1}$ and $CI_{\alpha }^{2}$, $CI_{\alpha }^{3}$ was introduced as a CI that does not require the superefficiency condition (\citet[Assumption 3]{stoye:2009}) for its validity.  
As already explained, the superefficiency condition is guaranteed under our OBS assumptions, but may fail in other contexts. To implement $CI_{\alpha }^{3}$, the researcher must specify a preassigned tuning parameter sequence of constants $\left\{ b_{N}\right\} _{N\geq 1}$ that satisfies $ b_{N}  \to 0$ and $b_{N}\sqrt{N}  \to \infty $. These types of sequences are commonly utilized for moment selection in the moment (in)equality literature; e.g., see \cite{andrews/soares:2010,bugni:2010,bugni:2015}. For example, these papers suggest sequences such as $b_{N}=\ln N/\sqrt{N}$, $b_{N}=\sqrt{ \ln \ln N}/\sqrt{N}$, or $b_{N}=N^{-c}$ for any $c\in (0,1/2)$.

Given an estimator $(\hat{\theta}_{l},\hat{\theta}_{u},\hat{\sigma}_{l}, \hat{\sigma}_{u},\hat{\rho})$, a confidence level $(1-\alpha)$, and a sequence $\left\{ b_{N}\right\} _{N\geq 1}$, $CI_{\alpha }^{3}$ is defined as follows: 
\begin{equation}
CI_{\alpha }^{3}~=~\left\{ 
\begin{tabular}{ll}
$\Big[ ~\hat{\theta}_{l}-\dfrac{\hat{\sigma}_{l}c_{l}^{3}}{\sqrt{N}},~\hat{\theta}_{u}+\dfrac{\hat{\sigma}_{u}c_{u}^{3}}{\sqrt{N}}~\Big] $ & if $\hat{\theta}_{l}-\dfrac{\hat{\sigma}_{l}c_{l}^{3}}{\sqrt{N}}\leq \hat{\theta}_{u}+\dfrac{\hat{\sigma}_{u}c_{u}^{3}}{\sqrt{N}}$ \\ 
$\emptyset $ & otherwise,
\end{tabular}
\right.   \label{eq:CI3}
\end{equation}
where $(c_{l}^{3},c_{u}^{3})$ solve
\begin{align}
& \min_{c_{l},c_{u}\in \mathbb{R}}~(\hat{\sigma}_{l}c_{l}+\hat{\sigma}_{u}c_{u})~\text{s.t.}~  \notag \\
& \left. P\left( 
\begin{array}{c}
\Big\{\hat{\rho}z_{1}\leq c_{u}+\tfrac{\sqrt{N}(\hat{\theta}_{u}-\hat{\theta}_{l})1[(\hat{\theta}_{u}-\hat{\theta}_{l})>b_{N}]}{\hat{\sigma}_{u}}+\sqrt{1-\hat{\rho}^{2}}z_{2}\Big\} \\ 
~\cap ~\big\{-c_{l}\leq z_{1}\big\}
\end{array}
\right\vert \hat{\theta}_{l},\hat{\theta}_{u},\hat{\sigma}_{l},\hat{\sigma}_{u},\hat{\rho}\right) \geq 1-\alpha ~\text{and}  \notag \\
& \left. P\left( 
\begin{array}{c}
\Big\{-c_{l}-\tfrac{\sqrt{N}(\hat{\theta}_{u}-\hat{\theta}_{l})1[(\hat{\theta}_{u}-\hat{\theta}_{l})>b_{N}]}{\hat{\sigma}_{l}}+\sqrt{1-\hat{\rho}^{2}}z_{2}\leq \hat{\rho}z_{1}\Big\} \\ 
~\cap ~\big\{z_{1}\leq c_{u}\big\}
\end{array}
\right\vert \hat{\theta}_{l},\hat{\theta}_{u},\hat{\sigma}_{l},\hat{\sigma} _{u},\hat{\rho}\right) \geq 1-\alpha ,  \label{eq:CI3_problem}
\end{align}
where $(z_{1},z_{2})\sim \mathcal{N}(\mathbf{0}_{2\times 1},\mathbf{I}_{2\times 2})$. As with $CI_\alpha^2$, it is unclear whether \eqref{eq:CI3_problem} is guaranteed to have a unique solution. In any case, our arguments only require the researcher to choose $(c_{l}^{3},c_{u}^{3})$ arbitrarily whenever \eqref{eq:CI3_problem} has multiple solutions. \citet[Proposition 3]{stoye:2009} shows that $CI_{\alpha }^{3}$ is uniformly asymptotically exact under OBS.

\begin{remark}\label{rem:robustness}
As already noted, a central advantage of $CI_{\alpha }^{3}$ is that it does not require the superefficiency condition for validity. Since the OBS framework satisfies the superefficiency condition, one might reasonably conclude that $CI_{\alpha }^{3}$ is not needed in this context. Moreover, $CI_{\alpha }^{3}$ is as computationally costly as $CI_{\alpha }^{2}$ and requires an additional tuning parameter sequence. Our forthcoming power results further clarify the costs associated with the robustness properties of $CI_{\alpha }^{3}$.
\end{remark}

Our fourth confidence interval was introduced recently by \cite{stoye:2020}. Given an estimator $(\hat{\theta}_{l},\hat{\theta}_{u},\hat{\sigma}_{l}, \hat{\sigma}_{u},\hat{\rho})$ and a confidence level $(1-\alpha)$, $CI_{\alpha }^{4}$ is defined as follows: 
\begin{equation}
CI_{\alpha }^{4}~=~\Big[ \hat{\theta}_{l}-\frac{\hat{\sigma}_{l}c^4 }{\sqrt{N}},\hat{\theta}_{u}+\frac{\hat{\sigma}_{u}c^4 }{\sqrt{N}}\Big] \cup \Big[ \hat{\theta}^{\ast }-\frac{\hat{\sigma}^{\ast }\Phi^{-1} ( 1-\alpha /2) }{\sqrt{N}},\hat{\theta}^{\ast }+\frac{\hat{\sigma}^{\ast }\Phi^{-1} ( 1-\alpha /2) }{\sqrt{N}}\Big],
\label{eq:CI4}
\end{equation}
where $c^4$ is the unique value of $c$ that solves
\begin{equation*}
\inf_{\Delta \geq 0}P\Bigg( 
\begin{array}{c}
\{ z_{1}( \hat \rho ) -\Delta -c\leq 0\leq z_{2}( \hat \rho ) +c\}~~ \cup  \\ 
\{ \vert z_{1}( \hat \rho ) +z_{2}( \hat \rho ) -\Delta \vert \leq \sqrt{2+2{\hat \rho}}\Phi ^{-1}( 1-\alpha /2) \} 
\end{array}
\Bigg|\hat \rho \Bigg) ~=~1-\alpha ,
\end{equation*}
where $z( \hat\rho ) =( z_{1}(\hat\rho ) ,z_{2}( \hat\rho ) ) \sim \mathcal{N}( \mathbf{0}_{2\times 1},[ 1,\hat\rho ;\hat\rho ,1] ) $,
\begin{align*}
\hat{\theta}^{\ast } ~=~\frac{\hat{\theta}_{l}\hat{\sigma}_{u}+\hat{\theta}_{u}\hat{\sigma}_{l}}{\hat{\sigma}_{l}+\hat{\sigma}_{u}}~~~~\text{and}~~~~
\hat{\sigma}^{\ast } ~=~\frac{\hat{\sigma}_{l}\hat{\sigma}_{u}\sqrt{2+2\hat{\rho}}}{\hat{\sigma}_{l}+\hat{\sigma}_{u}}.
\end{align*} 
\citet[Theorem 1]{stoye:2020} proves that $CI_{\alpha}^{4}$ is uniformly asymptotically valid under OBS. \cite{stoye:2020} argues that $CI_{\alpha}^{4}$ has several advantages relative to existing approaches in the literature: it delivers desirable coverage properties even under model misspecification, is never empty or excessively short, requires no tuning parameters, and is computationally trivial to implement.

\begin{remark}
As we explain in the introduction, our econometric framework does not necessarily correspond to a moment (in)equality model. For this reason, the CIs proposed by the moment (in)equality literature are not included among the inference methods we consider.
\end{remark}

\section{Main results}\label{sec:Main}

Our goal is to compare the power of inference based on CIs for the partially-identified parameter $\theta_{0}(P) \in \Theta_{I}(P)$. By the duality between CIs and hypothesis testing, we can investigate the power of an inference method based on a CI by deriving its limiting coverage probability for a parameter value $\theta$ outside of $\Theta_{I}(P)$. With an interval identified set $\Theta_{I}(P) = [\theta_l(P), \theta_u(P)]$, this means that either $\theta < \theta_l(P)$ or $\theta > \theta_u(P)$. Since $\theta$ does not belong to the identified set $\Theta_{I}(P)$, it cannot be the true parameter value $\theta_{0}(P)$. Thus, a lower limiting coverage probability for $\theta$ is equivalent to higher statistical power against the (incorrect) null hypothesis $H_0: \theta_{0}(P) = \theta$.

Based on the previous discussion, we compare the limiting coverage probabilities for parameter values outside $\Theta_{I}(P)$ across various CIs. Importantly, our analysis allows the parameter value and the data distribution to vary arbitrarily with the sample size. That is, we consider all possible sequences $\{(P_{N}, \theta_{N}) \in \mathcal{P} \times \Theta_{I}(P_{N})^{c}\}_{N\in \mathbb{N}}$. Thus, our results include power analysis for fixed alternatives, i.e., $\theta_{N} = \bar{\theta} \not\in \Theta_{I}(P_{N})$, as well as local alternatives, i.e., $\theta_{N} \uparrow \theta_{l}(P_{N})$ or $\theta_{N} \downarrow \theta_{u}(P_{N})$.

\subsection{Power comparison across CIs}\label{sec:Power1}

Our only result in this section is Theorem \ref{thm:CIcomparison}. This result compares the limiting coverage probability of the four CIs for sequences of parameters outside $\Theta_{I}(P)$.

\begin{theorem}[Comparison across CIs]\label{thm:CIcomparison} 
Let $(\hat{\theta}_{l},\hat{\theta}_{u},\hat{\sigma}_{l},\hat{\sigma}_{u},\hat{\rho})$ be an estimator that satisfies OBS  in Definition \ref{def:setup} with parameter $({\theta } _{l}(P),{\theta }_{u}(P),{\sigma }_{l}(P),{\sigma }_{u}(P),{\rho }(P))$ and set $\mathcal{P}$. Assume that $\alpha \in (0,0.5)$. Then,

\begin{enumerate}[(a)]
\item For any sequence $\{(P_{N},\theta _{N})\in \mathcal{P}\times \Theta _{I}(P_{N})^{c}\}_{N\in \mathbb{N}}$, 
\begin{equation}
\lim _{N  \to \infty}~\big(P_{N}(\theta _{N}\in CI_{\alpha}^{1})-P_{N}(\theta _{N}\in CI_{\alpha}^{2})\big)~=~ 0 .
\label{eq:CI4main1}
\end{equation}

\item For any sequence $\{(P_{N},\theta _{N})\in \mathcal{P}\times \Theta _{I}(P_{N})^{c}\}_{N\in \mathbb{N}}$,
\begin{equation}
\underset{N \to \infty}{\lim  \inf }~\big(P_{N}(\theta _{N}\in CI_{\alpha}^{3})-P_{N}(\theta _{N}\in CI_{\alpha}^{j})\big)~\geq ~0~~~\text{for}~j=1,2. 
\label{eq:CI4main2}
\end{equation}
Furthermore, \eqref{eq:CI4main2} holds strictly for suitable sequences of $\{(P_{N},\theta _{N})\in \mathcal{P}\times \Theta _{I}(P_{N})^{c}\}_{N\in \mathbb{N}}$.

\item For any sequence $\{(P_{N},\theta _{N})\in \mathcal{P}\times \Theta _{I}(P_{N})^{c}\}_{N\in \mathbb{N}}$,
\begin{equation}
\underset{N \to \infty}{\lim  \inf }~\big(P_{N}(\theta _{N}\in CI_{\alpha}^{4})-P_{N}(\theta _{N}\in CI_{\alpha}^{j}) \big)~\geq ~0~~~\text{for}~j=1,2,3. \label{eq:CI4main4}
\end{equation}
Furthermore, \eqref{eq:CI4main4} holds strictly for suitable sequences of $\{(P_{N},\theta _{N})\in \mathcal{P}\times \Theta _{I}(P_{N})^{c}\}_{N\in \mathbb{N}}$.
\end{enumerate}
\end{theorem}

Theorem \ref{thm:CIcomparison} consists of three parts. Part (a) states that $CI_{\alpha}^{1}$ and $CI_{\alpha}^{2}$ are equivalent in terms of power: for all sequences of parameters such that $\theta_{N} \not\in \Theta_{I}(P_{N})$, the difference in their coverage rates converges to zero. To explain this result, it is useful to split the analysis into two mutually exclusive cases: ``short'' identified sets and ``long'' identified sets. We say that the identified set is ``short'' if the length of the identified set is $O(1/\sqrt{N})$, and we say that the identified set is ``long'' otherwise. For long identified sets, $CI_{\alpha}^{1}$ and $CI_{\alpha}^{2}$ asymptotically treat the inference problem as one-sided. This implies that both CIs agree on using the $(1-\alpha)$-quantile of the normal distribution as the critical value, leading to identical limiting coverage rates. For short identified sets, the ordered nature of the bounds forces a degenerate asymptotic distribution; otherwise, the estimators would cross with positive probability.\footnote{See Lemma \ref{lem:near1} for a precise statement of this result.} This degeneracy in the asymptotic distribution implies that $CI_{\alpha}^{1}$ and $CI_{\alpha}^{2}$ also agree on the critical values, resulting in the same limiting coverage rates. Combining the two cases, the CIs have identical limiting coverage rates.

Part (b) compares the power of $CI_{\alpha}^{3}$ with the first two CIs for all sequences of parameters $\theta_{N} \not\in \Theta_{I}(P_{N})$. Notably, our conclusions hold regardless of the choice of $\{b_{N}\}_{N \in \mathbb{N}}$ used in the implementation of $CI_{\alpha}^{3}$, provided that $b_{N} \to 0$ and $b_{N}\sqrt{N}  \to \infty$. The result shows that $CI_{\alpha}^{1}$ and $CI_{\alpha}^{2}$ are weakly more powerful than $CI_{\alpha}^{3}$, and that there exist sequences ${(P_N,\theta_N)}_{N \in \mathbb{N}}$ for which this relationship holds strictly. This allows us to conclude that $CI_{\alpha}^{1}$ and $CI_{\alpha}^{2}$ dominate $CI_{\alpha}^{3}$ in terms of power. The argument considers the two cases presented in the previous paragraph. For long identified sets, $CI_{\alpha}^{1}$ and $CI_{\alpha}^{2}$ asymptotically treat the inference problem as one-sided (using the smallest critical values consistent with validity), and are therefore weakly more powerful than $CI_{\alpha}^{3}$. In turn, for short identified sets, $CI_{\alpha}^{3}$ asymptotically treats the inference problem as two-sided, using the $(1-\alpha/2)$-quantile of the normal distribution as the critical value. This critical value is larger than those used by $CI_{\alpha}^{1}$ and $CI_{\alpha}^{2}$, implying that $CI_{\alpha}^{3}$ is weakly less powerful than $CI_{\alpha}^{1}$ and $CI_{\alpha}^{2}$.

Finally, part (c) shows that $CI_{\alpha}^{4}$ is dominated in power by the other three CIs for all sequences of parameters $\theta_{N} \notin \Theta_{I}(P_{N})$. To gain intuition for this result, recall that $CI_{\alpha}^{4}$ is the union of two confidence sets. The first is intended to cover the parameter value under correct specification, while the second is intended to cover the pseudo-true parameter value when the model is misspecified. Our results show that the component of $CI_{\alpha}^{4}$ designed for correct specification is dominated in power by the other CIs.

Theorem \ref{thm:CIcomparison} provides a comprehensive comparison of the relative power properties of the four CIs. It states that, under OBS, $CI_{\alpha}^{1}$ and $CI_{\alpha}^{2}$ are equally powerful, both dominate $CI_{\alpha}^{3}$ and $CI_{\alpha}^{4}$, with $CI_{\alpha}^{3}$ dominating $CI_{\alpha}^{4}$. As already mentioned, we view this as favorable from a practical point of view, since $CI_{\alpha}^{1}$ is straightforward to implement, and neither $CI_{\alpha}^{1}$ nor $CI_{\alpha}^{2}$ requires a tuning parameter sequence. Recall that $CI_{\alpha}^{3}$ was designed to be robust to violations of the superefficiency condition, while $CI_{\alpha}^{4}$ was designed to be robust to model misspecification. In the OBS setting, however, the model is correctly specified, and superefficiency holds automatically. Hence, using $CI_{\alpha}^{3}$ or $CI_{\alpha}^{4}$ incurs a loss of power without providing any compensating robustness benefit.

The proof of Theorem \ref{thm:CIcomparison} is based on auxiliary results that characterize the limiting coverage rates for suitable sequences $\{(P_N, \theta_{N})\}_{N \in \mathbb{N}}$ with $P_N \in \mathcal{P}$ and $\theta_{N} \not\in \Theta_{I}(P_{N})$ for all $N \in \mathbb{N}$. For $CI_{\alpha}^{1}$, $CI_{\alpha}^{2}$, and $CI_{\alpha}^{3}$, these results are presented in Lemmas \ref{lem:C1_limit}, \ref{lem:C2_limit}, and \ref{lem:C3_limit} in the appendix, respectively. The results pertaining to $CI_{\alpha}^{4}$ are relegated to the supplementary appendix. We believe these auxiliary results may be of independent interest beyond this paper.

\subsection{Power comparison across two bounds estimators}\label{sec:Power2}

This section considers a situation in which the researcher has two bounds estimators for constructing CIs for $\theta_{0}(P) \in \Theta_{I}(P)$, with one more efficient than the other. We use the superscripts $E$ and $I$ for the efficient and inefficient estimators, respectively. The next assumption formalizes the setup.

\begin{assumption}\label{ass:1} 
Assume the following conditions:
\begin{enumerate}[(a)]
\item $(\hat{\theta}_{l}^{E},\hat{\theta}_{u}^{E},\hat{\sigma}_{l}^{E},\hat{\sigma}_{u}^{E},\hat{\rho}^{E})$ satisfies OBS with parameters $({\theta }_{l}(P),{\theta }_{u}(P),{\sigma }_{l}^{E}(P),{\sigma }_{u}^{E}(P),{\rho }^{E}(P))$ and set $\mathcal{P}$. 

\item $(\hat{\theta}_{l}^{I},\hat{\theta}_{u}^{I},\hat{\sigma}_{l}^{I},\hat{\sigma}_{u}^{I},\hat{\rho}^{I})$ satisfies OBS with parameters $({\theta }_{l}(P),{\theta }_{u}(P),{\sigma }_{l}^{I}(P),{\sigma }_{u}^{I}(P),{\rho }^{I}(P))$ and set $\mathcal{P}$.

\item $(\hat{\theta}_{l}^{E},\hat{\theta}_{u}^{E})$ is more efficient than $(\hat{\theta}_{l}^{I},\hat{\theta}_{u}^{I})$ in the sense that, for all $P\in \mathcal{P}$, 
\begin{equation}
    \sigma _{l}^{E}(P)\leq \sigma _{l}^{I}(P)~~~\text{and}~~~\sigma _{u}^{E}(P)\leq\sigma _{u}^{I}(P).
    \label{eq:sigmaComparison}
\end{equation}
\end{enumerate}
\end{assumption}

Assumptions \ref{ass:1}(a)-(b) imply that both $(\hat{\theta}_{l}^{E}, \hat{\theta}_{u}^{E})$ and $(\hat{\theta}_{l}^{I}, \hat{\theta}_{u}^{I})$ are asymptotically normal estimators of the bounds of the identified set (i.e., $(\theta_{l}(P), \theta_{u}(P))$), and that we consistently estimate their limiting covariance matrix. Moreover, these results hold uniformly for all distributions $P \in \mathcal{P}$. This means that either estimator can be used to construct CIs that are uniformly asymptotically exact. Assumption \ref{ass:1}(c) specifies the sense in which $(\hat{\theta}_{l}^{E}, \hat{\theta}_{u}^{E})$ is more efficient than $(\hat{\theta}_{l}^{I}, \hat{\theta}_{u}^{I})$: for both the lower and upper bounds, the asymptotic variance of each efficient estimator is less than or equal to that of the inefficient estimator.  We note that this condition weakens the usual relative asymptotic efficiency criterion, which states that, for all $P\in \mathcal{P}$,
{\small\begin{equation*}
    \left( 
\begin{tabular}{cc}
$\sigma _{l}^{E}(P)^2$ & $\rho ^{E}(P)\sigma _{l}^{E}(P)\sigma _{u}^{E}(P)$ \\ 
$\rho ^{E}(P)\sigma _{l}^{E}(P)\sigma _{u}^{E}(P)$ & $\sigma _{u}^{E}(P)^2$
\end{tabular}
\right) -\left( 
\begin{tabular}{cc}
$\sigma _{l}^{I}(P)^2$ & $\rho ^{I}(P)\sigma _{l}^{I}(P)\sigma _{u}^{I}(P)$ \\ 
$\rho ^{I}(P)\sigma _{l}^{I}(P)\sigma _{u}^{I}(P)$ & $\sigma _{u}^{I}(P)^2$
\end{tabular}
\right) 
\end{equation*}}
is a negative semidefinite matrix.

We seek to compare the power of inference based on CIs constructed from inefficient and efficient bounds estimators. As discussed in Section \ref{sec:Main}, we compare their power by contrasting the limiting coverage rates of the corresponding CIs for sequences of parameter values outside the identified set. Since these parameters lie outside the identified set, a lower limiting coverage probability corresponds to higher statistical power against the (incorrect) null hypotheses. 

Our first result in this section compares the limiting coverage rates of inference based on $CI_{\alpha}^{1}$ when implemented with the efficient and inefficient estimators.

\begin{theorem}[Power comparison for $CI_{\alpha}^{1}$]\label{thm:CI1}
Let $\alpha \in (0,0.5)$ and Assumption \ref{ass:1} hold. Define $CI_{\alpha}^{1,E}$ and $CI_{\alpha}^{1,I}$ as in \eqref{eq:CI1} with $(\hat{\theta}_{l}^{E}, \hat{\theta}_{u}^{E}, \hat{\sigma}_{l}^{E}, \hat{\sigma}_{u}^{E}, \hat{\rho}^{E})$ and $(\hat{\theta}_{l}^{I}, \hat{\theta}_{u}^{I}, \hat{\sigma}_{l}^{I}, \hat{\sigma}_{u}^{I}, \hat{\rho}^{I})$, respectively. For any sequence $\{(P_{N}, \theta_{N}) \in \mathcal{P} \times \Theta_{I}(P_{N})^{c} \}_{N \in \mathbb{N}}$,
\begin{equation}
\underset{N \to \infty}{\lim \inf}~ \big( P_{N}(\theta_{N} \in CI_{\alpha}^{1,I}) - P_{N}(\theta_{N} \in CI_{\alpha}^{1,E})\big) ~~\geq ~~0.
\label{eq:CI1main}
\end{equation}
Furthermore, \eqref{eq:CI1main} holds strictly for suitable sequences of $\{(P_{N}, \theta_{N}) \in \mathcal{P} \times \Theta_{I}(P_{N})^{c} \}_{N \in \mathbb{N}}$.
\end{theorem}

In simple terms, Theorem \ref{thm:CI1} shows that $CI_{\alpha}^{1}$ based on the more efficient bounds estimator (i.e., $CI_{\alpha}^{1,E}$) is more powerful than when it is based on the less efficient bounds estimator (i.e., $CI_{\alpha}^{1,I}$). While both CIs satisfy the coverage goal in \eqref{eq:covGoal} with equality, Theorem \ref{thm:CI1} demonstrates that the former dominates the latter in terms of power.

Our second result compares the limiting coverage rates of inference based on $CI_{\alpha}^{2}$ when implemented with the efficient and inefficient estimators.

\begin{theorem}[Power comparison for $CI_{\alpha}^{2}$]\label{thm:CI2}
Let $\alpha \in (0,0.5)$ and Assumption \ref{ass:1} hold. Define $CI_{\alpha}^{2,E}$ and $CI_{\alpha}^{2,I}$ as in \eqref{eq:CI2} with $(\hat{\theta}_{l}^{E}, \hat{\theta}_{u}^{E}, \hat{\sigma}_{l}^{E}, \hat{\sigma}_{u}^{E}, \hat{\rho}^{E})$ and $(\hat{\theta}_{l}^{I}, \hat{\theta}_{u}^{I}, \hat{\sigma}_{l}^{I}, \hat{\sigma}_{u}^{I}, \hat{\rho}^{I})$, respectively. For any sequence $\{(P_{N}, \theta_{N}) \in \mathcal{P} \times \Theta_{I}(P_{N})^{c} \}_{N \in \mathbb{N}}$,
\begin{equation}
\underset{N  \to \infty}{\lim \inf} ~ \big(P_{N}(\theta_{N} \in CI_{\alpha}^{2,I}) - P_{N}(\theta_{N} \in CI_{\alpha}^{2,E})\big) ~~\geq ~~0.
\label{eq:CI2main}
\end{equation}
Furthermore, \eqref{eq:CI2main} holds strictly for suitable sequences of $\{(P_{N}, \theta_{N}) \in \mathcal{P} \times \Theta_{I}(P_{N})^{c}\}_{N \in \mathbb{N} }$.
\end{theorem} 

Our takeaway from Theorem \ref{thm:CI2} is analogous to that of Theorem \ref{thm:CI1}: $CI_{\alpha}^{2}$ based on the more efficient bounds estimator (i.e., $CI_{\alpha}^{2,E}$) is more powerful than when it is based on the less efficient bounds estimator (i.e., $CI_{\alpha}^{2,I}$). Both CIs satisfy the coverage goal in \eqref{eq:covGoal} with equality, but the former is more powerful than the latter.

The results of Theorems \ref{thm:CI1} and \ref{thm:CI2}, while novel, may not seem surprising. After all, it is natural to expect that a more efficient implementation of inference leads to greater statistical power. However, this intuition need not hold for our two remaining CIs, as shown by the next two results. We begin with the case of $CI_{\alpha}^{3}$.

\begin{theorem}[Power comparison for $CI_{\alpha }^{3}$]\label{thm:CI3}
Let $\alpha \in (0,0.5)$ and Assumption \ref{ass:1} hold. Define $CI_{\alpha }^{3,E}$ and $CI_{\alpha }^{3,I}$ as in \eqref{eq:CI3} with $(\hat{\theta}_{l}^{E},\hat{\theta}_{u}^{E},\hat{\sigma}_{l}^{E},\hat{\sigma}_{u}^{E},\hat{\rho}^{E})$ and $(\hat{\theta}_{l}^{I},\hat{\theta}_{u}^{I},\hat{\sigma}_{l}^{I},\hat{\sigma}_{u}^{I},\hat{\rho}^{I})$, respectively, implemented with the same sequence of constants $\{b_N\}_{N\in \mathbb{N}}$ that satisfies $b_N \to 0$ and $b_N \sqrt{N} \to\infty$. Then, there exist sequences of $\{(P_{N},\theta _{N})\in \mathcal{P}\times \Theta _{I}(P_{N})^{c}\}_{N\in \mathbb{N}}$  such that  
\begin{equation}
\lim _{N  \to \infty} P_{N}(\theta _{N}\in CI_{\alpha }^{3,I})~>~
\lim _{N  \to \infty} P_{N}(\theta _{N}\in CI_{\alpha }^{3,E})  \label{eq:CI3main}
\end{equation}
and there also exist other sequences of $\{(P_{N},\theta _{N})\in \mathcal{P}\times \Theta _{I}(P_{N})^{c}\}_{N\in \mathbb{N}}$ such that  
\begin{equation}
\lim _{N  \to \infty} P_{N}(\theta _{N}\in CI_{\alpha }^{3,I})~<~\lim _{N  \to \infty} P_{N}(\theta _{N}\in CI_{\alpha }^{3,E}).  \label{eq:CI3main2}
\end{equation}
\end{theorem}

Theorem \ref{thm:CI3} shows that power rankings between efficient and inefficient implementations of $CI_{\alpha}^{3}$ may vary with the underlying parameters of the problem. We now explain this phenomenon. In the context of OBS, the main difference between $CI_{\alpha}^{2}$ and $CI_{\alpha}^{3}$ is that the latter includes the indicator term $1[(\hat{\theta}_{u}-\hat{\theta}_{l})>b_{N}]$ in \eqref{eq:CI3_problem}. This term adjusts the critical values depending on whether the estimated identified set is ``long'' or ``short'' relative to the tuning parameter sequence $\{b_N\}_{N\geq 1}$. The proof of Theorem \ref{thm:CI3} focuses on short identified sets, under which the indicator term is zero.\footnote{A similar argument can be constructed for long identified sets. The proof restricts attention to the short case for brevity; the corresponding arguments are available upon request.} In this case, $CI_{\alpha}^{3}$ uses an asymptotic critical value corresponding to a two-sided testing problem, equal to the $(1-\alpha/2)$-quantile of the standard normal distribution. The coverage of $CI_{\alpha}^{3}$ increases with the normalized size of the identified set, given by $\mu/\sigma$, and decreases with the normalized distance from the identified set, captured by $\Psi_l/\sigma$ or $\Psi_u/\sigma$. See part (e) of Lemma \ref{lem:C3_limit} for the explicit expression. As the variance of the bounds estimator (i.e., $\sigma^2$) decreases, two opposing forces emerge. On the one hand, a smaller $\sigma$ increases $\mu/\sigma$, making the identified set appear longer relative to sampling uncertainty and thereby increasing coverage. On the other hand, a smaller $\sigma$ also increases $\Psi_l/\sigma$ and $\Psi_u/\sigma$, making the parameter appear farther from the identified set and thereby decreasing coverage. If the first effect dominates, we obtain the counterintuitive finding that a more efficient estimator leads to higher coverage. If the second effect dominates instead, we get the more intuitive result that a more efficient estimator reduces coverage.

At this point, one may wonder why the previous phenomenon does not arise for $CI_{\alpha}^{1}$ or $CI_{\alpha}^{2}$. The key difference is that, unlike $CI_{\alpha}^{3}$, these procedures do not use the $(1-\alpha/2)$-quantile of the normal distribution when the identified set is short. Instead, their critical values depend on the normalized size of the identified set through $G(\mu/\sigma)$, as shown in Lemma \ref{lem:C2_limit}. This dependence eliminates the phenomenon described above. A formal justification is provided in Lemma \ref{lem:tildeH}.

The possibility that a more efficient estimator leads to lower power arises from the way the critical value is constructed in $CI_{\alpha}^{3}$ to ensure robustness to violations of the superefficiency condition. In this sense, it can be interpreted as a cost of the robustness properties of $CI_{\alpha}^{3}$. When superefficiency holds automatically under OBS, this robustness is unnecessary, and $CI_{\alpha}^{3}$ incurs this cost without providing any compensating robustness benefit.

Our last result compares the power of efficient and inefficient implementations of $CI_{\alpha}^{4}$.

\begin{theorem}[Power comparison for $CI_{\alpha }^{4}$]\label{thm:CI4}
Let $\alpha \in (0,0.5)$ and Assumption \ref{ass:1} hold. Define $CI_{\alpha }^{4,E}$ and $CI_{\alpha }^{4,I}$ as in \eqref{eq:CI4} with $(\hat{\theta}_{l}^{E},\hat{\theta}_{u}^{E},\hat{\sigma}_{l}^{E},\hat{\sigma}_{u}^{E},\hat{\rho}^{E})$ and $(\hat{\theta}_{l}^{I},\hat{\theta}_{u}^{I},\hat{\sigma}_{l}^{I},\hat{\sigma}_{u}^{I},\hat{\rho}^{I})$, respectively. Then, there exist sequences of $\{(P_{N},\theta _{N})\in \mathcal{P}\times \Theta _{I}(P_{N})^{c}\}_{N\in \mathbb{N}}$  such that  
\begin{equation}
\lim _{N  \to \infty} P_{N}(\theta _{N}\in CI_{\alpha }^{4,E})~<~\lim _{N  \to \infty} P_{N}(\theta _{N}\in CI_{\alpha }^{4,I})  \label{eq:orderbounds1}
\end{equation}
and there also exist other sequences of $\{(P_{N},\theta _{N})\in \mathcal{P}\times \Theta _{I}(P_{N})^{c}\}_{N\in \mathbb{N}}$ such that  
\begin{equation}
\lim _{N  \to \infty} P_{N}(\theta _{N}\in CI_{\alpha }^{4,E})~>~\lim _{N  \to \infty} P_{N}(\theta _{N}\in CI_{\alpha }^{4,I}).  \label{eq:orderbounds2}
\end{equation}
\end{theorem}

Theorem \ref{thm:CI4} parallels Theorem \ref{thm:CI3} for $CI_{\alpha}^{4}$: the relative power of efficient and inefficient implementations depends on the underlying parameters. The intuition follows the same lines as for Theorem \ref{thm:CI3}. As in the analysis of $CI_{\alpha}^{3}$, we interpret the counterintuitive power properties in Theorem \ref{thm:CI4} as the cost associated with the desirable properties of $CI_{\alpha}^{4}$ discussed in \citet{stoye:2020}.

\section{Conclusions}\label{sec:conclusions}

This paper studies the power properties of CIs for a partially-identified parameter of interest with an interval identified set. We assume that the researcher has bounds estimators to construct the CIs proposed by \cite{imbens/manski:2004}, \cite{stoye:2009} and \cite{stoye:2020}, denoted by $CI_{\alpha }^{1}$,  $CI_{\alpha }^{2}$, $CI_{\alpha }^{3}$, and $CI_{\alpha }^{4}$. We also assume these estimators are ``ordered'' in the sense that the estimator of the lower bound is less than or equal to the estimator of the upper bound. We refer to this as the ``ordered bounds setup'' or OBS.

Under our conditions, the literature has established that $CI_{\alpha}^{1}$, $CI_{\alpha}^{2}$, $CI_{\alpha}^{3}$, and $CI_{\alpha}^{4}$ are all uniformly asymptotically valid. However, the literature has not investigated the power of inference associated with these CIs. That is, it does not assess the ability of these CIs to rule out parameters outside the identified set, which, by definition, are invalid candidates for $\theta_{0}(P)$.

In this context, this paper makes two contributions. Our first contribution is to compare the coverage probabilities of the four CIs across all possible parameter sequences that do not belong to the identified set. A higher coverage rate for these parameters translates into lower power. We formally show that $CI_{\alpha}^{1}$ and $CI_{\alpha}^{2}$ are equally powerful, and both dominate $CI_{\alpha}^{3}$ and $CI_{\alpha}^{4}$, with $CI_{\alpha}^{3}$ dominating $CI_{\alpha}^{4}$.

For our second contribution, we consider a favorable situation in which the researcher has two pairs of estimators for these CIs, with one pair known to be more efficient than the other. In this context, it is reasonable to expect that inference based on the more efficient pair leads to more powerful inference. We formally demonstrate that this conclusion holds for $CI_{\alpha}^{1}$ and $CI_{\alpha}^{2}$, but not necessarily for $CI_{\alpha}^{3}$ or $CI_{\alpha}^{4}$.

In the context of OBS, our findings indicate that $CI_{\alpha}^{1}$ and $CI_{\alpha}^{2}$ are equally powerful, and both are preferable to $CI_{\alpha}^{3}$ and $CI_{\alpha}^{4}$. This conclusion is valuable from a practical standpoint for two reasons. First, $CI_{\alpha}^{1}$ is straightforward to implement, whereas some of the other methods are not. Second, implementing $CI_{\alpha}^{3}$ requires a carefully calibrated tuning parameter sequence, whereas the other methods do not. It is important to note, however, that these conclusions depend critically on the OBS condition, and in particular on the requirement that the bounds be ``ordered''; in settings where this condition fails, the results of this paper need not apply.

\renewcommand{\theequation}{\Alph{section}-\arabic{equation}}
\begin{appendix} 

\section{Appendix}

This appendix uses ``CMT'' to denote the ``continuous mapping theorem'', ``UHC'' to denote ``upper hemicontinuous'', ``LHC'' to denote ``lower hemicontinuous'', and ``s.t.'' to denote ``subject to''. Also, we define $\bar{\mathbb{R}} = [-\infty, \infty]$, $\bar{\mathbb{R}}_{+} = [0,\infty]$, $\mathbb{R}_{+} = [0,+\infty)$, $\mathbb{R}_{++} = (0,\infty)$, and $\mathbb{R}_{--} = (-\infty,0)$. Unless specified otherwise, all limits are taken as $N \to \infty$.

\subsection{Proof of theorems}\label{sec:proofs}

\begin{proof}[Proof of Theorem \ref{thm:CIcomparison}]
\underline{Part (a).} We prove the result by contradiction. That is, suppose \eqref{eq:CI4main1} fails. Then, ${\lim \sup}_{N \to \infty} (P_{N}(\theta _{N}\in CI_{\alpha}^{1})-P_{N}(\theta _{N}\in CI_{\alpha}^{2}))\neq 0$ or ${\lim \inf}_{N \to \infty} (P_{N}(\theta _{N}\in CI_{\alpha}^{1})-P_{N}(\theta _{N}\in CI_{\alpha}^{2}))\neq 0$. In either case, we can find a subsequence $\{k_{N}\}_{N\in \mathbb{N}}$ s.t. 
\begin{equation}
\lim_{N \to \infty} (P_{k_{N}}(\theta _{k_{N}}\in CI_{\alpha}^{1})-P_{k_{N}}(\theta _{k_{N}}\in CI_{\alpha}^{2})) ~ \neq ~0. \label{eq:CIcomparison_1}
\end{equation}
The proof is completed by showing that \eqref{eq:CIcomparison_1} cannot hold.

By possibly taking a further subsequence, 
\begin{align}
&\left(
\begin{array}{c}
\theta _{l}(P_{k_{N}}),\theta _{u}(P_{k_{N}}),\sigma _{l}(P_{k_{N}}),\sigma _{u}(P_{k_{N}}),\rho (P_{k_{N}}),\\
\sqrt{k_{N}}(\theta _{u}(P_{k_{N}})-\theta _{l}(P_{k_{N}})), \sqrt{k_{N}}(\theta _{l}(P_{k_{N}})-\theta _{k_{N}}),\sqrt{k_{N}}(\theta _{k_{N}}-\theta _{u}(P_{k_{N}}))
\end{array}
\right) \notag\\
&  \to (\theta _{l},\theta _{u},\sigma _{l},\sigma _{u},\rho ,\mu ,\Psi _{l},\Psi _{u})\in \bar{\mathbb{R}}\times \bar{\mathbb{R}}\times [\underline{\sigma },\overline{\sigma }]\times [\underline{ \sigma },\overline{\sigma }]\times [-1,1]\times \bar{\mathbb{R}} _{+}\times \bar{\mathbb{R}}\times \bar{\mathbb{R}}.
\label{eq:CIcomparison_2}
\end{align}
We then divide the argument into four exhaustive cases, depending on the possible values of $(\mu ,\Psi _{l},\Psi _{u})$. In this regard, note that $\theta _{N}\in \Theta _{I}(P_{N})^{c}$ implies that either (i) $\sqrt{k_{N}}(\theta _{l}(P_{k_{N}})-\theta _{k_{N}}) >0$ or (ii) $\sqrt{k_{N}}(\theta _{k_{N}}-\theta _{u}(P_{k_{N}}))>0$. By taking limits, we conclude that either (i) $\Psi _{l}\geq 0$ or (ii) $\Psi _{u}\geq 0$. The proof is completed by showing that none of the following exhaustive cases satisfy \eqref{eq:CIcomparison_1}.

\noindent{Case 1:}  $\mu =\infty $ and $\Psi _{l}\geq 0$. Then, consider the following derivation:
\begin{align*}
  \lim_{N \to \infty}  P_{k_{N}}(\theta _{k_{N}}\in CI_{\alpha}^{1})~\overset{(1)}{=}~\Phi (\Phi ^{-1}(1-\alpha )-\Psi _{l}/\sigma_{l})  ~\overset{(2)}{=}~\lim_{N \to \infty}  P_{k_{N}}(\theta _{k_{N}}\in CI_{\alpha}^{2}),
\end{align*}
where (1) holds by \eqref{eq:CIcomparison_2}, $F_1(\mu,\sigma_l,\sigma_u) = \Phi^{-1}(1-\alpha)$, and part (a) of Lemma \ref{lem:C1_limit}, and (2) by part (a) of Lemma \ref{lem:C2_limit}. This equation implies that \eqref{eq:CIcomparison_1} fails.

\noindent{Case 2:}  $\mu =\infty $ and $\Psi _{u}\geq 0$. This case is analogous to case 1 except that we replace parts (a) of Lemmas \ref{lem:C1_limit} and \ref{lem:C2_limit} with parts (b) of these results.

\noindent{Case 3:} $\mu \in \mathbb{R} _{+}$ and $\Psi _{l}\geq 0$. By $\mu \in \mathbb{R} _{+}$, it follows that $\theta _{u}(P_{k_{N}})-\theta _{l}(P_{k_{N}}) \to 0$. By this and Lemma \ref{lem:near1}, we get $\rho =1$ and $\sigma _{l}=\sigma _{u}$. We set $\sigma=\sigma _{l}=\sigma _{u}$. Then, consider the following derivation:
\begin{align*}
  \lim_{N \to \infty}  P_{k_{N}}(\theta _{k_{N}}\in CI_{\alpha}^{1})~&\overset{(1)}{=}~\Phi \left( (\Psi_{l} +\mu )/\sigma +G(\mu /\sigma )\right) -\Phi \left( \Psi_{l} /\sigma -G(\mu /\sigma )\right)  \\
  ~&\overset{(2)}{=}~\lim_{N \to \infty}  P_{k_{N}}(\theta _{k_{N}}\in CI_{\alpha}^{2}),
\end{align*}
where (1) holds by \eqref{eq:CIcomparison_2}, $F_1(\mu, \sigma,\sigma) = G(\mu/\sigma)$, and part (a) of Lemma \ref{lem:C1_limit}, and (2) by part (c) of Lemma \ref{lem:C2_limit}. This equation implies that \eqref{eq:CIcomparison_1} fails.

\noindent{Case 4:} $\mu \in \mathbb{R} _{+}$ and $\Psi _{u}\geq 0$. This case is analogous to case 3 except that we replace part (a) of Lemma \ref{lem:C1_limit} with part (b), and part (c) of Lemma \ref{lem:C2_limit} with part (d).

\underline{Part (b).} We prove the result by contradiction. That is, suppose that ${\lim  \inf }_{N \to \infty}(P_{N}(\theta _{N}\in CI_{\alpha}^{3})-P_{N}(\theta _{N}\in CI_{\alpha}^{j}))<0$ for some $j=1,2$. Then, we can find a subsequence $\{k_{N}\}_{N\in \mathbb{N}}$ s.t. 
\begin{equation}
\lim_{N \to \infty} (P_{k_{N}}(\theta _{k_{N}}\in CI_{\alpha }^{3})-P_{k_{N}}(\theta _{k_{N}}\in CI_{\alpha }^{j})) ~<~0~~\text{ for some }j=1,2.  \label{eq:CIcomparison_2B}
\end{equation}
The proof is completed by showing that \eqref{eq:CIcomparison_2B} cannot hold.

By possibly taking a further subsequence, 
\begin{align}
& \left( 
\begin{array}{c}
\theta _{l}(P_{k_{N}}),\theta _{u}(P_{k_{N}}),\sigma _{l}(P_{k_{N}}),\sigma _{u}(P_{k_{N}}),\rho (P_{k_{N}}), \\ 
\sqrt{k_{N}}(\theta _{u}(P_{k_{N}})-\theta _{l}(P_{k_{N}})),\sqrt{k_{N}} (\theta _{l}(P_{k_{N}})-\theta _{k_{N}}),\sqrt{k_{N}}(\theta _{k_{N}}-\theta _{u}(P_{k_{N}}))
\end{array}
\right) \notag\\
&  \to  (\theta _{l},\theta _{u},\sigma _{l},\sigma _{u},\rho ,\mu ,\Psi _{l},\Psi _{u})\in \bar{\mathbb{R}}\times \bar{\mathbb{R}}\times [ \underline{\sigma },\overline{\sigma }]\times [ \underline{ \sigma },\overline{\sigma }]\times [ -1,1]\times \bar{\mathbb{R}} _{+}\times \bar{\mathbb{R}}\times \bar{\mathbb{R}}.
\label{eq:CIcomparison_3}
\end{align}
We then divide the argument into four exhaustive cases, depending on the possible values of $(\mu ,\Psi _{l},\Psi _{u})$. As before, we note that $\theta _{N}\in \Theta _{I}(P_{N})^{c}$ implies that either (i) $\Psi _{l}\geq 0$ or (ii) $\Psi _{u}\geq 0$. The proof is completed by showing that none of the following exhaustive cases satisfy \eqref{eq:CIcomparison_2B}.

\noindent {Case 1:} $\mu =\infty $ and $\Psi _{l}\geq 0$. Then, consider the following derivation for $j=1,2$,
\begin{equation*}
\lim_{N \to  \infty }P_{k_{N}}(\theta _{k_{N}}\in CI_{\alpha }^{j}) ~\overset{(1)}{=} ~\Phi (\Phi ^{-1}(1-\alpha )-\Psi _{l}/\sigma _{l})~\overset{(2)}{\leq} \lim_{N \to  \infty }P_{k_{N}}(\theta _{k_{N}}\in CI_{\alpha }^{3}),
\end{equation*}
where (1) holds by \eqref{eq:CIcomparison_3}, $F_1(\mu,\sigma_l,\sigma_u) = \Phi^{-1}(1-\alpha)$, and part (a) of Lemma \ref{lem:C1_limit} and Lemma \ref{lem:C2_limit}, and (2) by part (a) of Lemma \ref{lem:C3_limit}. This equation implies that \eqref{eq:CIcomparison_2B} fails.

\noindent {Case 2:} $\mu =\infty $ and $\Psi _{u}\geq 0$. This case is analogous to case 1 except that we replace parts (a) of Lemmas \ref{lem:C1_limit} and \ref{lem:C2_limit} with parts (b) of these results, and part (a) of Lemma \ref{lem:C3_limit} with part~(f).

\noindent {Case 3:} $\mu \in \mathbb{R}_{+}$ and $\Psi _{l}\geq 0$. By $\mu \in \mathbb{R}_{+}$ and Lemma \ref{lem:near1}, we get $\rho =1$ and $\sigma _{l}=\sigma _{u}$. We set $\sigma =\sigma _{l}=\sigma _{u}$. 

As a preliminary derivation, note that 
\begin{equation}
\Phi ^{-1}(1-\alpha /2)~\geq ~G(\mu /\sigma ). \label{eq:CIcomparison_4}
\end{equation}
To see this, note that $\Phi (c+\mu /\sigma )-\Phi (-c)$ is strictly increasing in $c$, and 
\begin{align*}
\Phi (\Phi ^{-1}(1-\alpha /2)+\mu /\sigma )-\Phi (-\Phi ^{-1}(1-\alpha /2))&~\overset{(1)}{\geq }~\Phi (\Phi ^{-1}(1-\alpha /2))-\Phi (-\Phi^{-1}(1-\alpha /2)) \\
& ~=~1-\alpha  \\
& ~\overset{(2)}{=}~\Phi (G(\mu /\sigma )+\mu /\sigma )-\Phi (-G(\mu /\sigma)),
\end{align*}
where (1) holds by $\mu /\sigma \geq 0$, and (2) by definition of $G(y)$. 

For $j=1,2$, we then have the following derivation:
\begin{align}
\lim_{N\to  \infty }P_{k_{N}}(\theta _{k_{N}} \in CI_{\alpha }^{3}) &~\overset{(1)}{=}~\Phi \left( (\Psi _{l}+\mu )/\sigma +\Phi ^{-1}(1-\alpha /2)\right) -\Phi \left( \Psi _{l}/\sigma -\Phi ^{-1}(1-\alpha /2)\right) \notag \\
&~\overset{(2)}{\geq }~\Phi \left( (\Psi _{l}+\mu )/\sigma +G(\mu /\sigma )\right) -\Phi \left( \Psi _{l}/\sigma -G(\mu /\sigma )\right)   \notag \\
&~\overset{(3)}{=}~\lim_{N\to  \infty }P_{k_{N}}(\theta _{k_{N}} \in CI_{\alpha }^{j}), \label{eq:CIcomparison_5}
\end{align}
where (1) holds by part (e) of Lemma \ref{lem:C3_limit}, (2) by \eqref{eq:CIcomparison_4} and that $\Phi ( (\Psi _{l}+\mu )/\sigma +c) -\Phi ( \Psi _{l}/\sigma -c) $ is increasing in $c$, and (3) by part (a) of Lemma \ref{lem:C1_limit}, part (c) of Lemma \ref{lem:C2_limit}, and $F_{1}(\mu ,\sigma ,\sigma )=G(\mu /\sigma )$. Note that this equation implies that \eqref{eq:CIcomparison_2B} fails. 

\noindent {Case 4:} $\mu \in \mathbb{R}_{+}$ and $\Psi _{u}\geq 0$. This case is analogous to case 3 except that we replace part (a) of Lemma \ref{lem:C1_limit} with part (b), part (c) of Lemma \ref{lem:C2_limit} with part (d), and part (e) of Lemma \ref{lem:C3_limit} with part (j).

To conclude the proof of this part, it suffices to verify \eqref{eq:CI4main2} holds strictly for a suitably chosen sequence $\{(P_{N},\theta _{N})\in \mathcal{P}\times \Theta _{I}(P_{N})^{c}\}_{N\in \mathbb{N}}$. We can consider two cases: ${\lim \sup } _{N \to \infty }\sqrt{N}(\theta _{u}(P_{N})-\theta _{l}(P_{N}))<\infty$ or  ${\lim \sup } _{N \to \infty }\sqrt{N}(\theta _{u}(P_{N})-\theta _{l}(P_{N}))=\infty$. For brevity, we focus on the first case. 

To this end, consider Example \ref{ex:OBS_Ex} with $\underline{Y}=0$, $\overline{Y}=1$, $\{Y_{i}|Z_{i}=1\}\sim Be(1/2)$, and $Z_{i}\sim Be(1-\pi (P_{N}))$, with $\pi (P_{N}) = a_1 /\sqrt{N} \downarrow 0$ for any $a_1 \in (0,\infty)$, which leads to ${\lim \sup } _{N \to \infty }\sqrt{N}(\theta _{u}(P_{N})-\theta _{l}(P_{N}))  <\infty$. Consider coverage of $\theta _{N}=\theta _{l}(P_{N})-a_2/ \sqrt{N}\in\Theta _{I}(P_{N})^{c}$ for any $a_2>0$ using $CI_{\alpha }^{3}$ implemented with $b_{N}=(\ln N)/\sqrt{N}$. Then, \eqref{eq:CIcomparison_2} and \eqref{eq:CIcomparison_3} hold with $(\theta _{l},\theta _{u},\sigma ,\sigma ,\rho ,\mu ,\Psi _{l},\Psi _{u}) = (1/2, 1/2, 1/2, 1/2, 1, a_1, a_2, - a_1 -a_2)$. For $j=1,2$, we then obtain 
\begin{align}
  \lim_{N \to \infty}  P_{N}(\theta _{N}\in CI_{\alpha}^{j})~&\overset{(1)}{=}~\Phi \left( 2(a_2 +a_1 )+G(2a_1 )\right) -\Phi \left( 2 a_2 -G(2a_1  )\right) \notag \\
  ~&\overset{(2)}{<}~\Phi \left( 2( a_2 +a_1 )+\Phi ^{-1}(1-\alpha /2)\right) -\Phi \left( 2 a_2 -\Phi ^{-1}(1-\alpha /2)\right)\notag\\
  ~&\overset{(3)}{=}~ \lim_{N \to \infty} P_{N}(\theta _{N}\in CI_{\alpha}^{3}),\label{eq:CIcomparison_20}
\end{align}
where (1) holds by part (a) of Lemmas \ref{lem:C1_limit} and \ref{lem:C2_limit}, (2) by repeating the derivation in \eqref{eq:CIcomparison_4}, and (3) by part (e) of Lemma \ref{lem:C3_limit}. For concreteness, $a_1=1$, $a_2=1$, and $\alpha=0.05$ give $\lim_{N\to\infty}P_N(\theta_N\in CI_\alpha^{j})=0.36<0.48=\lim_{N\to\infty}P_N(\theta_N\in CI_\alpha^{3})$ for $j=1,2$. Note that \eqref{eq:CIcomparison_20} implies that \eqref{eq:CI4main2} holds strictly.

\underline{Part (c).} These results are shown in the supplementary appendix.  
\end{proof}

\begin{proof}[Proof of Theorem \ref{thm:CI1}]
We prove \eqref{eq:CI1main} by contradiction. To this end, assume \eqref{eq:CI1main} fails:
\begin{equation}
\underset{N  \to \infty }{\text{lim inf} }~~\big(P_{N}(\theta _{N}\in CI_{\alpha}^{1,I})-P_{N}(\theta _{N}\in CI_{\alpha}^{1,E})\big)~<~0.  \label{eq:CI1_eq1}
\end{equation}
By possibly taking a subsequence $\{k_{N}\}_{N\geq 1}$, the following sequence converges:
\begin{align}
&\left(
\begin{array}{c}
\theta _{l}(P_{k_{N}}),\theta _{u}(P_{k_{N}}),\sigma _{l}^{E}(P_{{ k_{N}}}),\sigma _{u}^{E}(P_{{k_{N}}}),\rho ^{E}(P_{{k_{N}}}),\sigma _{l}^{I}(P_{{k_{N}}}),\sigma _{u}^{I}(P_{{k_{N}}}),\rho ^{I}(P_{{k_{N}} }),\\
\sqrt{k_{N}}(\theta _{u}(P_{k_{N}})-\theta _{l}(P_{k_{N}})), \sqrt{k_{N}}(\theta _{l}(P_{k_{N}})-\theta _{k_{N}}),\sqrt{k_{N}}(\theta _{k_{N}}-\theta _{u}(P_{k_{N}}))
\end{array}
\right) \notag \\
&  \to (\theta _{l},\theta _{u},\sigma^{E} _{l},\sigma^{E} _{u},\rho^{E},\sigma^{I} _{l},\sigma^{I} _{u},\rho^{I} ,\mu ,\Psi _{l},\Psi _{u}), \notag\\
&\in \bar{\mathbb{R}}\times \bar{\mathbb{R}}\times [\underline{\sigma },\overline{\sigma }]\times [\underline{ \sigma },\overline{\sigma }]\times [-1,1]\times [\underline{\sigma },\overline{\sigma }]\times [\underline{ \sigma },\overline{\sigma }]\times [-1,1]\times \bar{\mathbb{R}} _{+}\times \bar{\mathbb{R}}\times \bar{\mathbb{R}}. \label{eq:seqCI1}
\end{align}
By Assumption \ref{ass:1}, $\sigma _{l}^{E}\leq \sigma _{l}^{I}$ and $\sigma _{u}^{E}\leq \sigma _{u}^{I}$. We then divide the argument into four exhaustive cases, depending on the possible values of $(\mu ,\Psi _{l},\Psi _{u})$. As in the proof of Theorem \ref{thm:CIcomparison}, we note that $\theta _{N}\in \Theta _{I}(P_{N})^{c}$ implies that either (i) $\Psi _{l}\geq 0$ or (ii) $\Psi _{u}\geq 0$. The proof is completed by showing that none of the following exhaustive cases satisfy \eqref{eq:CI1_eq1}.

\noindent{Case 1:} $\mu =\infty $ and $\Psi _{l}\geq 0$. Then, 
\begin{align*}
\lim_{N  \to \infty} P_{N}(\theta _{N} \in CI_{\alpha}^{1,E})&~\overset{(1)}{=}~\Phi (\Phi ^{-1}(1-\alpha )-\Psi _{l}/\sigma _{l}^{E}) \\
&~\overset{(2)}{\leq }~\Phi (\Phi ^{-1}(1-\alpha )-\Psi _{l}/\sigma _{l}^{I}) \\
&~\overset{(3)}{=}~\lim_{N  \to \infty} P_{N}(\theta _{N} \in CI_{\alpha}^{1,I}),
\end{align*}
as desired, where (1) and (3) hold by part (a) of Lemma \ref{lem:C1_limit}, and (2) by $\Psi _{l}\geq 0$ and $\sigma _{l}^{E}\leq \sigma _{l}^{I}$. 

\noindent{Case 2:} $\mu =\infty $ and $\Psi _{u}\geq 0$. This case is analogous to case 1 except that we replace part (a) of Lemma \ref{lem:C1_limit} with part (b).

\noindent{Case 3:} $\mu \in \mathbb{R}_{+}$ and $\Psi _{l}\geq 0$. By $\mu \in \mathbb{R}_{+}$ and Lemma \ref{lem:near1}, $\sigma  ^{E}_{l}=\sigma ^{E} _{u}$, $\sigma  ^{I}_{l}=\sigma ^{I} _{u}$, $\rho ^{E} =1$, and $\rho ^{I} =1$. We set $\sigma ^{I}=\sigma _{l}^{I}=\sigma _{u}^{I}$ and $\sigma ^{E}=\sigma _{l}^{E}=\sigma _{u}^{E}$. Then, 
\begin{align*}
\lim_{N  \to \infty} P_{N}(\theta _{N} \in CI_{\alpha}^{1,E})& ~\overset{(1)}{=}~\Phi \left( (\Psi _{l}+\mu )/\sigma ^{E}+G(\mu /\sigma ^{E})\right) -\Phi \left( \Psi _{l}/\sigma ^{E}-G(\mu /\sigma ^{E})\right)  \\
&~\overset{(2)}{=}~H(\sigma ^{E},\mu ,\Psi _{l}) \\
&~\overset{(3)}{\leq }~H(\sigma ^{I},\mu ,\Psi _{l}) \\
&~\overset{(4)}{=}~\Phi \left( (\Psi _{l}+\mu )/\sigma ^{I}+G(\mu /\sigma ^{I})\right) -\Phi \left( \Psi _{l}/\sigma ^{I}-G(\mu /\sigma ^{I})\right) 
\\
&~\overset{(5)}{=}~\lim_{N \to \infty} P_{N}(\theta _{N} \in CI_{\alpha}^{1,I}),
\end{align*}
as desired, where (1) and (5) hold by part (c) of Lemma \ref{lem:C1_limit}, (2) and (4) by $H:[\underline{ \sigma },\overline{\sigma }]\times \mathbb{R}_{+}\times \bar{\mathbb{R}} _{+}  \to \mathbb{R}$ defined in Lemma \ref{lem:tildeH}, and (3) by part (a) of Lemma \ref{lem:tildeH} and $\sigma _{l}^{E}\leq \sigma _{l}^{I}$.

\noindent{Case 4:} $\mu \in \mathbb{R}_{+}$ and $\Psi _{u}\geq 0$. This case is analogous to case 3 except that we replace part (c) of Lemma \ref{lem:C1_limit} with part (d).

To conclude the proof, it suffices to show that the inequality in \eqref{eq:CI1main} holds strictly for suitable sequences of $\{(P_{N},\theta _{N})\in \mathcal{P}\times \Theta _{I}(P_{N})^{c}\}_{N\in \mathbb{N}}$. To this end, consider $\{(P_{N},\theta _{N})\}_{N\in \mathbb{N}}$ that satisfies \eqref{eq:seqCI1} with $\sigma ^{E}<\sigma ^{I}$, and either (i) $ \Psi _{l}>0$ or (ii) $\Psi _{u}>0$. For brevity, focus on (i), but analogous results hold for (ii). 

We now have two cases. If $\mu =\infty $, then the desired strict inequality follows from the derivation in case 1. The difference with this derivation is that the weak inequality now becomes strict under $\Psi _{l}>0$ and $\sigma ^{E}<\sigma ^{I}$. In turn, if $\mu \in \mathbb{R}_{+}$, the desired strict inequality follows from the derivations in case 3. The difference with this derivation is that the weak inequality now becomes strict under $\Psi _{l}>0$, $\sigma ^{E}<\sigma ^{I}$, and part (b) of Lemma \ref{lem:tildeH}.

We now illustrate the case with $\mu =\infty $ in the context of Example \ref{ex:OBS_Ex}. Within this example, set $\underline{Y}=0$, $\overline{Y}=1$, $\{Y_{i}|Z_{i}=1\}\sim Be(1/2)$, and $Z_{i}\sim Be(1-\pi (P_{N}))$, with $\pi (P_{N}) =a_1 \in (0,1)$, which leads to ${\lim \sup } _{N \to \infty }\sqrt{N}(\theta _{u}(P_{N})-\theta _{l}(P_{N})) =\infty$. Our goal is to cover $\theta _{N}=\theta _{l}(P_{N})- a_2/ \sqrt{N}\in\Theta _{I}(P_{N})^{c}$ for some $ a_2 
>0$. In this context, consider two estimators for the bounds. The first one is \eqref{eq:estimator_bounds_example}, which serves as the efficient estimator. The second one is as in \eqref{eq:estimator_bounds_example} but using only the first $\lfloor  a_3 N\rfloor $ observations of the sample for any $ a_3 \in (0,1)$. This will be the inefficient estimator, as it uses a fraction of the available sample. Both bounds estimators can be represented as
\begin{equation}
( \hat{\theta}_{l}( \lambda ) ,\hat{\theta}_{u}( \lambda ) ) ~=~\Big( \frac{1}{\lfloor \lambda N\rfloor } \sum_{i=1}^{\lfloor \lambda N\rfloor }( Y_{i}Z_{i}+\underline{Y}( 1-Z_{i}) ) ,\frac{1}{\lfloor \lambda N\rfloor } \sum_{i=1}^{\lfloor \lambda N\rfloor }( Y_{i}Z_{i}+\overline{Y}(
1-Z_{i}) ) \Big) .  \label{eq:CI1_comp_1}
\end{equation}
For $\lambda =1$, we get $( \hat{\theta}_{l}^{E}, \hat{\theta}_{u}^{E}) =( \hat{\theta}_{l}( 1) ,\hat{ \theta}_{u}( 1) ) $, and for $\lambda = a_3\in (0,1) $, we get $( \hat{\theta}_{l}^{I},\hat{\theta} _{u}^{I}) =( \hat{\theta}_{l}( a_3) ,\hat{ \theta}_{u}( a_3))$. Finally, we assume the estimators for $( \sigma _{l},\sigma _{u},\rho )$ are all obtained by their corresponding sample analogs: $( \hat{\sigma}_{l}^{E},\hat{\sigma}_{u}^{E},\hat{\rho} ^{E}) $ uses the entire sample (i.e., $\lambda =1$), and $( \hat{ \sigma}_{l}^{I},\hat{\sigma}_{u}^{I},\hat{\rho}^{I}) $ relies only on the first $\lfloor  a_3 N\rfloor $ observations of the sample, with the standard-deviation estimators scaled by $\sqrt{N/\lfloor a_3N\rfloor}$.
By repeating arguments in Example \ref{ex:OBS_Ex}, we can verify that $ ( \hat{\theta}_{l}^{E},\hat{\theta}_{u}^{E},\hat{\sigma}_{l}^{E},\hat{ \sigma}_{u}^{E},\hat{\rho}^{E}) $ and $( \hat{\theta}_{l}^{I}, \hat{\theta}_{u}^{I},\hat{\sigma}_{l}^{I},\hat{\sigma}_{u}^{I},\hat{\rho} ^{I}) $ satisfy Assumption \ref{ass:1}. Moreover, we can show that \eqref{eq:seqCI1} holds with
\begin{align*}
    &(\theta _{l},\theta _{u},\sigma^{E} _{l},\sigma^{E} _{u},\rho^{E},\sigma^{I} _{l},\sigma^{I} _{u},\rho^{I} ,\mu ,\Psi _{l},\Psi _{u}) \\
    &=~ \Big(\frac{1-a_1}{2},\frac{1+a_1}{2},
\frac{\sqrt{1-a_1^{2}}}{2},\frac{\sqrt{1-a_1^{2}}}{2},\frac{1-a_1}{1+a_1},
\frac{\sqrt{1-a_1^{2}}}{2\sqrt{ a_3}},\frac{\sqrt{1-a_1^{2}}}{2\sqrt{ a_3}},\frac{1-a_1}{1+a_1},
+\infty,a_{2},-\infty\Big).
\end{align*}
At this point, we can repeat the derivation in case 1 to get the following:
\begin{align*}
\lim_{N  \to \infty} P_{N}(\theta _{N} \in CI_{\alpha}^{1,E})&~\overset{(1)}{=}~\Phi \Big(\Phi ^{-1}(1-\alpha )-2  a_2 /\sqrt{1-a_1^2}\Big) \\
&~\overset{(2)}{<}~\Phi \Big(\Phi ^{-1}(1-\alpha )-2\sqrt{ a_3} a_2 /\sqrt{1-a_1^2}\Big) \\
&~\overset{(3)}{=}~\lim_{N  \to \infty} P_{N}(\theta _{N} \in CI_{\alpha}^{1,I}),
\end{align*}
as desired, where (1) and (3) hold by part (a) of Lemma \ref{lem:C1_limit}, and (2) by $a_3 \in (0,1)$. For concreteness,  $a_1=1/2$, $a_2=1$, $a_3=0.3$, and $\alpha=0.05$ yield $\lim_{N\to\infty}P_N(\theta_N\in CI_{\alpha}^{1,I})=0.69>0.25=\lim_{N\to\infty}P_N(\theta_N\in CI_{\alpha}^{1,E})$.
\end{proof}

\begin{proof}[Proof of Theorem \ref{thm:CI2}] 
This result follows from part (a) of Theorem \ref{thm:CIcomparison} and Theorem \ref{thm:CI1}.
\end{proof}

\begin{proof}[Proof of Theorem \ref{thm:CI3}] To show this result, we construct specific sequences where \eqref{eq:CI3main} and \eqref{eq:CI3main2} can occur. We focus on sequences $\{(P_{N},\theta _{N})\in \mathcal{P}\times \Theta _{I}(P_{N})^{c}\}_{N\in \mathbb{N}}$  s.t.\  
\begin{align}
&\Bigg( 
\begin{array}{c}
\theta _{l}( P_{N}) ,\theta _{u}( P_{N}) ,\sigma _{l}^{E}( P_{N}) ,\sigma _{u}^{E}( P_{N}) ,\rho ^{E}( P_{N}) ,\sigma _{l}^{I}( P_{N}) ,\sigma _{u}^{I}( P_{N}) ,\rho ^{I}( P_{N})  \\ 
\sqrt{N}( \theta _{u}( P_{N}) -\theta _{l}( P_{N}) ) ,\sqrt{N}( \theta _{l}( P_{N}) -\theta _{N}) ,\sqrt{N}( \theta _{N}-\theta _{u}( P_{N}) ) 
\end{array}
\Bigg)  \notag \\
&\to ( \theta _{l},\theta _{u},\sigma _{l}^{E},\sigma _{u}^{E},\rho ^{E},\sigma _{l}^{I},\sigma _{u}^{I},\rho ^{I}, \mu ,\Psi _{l},\Psi _{u}) .
\label{eq:CI3_comp}
\end{align}
with $\Psi _{l}\geq 0$ and $\mu \in \mathbb{R}  _{+}$. By Lemma \ref{lem:near1}, $\rho^{E}=\rho^{I} =1$ and $\sigma _{l}^E=\sigma _{u}^E$, and $\sigma _{l}^{I}=\sigma _{u}^{I}$.

We can construct concrete sequences in the context of Example \ref{ex:OBS_Ex}. In particular, we use Example \ref{ex:OBS_Ex} with $\underline{Y}=0$, $\overline{Y}=1$, $\{Y_{i}|Z_{i}=1\}\sim Be(1/2)$, and $Z_{i}\sim Be(1-\pi (P_{N}))$, where $\pi (P_{N})~=~a_1/\sqrt{N}~\downarrow ~0$ for some $a_1 >0$. We consider coverage of $\theta _{N}=\theta _{l}(P_{N}) - a_2/\sqrt{N}\in\Theta _{I}(P_{N})^{c}$ for some $a_2>0$ and $CI_{\alpha }^{3}$ implemented with the subsample of the data $ \{(Y_{i},Z_{i})\}_{i=1}^{\lfloor \lambda N\rfloor }$ for $\lambda \in (0,1]$. In this context, we consider two estimators for the bounds. The efficient estimator uses the full sample, while the inefficient estimator uses only a fraction $\lambda = a_3 \in (0,1)$ of the sample. By repeating arguments in Theorem \ref{thm:CI1}, we deduce that \eqref{eq:CI3_comp} holds with
\begin{equation*}
     ( \theta _{l},\theta _{u},\sigma _{l}^{E},\sigma _{u}^{E},\rho ^{E},\sigma _{l}^{I},\sigma _{u}^{I},\rho ^{I}, \mu ,\Psi _{l}) ~=~ \Big(\frac{1}{2},\frac{1}{2}, \frac{1}{2},\frac{1}{2},1, \frac{1}{2\sqrt{a_3}},\frac{1}{2\sqrt{a_3}},1,a_1,a_2\Big).
\end{equation*}
Part (e) of Lemma \ref{lem:C3_limit} then yields
\begin{align*}
\lim_{N \to \infty} P_{N}(\theta _{N} \in CI_{\alpha }^{3,E})& ~=~\Phi \big( 2( a_2+a_1)+\Phi ^{-1}( 1-\alpha /2) \big)-
\Phi \big( 2a_2-\Phi ^{-1}( 1-\alpha /2) \big), \\
\lim_{N \to \infty} P_{N}(\theta _{N} \in CI_{\alpha }^{3,I})&~=~\Phi \big( 2\sqrt{a_3}(a_2+a_1 ) +\Phi ^{-1}( 1-\alpha /2) \big)-\Phi \big( 2\sqrt{a_3}a_2-\Phi ^{-1}( 1-\alpha /2) \big).
\end{align*}

We now verify the strict inequalities. To obtain \eqref{eq:CI3main}, set $a_1=a_2=1$, $a_3=0.3$, and $\alpha=0.05$, which give
\begin{align*}
\lim_{N\to \infty} P_{N}(\theta _{N} \in CI_{\alpha }^{3,E})~=~0.48~<~0.81~=~\lim_{N\to \infty} P_{N}(\theta _{N} \in CI_{\alpha }^{3,I}).
\end{align*}
To obtain \eqref{eq:CI3main2}, set $a_1=0.15$, $a_2=0.01$, $a_3=0.3$, and $\alpha=0.05$, which give
\begin{align*}
\lim_{N\to \infty} P_{N}(\theta _{N} \in CI_{\alpha }^{3,E})~=~0.963 ~>~ 0.958~=~\lim_{N\to \infty} P_{N}(\theta _{N} \in CI_{\alpha }^{3,I}).
\end{align*}
This completes the proof.
\end{proof}
 
\begin{proof}[Proof of Theorem \ref{thm:CI4}] This result is shown in the supplementary appendix.
\end{proof}

\subsection{Auxiliary results}

\begin{lemma}[$CI_{\alpha}^{1}$]\label{lem:C1_limit}
Assume $\alpha \in (0,0.5)$ and that $(\hat{\theta}_{l},\hat{\theta}_{u},\hat{\sigma}_{l},\hat{\sigma} _{u},\hat{\rho})$ satisfies OBS with parameter $({\theta }_{l}(P),{\theta } _{u}(P),{\sigma }_{l}(P),{\sigma }_{u}(P),{\rho }(P))$ and set $\mathcal{P}$. Let $F_{1}(\delta ,\sigma _{l},\sigma _{u}): \bar{\mathbb{R}}_{+}\times [\underline{\sigma },\overline{\sigma } ]\times [\underline{\sigma },\overline{\sigma }]  \to \bar{ \mathbb{R}}$ be the unique $c>0$ that solves 
\begin{equation}
\Phi \left( c+{\delta }/{\max \{\sigma _{l},\sigma _{u}\}}\right) -\Phi
(-c)~=~1-\alpha .  \label{eq:CI1_eq2}
\end{equation}
Consider any sequence $\{(P_{N},\theta _{N})\in \mathcal{P}\times \Theta
_{I}(P_{N})^{c}\}_{N\in \mathbb{N}}$ with
\begin{align}
&\left( 
\begin{array}{c}
\theta _{l}(P_{N}),\theta _{u}(P_{N}),\sigma _{l}(P_{N}),\sigma _{u}(P_{N}),\rho (P_{N}),\\
\sqrt{N}(\theta _{u}(P_{N})-\theta _{l}(P_{N})),\sqrt{N}(\theta _{l}(P_{N})-\theta _{N}),\sqrt{N}(\theta _{N}-\theta _{u}(P_{N}))
\end{array}
\right)  \notag \\
&  \to (\theta _{l},\theta _{u},\sigma _{l},\sigma _{u},\rho ,\mu ,\Psi _{l},\Psi _{u})\in \bar{\mathbb{R}}\times \bar{\mathbb{R}}\times [\underline{\sigma },\overline{\sigma }]\times [\underline{ \sigma },\overline{\sigma }]\times [-1,1]\times \bar{\mathbb{R}} _{+}\times \bar{\mathbb{R}}\times \bar{\mathbb{R}}. \label{eq:C1_limit_1}
\end{align}
Under these conditions, we have the following results.
\begin{enumerate}[(a)]
\item If $\Psi _{l} \geq 0$,
\begin{align*}
&\lim_{N  \to \infty} P_{N}(\theta _{N} \in CI_{\alpha}^{1})= \\
&P\left( 
\begin{array}{c}
\{ z_{1}-F_{1}(z_{2}\sigma _{u}\sqrt{1-\rho ^{2}}+z_{1}(\rho \sigma _{u}-\sigma _{l})+\mu ,\sigma _{l},\sigma _{u})\leq -\frac{\Psi _{l} }{\sigma _{l} }\} ~\cap  \\ 
\{ -\frac{\Psi _{l} +\mu }{\sigma _{u}}\leq z_{2}\sqrt{1-\rho ^{2}} +z_{1}\rho +F_{1}(z_{2}\sigma _{u}\sqrt{1-\rho ^{2}}+z_{1}(\rho \sigma _{u}-\sigma _{l})+\mu ,\sigma _{l},\sigma _{u})\} 
\end{array}
\right) ,
\end{align*}
where $(z_{1},z_{2})\sim \mathcal{N}(\mathbf{0}_{2\times 1},\mathbf{I}_{2\times 2})$.
\item If $\Psi _{u} \geq 0$,
\begin{align*}
&\lim_{N  \to \infty} P_{N}(\theta _{N} \in CI_{\alpha}^{1})= \\
& P\left( 
\begin{array}{c}
\{ \frac{\Psi _{u} }{\sigma _{u}}\leq z_{2}\sqrt{1-\rho ^{2}}+z_{1}\rho +F_{1}(z_{2}\sigma _{u}\sqrt{1-\rho ^{2}}+z_{1}(\rho \sigma _{u}-\sigma _{l})+\mu ,\sigma _{l},\sigma _{u})\}  \\ 
\cap ~\{ z_{1}-F_{1}(z_{2}\sigma _{u}\sqrt{1-\rho ^{2}}+z_{1}(\rho \sigma _{u}-\sigma _{l})+\mu ,\sigma _{l},\sigma _{u})\leq \frac{\Psi _{u} +\mu }{ \sigma _{l}}\} 
\end{array}
\right) ,
\end{align*}
where $(z_{1},z_{2})\sim \mathcal{N}(\mathbf{0}_{2\times 1},\mathbf{I}_{2\times 2})$.
\end{enumerate}
\end{lemma}
\begin{proof}
We begin by confirming that $F_{1}(\delta ,\sigma _{l},\sigma _{u})$ is unique and positive. To this end, note that the left-hand side of \eqref{eq:CI1_eq2} is strictly increasing and continuous in $c$. Moreover, at $c=0$, it is equal to $\Phi (\delta /\max \{\sigma _{l},\sigma _{u}\})-\Phi (0)\leq 0.5<1-\alpha $, whereas, as $c\to\infty$, $\lim _{c\to\infty}{\Phi ( c+{\delta }/{\max \{\sigma _{l},\sigma _{u}}\}) -\Phi(-c)}=1>1-\alpha $. This establishes, existence, uniqueness, and positivity.

Next, we show that $F_{1}:\bar{\mathbb{R}}_{+}\times [\underline{ \sigma },\overline{\sigma }]\times [\underline{\sigma },\overline{ \sigma }]  \to \bar{\mathbb{R}}$ is a continuous function of its arguments. To this end, consider any sequence $\{(\delta _{M},\sigma _{M,l},\sigma _{M,u})\}_{M\in \mathbb{N}}$ s.t.\ $\lim_{M\to \infty} (\delta _{M},\sigma _{M,l},\sigma _{M,u}) =(\delta ,\sigma _{l},\sigma _{u})\in \bar{ \mathbb{R}}_{+}\times [\underline{\sigma },\overline{\sigma }]\times [\underline{\sigma },\overline{\sigma }]$. In search of a contradiction, assume that $\{F_{1}(\delta _{M},\sigma _{M,l},\sigma _{M,u})\}_{M \in \mathbb{N}}$ does not converge to $F_{1}(\delta ,\sigma _{l},\sigma _{u})$. 
Then, there exist $\Delta>0$ and a subsequence $\{k_M\}_{M\in\mathbb N}$ such that either $F_1(\delta_{k_M},\sigma_{k_M,l},\sigma_{k_M,u})-F_1(\delta,\sigma_l,\sigma_u)>\Delta$ for all $M\in\mathbb N$ or $F_1(\delta_{k_M},\sigma_{k_M,l},\sigma_{k_M,u})-F_1(\delta,\sigma_l,\sigma_u)<-\Delta$ for all $M\in\mathbb N$. We assume the first case, but the argument is analogous to the other case. Since $\lim_{M\to \infty}  (\delta _{M},\sigma _{M,l},\sigma _{M,u}) = (\delta ,\sigma _{l},\sigma _{u})\in \bar{ \mathbb{R}}_{+}\times [\underline{\sigma } ,\bar{\sigma}]\times [\underline{\sigma } , \bar{\sigma}]$, we have that $\varphi _{k_{M}}=\delta _{k_{M}}/\max \{\sigma _{k_{M},l},\sigma _{k_{M},u}\}  \to \varphi =\delta /\max \{\sigma _{l},\sigma _{u}\}$. Here, there are two cases. Case 1: $\delta <\infty $, and so $ \varphi <\infty $, or case 2: $\delta =\varphi =\infty $. 

\noindent Case 1:  $\delta <\infty $. Then, $\exists M_{1}\in \mathbb{N}$ s.t.\ $|\varphi _{k_{M}}-\varphi |<\Delta /2$ for all $M>M_{1}$. Then, for all $M\geq M_{1}$, we reach the following contradiction.
\begin{align}
1-\alpha & ~\overset{(1)}{=}~\Phi (F_{1}(\delta _{k_{M}},\sigma _{k_{M},l},\sigma _{k_{M},u})+\varphi _{k_{M}})-\Phi (-F_{1}(\delta _{k_{M}},\sigma _{k_{M},l},\sigma _{k_{M},u}))  \notag \\
&~ \overset{(2)}{\geq }~\Phi (\Delta /2+F_{1}(\delta ,\sigma _{l},\sigma _{u})+\varphi )-\Phi (-\Delta -F_{1}(\delta ,\sigma _{l},\sigma _{u})) \notag \\
&~ \overset{(3)}{>}~\Phi (F_{1}(\delta ,\sigma _{l},\sigma _{u})+\varphi )-\Phi (-F_{1}(\delta ,\sigma _{l},\sigma _{u}))  \notag \\
&~ \overset{(4)}{=}~1-\alpha ,
\end{align}
where (1) holds by \eqref{eq:CI1_eq2} with $c=F_{1}(\delta _{k_{M}},\sigma _{k_{M},l},\sigma _{k_{M},u})$, (2) by $F_{1}(\delta _{k_{M}},\sigma _{k_{M},l},\sigma _{k_{M},u})-F_{1}(\delta ,\sigma _{l},\sigma _{u})>\Delta $ and $|\varphi _{k_{M}}-\varphi |<\Delta /2$, and (3) by $\Delta >0$, and (4) by \eqref{eq:CI1_eq2} with $c=F_{1}(\delta ,\sigma _{l},\sigma _{u})$. 

\noindent Case 2: $\delta =\varphi =\infty $.  In this case, $F_{1}(\delta ,\sigma _{l},\sigma _{u})=\Phi ^{-1}(1-\alpha )$ and $\varphi _{k_{M}}  \to \infty $. By this and $\Delta >0$, $\exists M_{1}\in \mathbb{N}$ s.t.\ for all $M>M_{1}$, 
\begin{equation}
\Phi (\Delta +\Phi ^{-1}(1-\alpha )+\varphi _{k_{M}})-\Phi (-\Delta -\Phi ^{-1}(1-\alpha ))~>~1-\alpha .
\label{eq:CI1_eq2c}
\end{equation}
Then, for all $M\geq M_{1}$, we have the following contradiction. 
\begin{align*}
1-\alpha & ~\overset{(1)}{=}~\Phi (F_{1}(\delta _{k_{M}},\sigma _{k_{M},l},\sigma _{k_{M},u})+\varphi _{k_{M}})-\Phi (-F_{1}(\delta _{k_{M}},\sigma _{k_{M},l},\sigma _{k_{M},u}))  \notag \\
& ~\overset{(2)}{\geq }~\Phi (\Delta +F_{1}(\delta ,\sigma _{l},\sigma _{u})+\varphi _{k_{M}})-\Phi (-\Delta -F_{1}(\delta ,\sigma _{l},\sigma _{u}))  \notag \\
& ~\overset{(3)}{=}~\Phi (\Delta +\Phi ^{-1}(1-\alpha )+\varphi _{k_{M}})-\Phi (-\Delta -\Phi ^{-1}(1-\alpha ))\\
& ~\overset{(4)}{>}~1-\alpha ,
\end{align*}
as desired, where (1) holds by \eqref{eq:CI1_eq2} with $c=F_{1}(\delta _{k_{M}},\sigma _{k_{M},l},\sigma _{k_{M},u})$, (2) by $F_{1}(\delta _{k_{M}},\sigma _{k_{M},l},\sigma _{k_{M},u})-F_{1}(\delta ,\sigma _{l},\sigma _{u})>\Delta $, (3) by $F_{1}(\delta ,\sigma _{l},\sigma _{u})=\Phi ^{-1}(1-\alpha )$, and (4) by  \eqref{eq:CI1_eq2c}.

We now divide the rest of the argument into its parts.

\noindent\underline{Part (a).} Since $(\hat{\theta}_{l},\hat{\theta}_{u},\hat{\sigma} _{l},\hat{\sigma}_{u},\hat{\rho})$ satisfies OBS with parameter $({\theta } _{l}(P),{\theta }_{u}(P),{\sigma }_{l}(P),{\sigma }_{u}(P),{\rho }(P))$ and \eqref{eq:C1_limit_1},
\begin{equation}
(\sqrt{N}(\hat{\theta}_{l}-\theta _{l}(P_{N})),\sqrt{N}(\hat{\theta} _{u}-\theta _{u}(P_{N})),\hat{\sigma}_{l},\hat{\sigma}_{u})~\overset{d}{ \to }~(z_{1}\sigma _{l},(z_{2}\sqrt{1-\rho ^{2}}+z_{1}\rho )\sigma _{u},\sigma _{l},\sigma _{u}),\label{eq:C1_limit_2}
\end{equation}
where $(z_{1},z_{2})\sim \mathcal{N}(\mathbf{0}_{2\times 1},\mathbf{I}_{2\times 2})$.
Then,
\begin{align*}
&P_{N}(\theta _{N} \in CI_{\alpha}^{1}) \\
& \overset{(1)}{=}~P_{N}\left( 
\begin{array}{c}
\sqrt{N}(\hat{\theta}_{l}-\theta _{l}(P_{N}))-\hat{\sigma}_{l}F_{1}(\sqrt{N}( \hat{\theta}_{u}-\hat{\theta}_{l}),\hat{\sigma}_{l},\hat{\sigma}_{u})\leq \sqrt{N}(\theta _{N}-\theta _{l}(P_{N}))\leq \\ 
\sqrt{N}(\hat{\theta}_{u}-\theta _{u}(P_{N}))+\sqrt{N}(\theta _{u}(P_{N})-\theta _{l}(P_{N}))+\hat{\sigma}_{u}F_{1}(\sqrt{N}(\hat{\theta} _{u}-\hat{\theta}_{l}),\hat{\sigma}_{l},\hat{\sigma}_{u})
\end{array}
\right) \\
& \overset{(2)}{\to}~P\left( 
\begin{array}{c}
z_{1}\sigma _{l}-\sigma _{l}F_{1}(z_{2}\sigma _{u}\sqrt{1-\rho ^{2}} +z_{1}(\rho \sigma _{u}-\sigma _{l})+\mu ,\sigma _{l},\sigma _{u})\leq -\Psi _{l} \leq \\ 
(z_{2}\sqrt{1-\rho ^{2}}+z_{1}\rho )\sigma _{u}+\mu +\sigma _{u}F_{1}(z_{2}\sigma _{u}\sqrt{1-\rho ^{2}}+z_{1}(\rho \sigma _{u}-\sigma _{l})+\mu ,\sigma _{l},\sigma _{u})
\end{array}
\right) \\
& \overset{(3)}{=}~P\left( 
\begin{array}{c}
\left\{ z_{1}-F_{1}(z_{2}\sigma _{u}\sqrt{1-\rho ^{2}}+z_{1}(\rho \sigma _{u}-\sigma _{l})+\mu ,\sigma _{l},\sigma _{u})\leq -\frac{\Psi _{l} }{\sigma _{l} }\right\} \cap \\ 
\left\{ -\frac{\Psi _{l} +\mu }{\sigma _{u}}\leq z_{2}\sqrt{1-\rho ^{2}} +z_{1}\rho +F_{1}(z_{2}\sigma _{u}\sqrt{1-\rho ^{2}}+z_{1}(\rho \sigma _{u}-\sigma _{l})+\mu ,\sigma _{l},\sigma _{u})\right\}
\end{array}
\right) ,
\end{align*}
as desired, where (1) holds by \eqref{eq:CI1} and \eqref{eq:CI1_problem}, which give ${c}^1=F_{1}( \sqrt{N}(\hat{\theta}_{u}-\hat{\theta}_{l}),\hat{\sigma}_{l},\hat{\sigma} _{u})$, (2) by \eqref{eq:C1_limit_1}, \eqref{eq:C1_limit_2}, the continuity of $F_{1}(\delta ,\sigma _{l},\sigma _{u})$, and the CMT, and (3) by $\sigma _{l},\sigma _{u}\geq \underline{\sigma }>0$.

\noindent\underline{Part (b).} This argument is analogous to part (a), and it is thus omitted.
\end{proof}

\begin{lemma}[$CI_{\alpha}^{2}$]\label{lem:C2_limit}
Assume $\alpha \in (0,0.5)$ and that $(\hat{\theta}_{l},\hat{\theta}_{u},\hat{\sigma}_{l},\hat{\sigma} _{u},\hat{\rho})$ satisfies OBS with parameter $({\theta }_{l}(P),{\theta } _{u}(P),{\sigma }_{l}(P),{\sigma }_{u}(P),{\rho }(P))$ and set $\mathcal{P}$.
Consider any sequence $\{(P_{N},\theta _{N})\in \mathcal{P}\times \Theta _{I}(P_{N})^{c}\}_{N\in \mathbb{N}}$ with
\begin{align}
&\left( 
\begin{array}{c}
\theta _{l}(P_{N}),\theta _{u}(P_{N}),\sigma _{l}(P_{N}),\sigma _{u}(P_{N}),\rho (P_{N}),\\
\sqrt{N}(\theta _{u}(P_{N})-\theta _{l}(P_{N})), \sqrt{N}(\theta _{l}(P_{N})-\theta _{N}),\sqrt{N}(\theta _{N}-\theta _{u}(P_{N}))
\end{array}
\right)  \notag \\
&  \to (\theta _{l},\theta _{u},\sigma _{l},\sigma _{u},\rho ,\mu ,\Psi _{l},\Psi _{u})\in \bar{\mathbb{R}}\times \bar{\mathbb{R}}\times [\underline{\sigma },\overline{\sigma }]\times [\underline{ \sigma },\overline{\sigma }]\times [-1,1]\times \bar{\mathbb{R}} _{+}\times \bar{\mathbb{R}}\times \bar{\mathbb{R}}. \label{eq:C2_limit_1}
\end{align}
Under these conditions, we have the following results.

\begin{enumerate}[(a)]
\item If $\mu =\infty$ and $\Psi _{l} \geq 0$,
\begin{equation*}
\lim_{N  \to \infty} P_{N}(\theta _{N}\in CI_{\alpha}^{2})~=~\Phi (\Phi ^{-1}(1-\alpha )-\Psi _{l} /\sigma _{l}).
\end{equation*}

\item If $\mu =\infty$ and $\Psi _{u} \geq 0$,
\begin{equation*}
\lim_{N  \to \infty} P_{N}(\theta _{N}\in CI_{\alpha}^{2})~=~\Phi (\Phi ^{-1}(1-\alpha )-\Psi _{u} /\sigma _{u}).
\end{equation*}

\item If $\mu \in \mathbb{R}_+$ and $\Psi _{l} \geq 0$, then $\sigma _l = \sigma _u = \sigma$, $\rho = 1$, and 
\begin{equation*}
\lim_{N  \to \infty} P_{N}(\theta _{N}\in CI_{\alpha}^{2})~=~\Phi \left( (\Psi _{l} +\mu )/\sigma +G(\mu /\sigma )\right) -\Phi \left( \Psi_{l} /\sigma -G(\mu /\sigma )\right) ,
\end{equation*}
where ${G}(y):\mathbb{R}_{+}  \to \mathbb{R}_{++}$ is as defined in Lemma \ref{lem:tildeH}.

\item If $\mu \in \mathbb{R}_+$ and $\Psi _{u} \geq 0$, then $\sigma _l = \sigma _u = \sigma$, $\rho = 1$,  and 
\begin{equation*}
\lim_{N  \to \infty} P_{N}(\theta _{N}\in CI_{\alpha}^{2})~=~\Phi \left( (\Psi _{u} +\mu )/\sigma +G(\mu /\sigma )\right) -\Phi \left( \Psi _{u} /\sigma -G(\mu /\sigma )\right) ,
\end{equation*}
where ${G}(y):\mathbb{R}_{+}  \to \mathbb{R}_{++}$ is as defined in Lemma \ref{lem:tildeH}.
\end{enumerate}
\end{lemma}
\begin{proof}
Let $C_{2}:\bar{\mathbb{R}}_{+}\times [\underline{\sigma },\overline{ \sigma }]\times [\underline{\sigma },\overline{\sigma }]\times
[-1,1]  \rightrightarrows {\bar{\mathbb{R}}^{2}}$ be the following correspondence:
\begin{equation}
C_{2}(\delta ,\sigma _{l},\sigma _{u},\rho )~=~\left\{ 
\begin{array}{l}
( c_{l},c_{u}) \in {\bar{\mathbb{R}}^{2}}~~~\text{ s.t.} \\ 
P(\{-c_{l}\leq z_{1}\}\cap \{\rho z_{1}\leq c_{u}+{\delta }/{\sigma _{u}}+\sqrt{1-\rho ^{2}}z_{2}\})~\geq ~1-\alpha ~\text{ and} \\ 
P(\{-c_{l}-{\delta }/{\sigma _{l}}+\sqrt{1-\rho ^{2}}z_{2}\leq \rho z_{1}\}\cap\{z_{1}\leq c_{u}\})~\geq ~1-\alpha ,
\end{array}
\right\} , 
\label{eq:C2_defn}
\end{equation}
where $(z_{1},z_{2})\sim \mathcal{N}(\mathbf{0}_{2\times 1},\mathbf{I}_{2\times 2})$, let $S_2:\bar{\mathbb{R}}_{+}\times [\underline{\sigma },\overline{ \sigma }]\times [\underline{\sigma },\overline{\sigma }]\times [-1,1]  \rightrightarrows {\bar{\mathbb{R}}^{2}}$ be the following minimizer correspondence: 
\begin{equation}
S_{2}(\delta ,\sigma _{l},\sigma _{u},\rho )~=~\underset{( c_{l},c_{u}) \in C_{2}(\delta ,\sigma _{l},\sigma _{u},\rho )}{\text {arg min} }(\sigma _{l}c_{l}+\sigma _{u}c_{u}), 
\label{eq:S2_defn}
\end{equation}
and let $F_{2}=(F_{2,l},F_{2,u}):\bar{\mathbb{R}}_{+}\times [ \underline{\sigma },\overline{\sigma }]\times [\underline{\sigma }, \overline{\sigma }]\times [-1,1]  \to {\bar{\mathbb{R}}^{2}}$ be the choice in $S_{2}$ used to implement $CI_{\alpha}^2$. By \eqref{eq:CI2} and \eqref{eq:CI2_problem}, we have that $c^2=F_{2}(\sqrt{N}(\hat{\theta}_{u}-\hat{\theta} _{l}),\hat{\sigma}_{l},\hat{\sigma}_{u},\hat{\rho})$.

We now study the continuity properties of $F_{2}$. For our purposes, it suffices to determine the limit of $\{F_{2}(\delta _{M},\sigma _{M,l},\sigma _{M,u},\rho _{M})\}_{M\in \mathbb{N} }$ for two types of sequences $\{(\delta _{M},\sigma _{M,l},\sigma _{M,u},\rho _{M})\}_{M\in \mathbb{N} }$: (i) those converging to $(\infty ,\sigma _{l},\sigma _{u},\rho )$ with $ (\sigma _{l},\sigma _{u},\rho )\in [\underline{\sigma },\overline{ \sigma }]\times [\underline{\sigma },\overline{\sigma }]\times [-1,1]$ and (ii) those converging to $(\delta ,\sigma ,\sigma ,1)$ with $(\delta ,\sigma )\in \bar{\mathbb{R}}_{+}\times [\underline{ \sigma },\overline{\sigma }]$.

First, consider $\{F_{2}(\delta _{M},\sigma _{M,l},\sigma _{M,u},\rho _{M})\}_{M\in \mathbb{N} }$ for $\lim_{M\to \infty}  (\delta _{M},\sigma _{M,l},\sigma _{M,u},\rho _{M}) = (\infty ,\sigma _{l},\sigma _{u},\rho )$ and $(\sigma _{l},\sigma _{u},\rho )\in [\underline{\sigma },\overline{\sigma } ]\times [\underline{\sigma },\overline{\sigma }]\times [-1,1]$. It is not hard to verify that 
\begin{align*}
C_{2}(\infty ,\sigma _{l},\sigma _{u},\rho )& ~=~\left\{ \left( c_{l},c_{u}\right) \in {\bar{\mathbb{R}}^{2}}~~\text{s.t.}~~~\Phi (c_{l})\geq 1-\alpha ~~\text{and}~~\Phi (c_{u})\geq 1-\alpha \right\}, \\
S_{2}(\infty ,\sigma _{l},\sigma _{u},\rho )& ~=~\{(\Phi ^{-1}(1-\alpha ),\Phi ^{-1}(1-\alpha ))\},\\
F_{2}(\infty ,\sigma _{l},\sigma _{u},\rho )& ~=~(\Phi ^{-1}(1-\alpha ),\Phi ^{-1}(1-\alpha )).
\end{align*}
Furthermore, part (a) of Lemma \ref{lem:C2_is_continuous} shows that $C_{2}$ is continuous at $(\delta ,\sigma _{l},\sigma _{u},\rho )=(\infty ,\sigma _{l},\sigma _{u},\rho )$. Then, Lemma \ref{lem:Berge} shows that $F_{2}$ is continuous at $(\delta ,\sigma _{l},\sigma _{u},\rho )=(\infty ,\sigma _{l},\sigma _{u},\rho )$, i.e., 
\begin{equation}
\underset{(\delta _{M},\sigma _{M,l},\sigma _{M,u},\rho _{M})  \to (\infty ,\sigma _{l},\sigma _{u},\rho )}{\text{lim} } F_{2}(\delta _{M},\sigma _{M,l},\sigma _{M,u},\rho _{M})~=~(\Phi ^{-1}(1-\alpha ),\Phi ^{-1}(1-\alpha )).
\label{eq:C2_limit_2}
\end{equation}

Second, consider $\{F_{2}(\delta _{M},\sigma _{M,l},\sigma _{M,u},\rho _{M})\}_{M\in \mathbb{N} }$ for $\lim_{M\to \infty}  (\delta _{M},\sigma _{M,l},\sigma _{M,u},\rho _{M}) = (\delta ,\sigma ,\sigma ,1)$ and $(\delta ,\sigma )\in \bar{\mathbb{R}}_{+}\times [\underline{\sigma },\overline{\sigma }]$. It is not hard to verify that 
\begin{equation*}
C_{2}(\delta ,\sigma ,\sigma ,1)~=~\left\{ 
\begin{array}{l}
\left( c_{l},c_{u}\right) \in {\bar{\mathbb{R}}^{2}}~~~\text{ s.t.} \\ 
\Phi (c_{u}+{\delta }/{\sigma })-\Phi (-c_{l})~\geq ~1-\alpha ~~\text{and}~~\Phi (c_{u})-\Phi (-c_{l}-{\delta }/{\sigma })~\geq ~1-\alpha  
\end{array}
\right\} .
\end{equation*}
In turn, Lemma \ref{lem:criticalMin} shows that 
\begin{align*}
S_{2}(\delta ,\sigma ,\sigma ,1)~=\{(G(\delta /\sigma ),G(\delta /\sigma ))\}~\text{ and }~F_{2}(\delta ,\sigma ,\sigma ,1)~=~(G(\delta /\sigma ),G(\delta /\sigma )).
\end{align*}
Furthermore, part (b) of Lemma \ref{lem:C2_is_continuous} shows that $C_{2}$ is continuous at $(\delta ,\sigma _{l},\sigma _{u},\rho )=(\delta ,\sigma ,\sigma ,1)$. Under these conditions, Lemma \ref{lem:Berge} shows that $F_{2}$ is continuous at $(\delta ,\sigma _{l},\sigma _{u},\rho )=(\delta ,\sigma ,\sigma ,1)$, i.e., 
\begin{equation}
\underset{(\delta _{M},\sigma _{M,l},\sigma _{M,u},\rho _{M})  \to (\delta ,\sigma ,\sigma ,1)}{\text{lim} }F_{2}(\delta _{M},\sigma _{M,l},\sigma _{M,u},\rho _{M})~=~(G(\delta /\sigma ),G(\delta /\sigma )).
\label{eq:C2_limit_3}
\end{equation}

We now divide the rest of the argument into its parts.

\noindent\underline{Part (a).} Since $(\hat{\theta}_{l},\hat{\theta}_{u},\hat{\sigma} _{l},\hat{\sigma}_{u},\hat{\rho})$ satisfies OBS with parameter $({\theta } _{l}(P),{\theta }_{u}(P),{\sigma }_{l}(P),{\sigma }_{u}(P),{\rho }(P))$ and \eqref{eq:C2_limit_1},
\begin{equation}
(\sqrt{N}(\hat{\theta}_{l}-\theta _{l}(P_{N})),\sqrt{N}(\hat{\theta} _{u}-\theta _{u}(P_{N})),\hat{\sigma}_{l},\hat{\sigma}_{u}, \hat{\rho})~\overset{d}{   \to }~(z_{1}\sigma _{l},(z_{2}\sqrt{1-\rho ^{2}}+z_{1}\rho )\sigma _{u},\sigma _{l},\sigma _{u},\rho),
\label{eq:C2_limit_4}
\end{equation}
where $(z_{1},z_{2})\sim \mathcal{N}(\mathbf{0}_{2\times 1},\mathbf{I}_{2\times 2})$.
Then,
\begin{align*}
& P_{N}(\theta _{N} \in CI_{\alpha}^{2}) \\
& \overset{(1)}{=}~P_{N}\left( 
\begin{array}{c}
\sqrt{N}(\hat{\theta}_{l}-\theta _{l}(P_{N}))-\hat{\sigma}_{l} F_{2,l}(\sqrt{N }(\hat{\theta}_{u}-\hat{\theta}_{l}),\hat{\sigma}_{l},\hat{\sigma}_{u},\hat{\rho}) \leq \sqrt{N}(\theta _{N}-\theta _{l}(P_{N}))\leq  \\ 
\sqrt{N}(\hat{\theta}_{u}-\theta _{u}(P_{N}))+\sqrt{N}(\theta_{u}(P_{N})-\theta _{l}(P_{N}))+\hat{\sigma}_{u}F_{2,u}(\sqrt{N}(\hat{\theta}_{u}-\hat{\theta}_{l}),\hat{\sigma}_{l},\hat{\sigma}_{u},\hat{\rho})
\end{array}
\right)  \\
& \overset{(2)}{\to}~P(z_{1}\sigma _{l}-\sigma _{l}\Phi ^{-1}(1-\alpha )\leq -\Psi_{l} ) \\
& \overset{(3)}{=}~\Phi (\Phi ^{-1}(1-\alpha )-\Psi _{l}/\sigma _{l}),
\end{align*}
as desired, where (1) holds by \eqref{eq:CI2} and \eqref{eq:CI2_problem}, which give ${c}^2=F_{2}(\sqrt{N}(\hat{\theta}_{u}-\hat{\theta}_{l}),\hat{ \sigma}_{l},\hat{\sigma}_{u},\hat{\rho})$, (2) by \eqref{eq:C2_limit_1}, \eqref{eq:C2_limit_2}, \eqref{eq:C2_limit_4}, $\mu = \infty$, and the CMT, and (3) by $\sigma _{l},\sigma _{u}\geq \underline{\sigma }>0$ and $z_{1}\sim \mathcal{N}(0,1)$.

\noindent\underline{Part (b).} This argument is analogous to part (a) and it is therefore
omitted.

\noindent\underline{Part (c).} Since $(\hat{\theta}_{l},\hat{\theta}_{u},\hat{\sigma} _{l},\hat{\sigma}_{u},\hat{\rho})$ satisfies OBS with parameter $({\theta } _{l}(P),{\theta }_{u}(P),{\sigma }_{l}(P),{\sigma }_{u}(P),{\rho }(P))$ and \eqref{eq:C2_limit_1}, \eqref{eq:C2_limit_4} holds. Moreover, by Lemma \ref{lem:near1}, $\sigma _{l}=\sigma _{u}$ and $\rho =1$. We set $\sigma=\sigma _{l}=\sigma _{u}$. Then,
\begin{align*}
&P_{N}(\theta _{N} \in CI_{\alpha}^{2}) \\
& \overset{(1)}{=}~P_{N}\left( 
\begin{array}{c}
\sqrt{N}(\hat{\theta}_{l}-\theta _{l}(P_{N}))-\hat{\sigma}_{l}F_{2,l}(\sqrt{N }(\hat{\theta}_{u}-\hat{\theta}_{l}),\hat{\sigma}_{l},\hat{\sigma}_{u},\hat{ \rho}) \leq \sqrt{N}(\theta _{N}-\theta _{l}(P_{N}))\leq  \\ 
\sqrt{N}(\hat{\theta}_{u}-\theta _{u}(P_{N}))+\sqrt{N}(\theta _{u}(P_{N})-\theta _{l}(P_{N}))+\hat{\sigma}_{u}F_{2,u}(\sqrt{N}(\hat{\theta} _{u}-\hat{\theta}_{l}),\hat{\sigma}_{l},\hat{\sigma}_{u},\hat{\rho})
\end{array}
\right)  \\
& \overset{(2)}{\to}~P\left( \sigma z_{1}-\sigma G(\mu /\sigma )\leq -\Psi_{l} \leq z_{1}\sigma +\mu +\sigma G(\mu /\sigma )\right)  \\
& \overset{(3)}{=}~\Phi \left( (\Psi_{l} +\mu )/\sigma +G(\mu /\sigma )\right) -\Phi \left( \Psi_{l} /\sigma -G(\mu /\sigma )\right) ,
\end{align*}
as desired, where (1) holds by \eqref{eq:CI2} and \eqref{eq:CI2_problem}, which give ${c}^2=F_{2}(\sqrt{N}(\hat{\theta}_{u}-\hat{\theta}_{l}),\hat{ \sigma}_{l},\hat{\sigma}_{u},\hat{\rho})$, (2) by $\sigma=\sigma _{l}=\sigma _{u}$, $\rho =1$, \eqref{eq:C2_limit_1}, \eqref{eq:C2_limit_3}, \eqref{eq:C2_limit_4}, and the CMT, and (3) by $\sigma \geq \underline{\sigma }>0$ and $z_{1}\sim \mathcal{N}(0,1)$.

\noindent\underline{Part (d).} This argument is analogous to part (c), and it is thus omitted.
\end{proof}

\begin{lemma}[$CI_{\alpha }^{3}$]\label{lem:C3_limit} 
Assume $\alpha \in (0,0.5)$ and that $(\hat{\theta}_{l}, \hat{\theta}_{u},\hat{\sigma}_{l},\hat{\sigma}_{u},\hat{\rho})$ satisfies OBS with parameter $({\theta }_{l}(P),{\theta }_{u}(P),{\sigma }_{l}(P),{ \sigma }_{u}(P),{\rho }(P))$ and set $\mathcal{P}$. Consider any sequence $ \{(P_{N},\theta _{N})\in \mathcal{P}\times \Theta _{I}(P_{N})^{c}\}_{N\in \mathbb{N}}$ with 
\begin{align}
& \left( 
\begin{array}{c}
\theta _{l}(P_{N}),\theta _{u}(P_{N}),\sigma _{l}(P_{N}),\sigma _{u}(P_{N}),\rho (P_{N}),\sqrt{N}(\theta _{u}(P_{N})-\theta _{l}(P_{N})),\\
\sqrt{N}(\theta _{u}(P_{N})-\theta _{l}(P_{N})-b_{N}),\sqrt{N}(\theta _{l}(P_{N})-\theta _{N}),\sqrt{N}(\theta _{N}-\theta _{u}(P_{N})), \\ 
\end{array}
\right) ~\to   \notag \\
& (\theta _{l},\theta _{u},\sigma _{l},\sigma _{u},\rho ,\mu ,\eta ,\Psi _{l},\Psi _{u}) ~\in ~\bar{\mathbb{R}}\times \bar{\mathbb{R}}\times [\underline{\sigma },\overline{\sigma }]\times [\underline{\sigma },\overline{\sigma }]\times [-1,1]\times \bar{\mathbb{R}}_{+}\times \bar{\mathbb{R}}\times \bar{\mathbb{R }}\times \bar{\mathbb{R}}. \label{eq:C3_limit_1}
\end{align}
Finally, assume that $\{b_N\}_{N\in \mathbb{N}}$ satisfies $b_N \to 0$ and $b_N \sqrt{N} \to\infty$. Under these conditions, we have the following results.
\begin{enumerate}[(a)]

\item If $\mu =\infty $ and $\Psi _{l}\geq 0$, then 
\begin{equation*}
\underset{N \to \infty }{\lim \inf} ~P_{N}(\theta _{N}\in CI_{\alpha }^{3})~\geq~\Phi (\Phi ^{-1}(1-\alpha )-\Psi _{l}/\sigma _{l}).
\end{equation*}

\item If $\mu =\infty $, $\eta =0$, and $\Psi _{l}\geq 0$, then $\sigma _{l}=\sigma _{u}=\sigma $, $\rho =1$, and 
\begin{equation*}
\underset{N \to \infty }{\lim \sup} ~P_{N}(\theta _{N}\in CI_{\alpha }^{3})~\leq~\Phi (\Phi ^{-1}(1-\alpha/2 )-\Psi _{l}/\sigma).
\end{equation*}

\item If $\mu =\infty $, $\eta \in (0,\infty ]$, and $\Psi _{l}\geq 0$,
\begin{equation*}
\lim_{N \to \infty }P_{N}(\theta _{N}\in CI_{\alpha }^{3})~=~\Phi (\Phi ^{-1}(1-\alpha )-\Psi _{l}/\sigma _{l}).
\end{equation*}

\item If $\mu =\infty $, $\eta \in [-\infty ,0)$, and $\Psi _{l}\geq 0 $, then $\sigma _{l}=\sigma _{u}=\sigma $, $\rho =1$, and
\begin{equation*}
\lim_{N \to \infty }P_{N}(\theta _{N}\in CI_{\alpha }^{3})~=~\Phi (\Phi ^{-1}(1-\alpha /2)-\Psi _{l}/\sigma ).
\end{equation*}

\item If $\mu \in \mathbb{R}_{+}$ and $\Psi _{l}\geq 0$, then $\sigma _{l}=\sigma _{u}=\sigma $, $\rho =1$, and
\begin{equation*}
    \lim_{N \to \infty }P_{N}(\theta _{N}\in CI_{\alpha }^{3})~=~\Phi \left( (\Psi _{l}+\mu )/\sigma +\Phi ^{-1}(1-\alpha /2)\right) -\Phi \left( \Psi _{l}/\sigma -\Phi ^{-1}(1-\alpha /2)\right),
\end{equation*}
 
\item If $\mu =\infty $ and $\Psi _{u}\geq 0$, then 
\begin{equation*}
\underset{N \to \infty }{\lim \inf} ~P_{N}(\theta _{N}\in CI_{\alpha }^{3})~\geq~\Phi (\Phi ^{-1}(1-\alpha )-\Psi _{u}/\sigma _{u}).
\end{equation*}

\item If $\mu =\infty $, $\eta =0$, and $\Psi _{u}\geq 0$, then $\sigma _{l}=\sigma _{u}=\sigma $, $\rho =1$, and 
\begin{equation*}
\underset{N \to \infty }{\lim \sup} ~P_{N}(\theta _{N}\in CI_{\alpha }^{3})~\leq~\Phi (\Phi ^{-1}(1-\alpha/2 )-\Psi _{u}/\sigma).
\end{equation*}

\item If $\mu =\infty $, $\eta \in (0,\infty ]$, and $\Psi _{u}\geq 0$,
\begin{equation*}
\lim_{N \to \infty }P_{N}(\theta _{N}\in CI_{\alpha }^{3})~=~\Phi (\Phi ^{-1}(1-\alpha )-\Psi _{u}/\sigma _{u}).
\end{equation*}

\item If $\mu =\infty $, $\eta \in [-\infty ,0)$, and $\Psi _{u}\geq 0 $, then $\sigma _{l}=\sigma _{u}=\sigma $, $\rho =1$, and
\begin{equation*}
\lim_{N \to \infty }P_{N}(\theta _{N}\in CI_{\alpha }^{3})~=~\Phi (\Phi ^{-1}(1-\alpha /2)-\Psi _{u}/\sigma ).
\end{equation*}

\item If $\mu \in \mathbb{R}_{+}$ and $\Psi _{u}\geq 0$, then $\sigma _{l}=\sigma _{u}=\sigma $, $\rho =1$, and 
\begin{equation*}
    \lim_{N \to \infty }P_{N}(\theta _{N}\in CI_{\alpha }^{3})~=~\Phi \left( (\Psi _{u}+\mu )/\sigma +\Phi ^{-1}(1-\alpha /2)\right) -\Phi \left( \Psi _{u}/\sigma -\Phi ^{-1}(1-\alpha /2)\right),
\end{equation*}
\end{enumerate}
\end{lemma}
\begin{proof}
Let $C_{3}:\bar{\mathbb{R}}_{+}\times \bar{\mathbb{R}}\times [ \underline{\sigma },\overline{\sigma }]\times [\underline{\sigma }, \overline{\sigma }]\times [-1,1] \rightrightarrows{\bar{\mathbb{R}}^{2}}$ be the following correspondence:
\begin{align}
& C_{3}(\delta _{1},\delta _{2},\sigma _{l},\sigma _{u},\rho )  \notag \\
& =~\left\{ 
\begin{array}{l}
\left( c_{l},c_{u}\right) \in {\bar{\mathbb{R}}^{2}}~~~\text{ s.t.} \\ 
P(\{-c_{l}\leq z_{1}\}~\cap ~\{\rho z_{1}\leq c_{u}+{\delta _{1}1[\delta _{2}>0]}/{\sigma _{u}}+\sqrt{1-\rho ^{2}}z_{2}\})~\geq ~1-\alpha ~\text{and}\\ 
P(\{-c_{l}-{\delta _{1}1[\delta _{2}>0]}/{\sigma _{l}}+\sqrt{1-\rho ^{2}} z_{2}\leq \rho z_{1}\}~\cap ~\{z_{1}\leq c_{u}\})~\geq ~1-\alpha ,
\end{array}
\right\} ,  \label{eq:C3_defn}
\end{align}
where $(z_{1},z_{2})\sim \mathcal{N}(\mathbf{0}_{2\times 1},\mathbf{I}_{2\times 2})$, let $S_{3}:\bar{\mathbb{R}}_{+}\times \bar{\mathbb{R}}\times [ \underline{\sigma },\overline{\sigma }]\times [\underline{\sigma }, \overline{\sigma }]\times [-1,1] \rightrightarrows {\bar{\mathbb{R}}^{2}}$ be the following minimizer correspondence: 
\begin{equation}
S_{3}(\delta _{1},\delta _{2},\sigma _{l},\sigma _{u},\rho )~=~\underset{ \left( c_{l},c_{u}\right) \in C_{3}(\delta _{1},\delta _{2},\sigma _{l},\sigma _{u},\rho )}{\text{arg min}}(\sigma _{l}c_{l}+\sigma _{u}c_{u}), \label{eq:S3_defn}
\end{equation}
and let $F_{3}=(F_{3,l},F_{3,u}):\bar{\mathbb{R}}_{+}\times \bar{\mathbb{R}} \times [\underline{\sigma },\overline{\sigma }]\times [ \underline{\sigma },\overline{\sigma }]\times [-1,1] \to {\bar{ \mathbb{R}}^{2}}$ be the choice in $S_{3}(\delta _{1},\delta _{2},\sigma _{l},\sigma _{u},\rho )$ used to implement $CI_{\alpha}^3$. By \eqref{eq:CI3}, \eqref{eq:CI3_problem}, and Lemma \ref{lem:C3NotEmpty}, we have that $ c^{3}=F_{3}(\sqrt{N}(\hat{\theta}_{u}-\hat{\theta}_{l}),\sqrt{N}(\hat{\theta}_{u}-\hat{\theta}_{l})-\sqrt{N}b_{N},\hat{\sigma}_{l},\hat{\sigma}_{u},\hat{ \rho})$ a.s.

\noindent \underline{Part (a).} We first show that $( c_{l},c_{u}) \in C_{3}(\delta _{1},\delta _{2},\sigma _{l},\sigma _{u},\rho )$ satisfies $c_{l},c_{u}\geq \Phi ^{-1}(1-\alpha )>0$. To show this for $c_{l}$, note that the first constraint in \eqref{eq:C3_defn} implies that 
\begin{equation*}
P(-c_{l}\leq z_{1})~\geq~ P(\{-c_{l}\leq z_{1}\}~\cap ~\{\rho z_{1}\leq c_{u}+ {\delta _{1}1[\delta _{2}>0]}/{\sigma _{u}}+\sqrt{1-\rho ^{2}} z_{2}\})~\geq ~1-\alpha ,
\end{equation*}
which implies $c_{l}\geq \Phi ^{-1}(1-\alpha )$. In turn, note that $\Phi ^{-1}(1-\alpha )>0$ by $\alpha \in (0,0.5)$. A similar argument using the second constraint in \eqref{eq:C3_defn} implies that $c_{u}\geq \Phi ^{-1}(1-\alpha )>0$. Since $F_{3}(\delta _{1},\delta _{2},\sigma _{l},\sigma _{u},\rho )\in S_{3}(\delta _{1},\delta _{2},\sigma _{l},\sigma _{u},\rho )\subseteq C_{3}(\delta _{1},\delta _{2},\sigma _{l},\sigma _{u},\rho )$,
\begin{align}
F_{3,l}(\sqrt{N}(\hat{\theta}_{u}-\hat{\theta}_{l}), \sqrt{N}(\hat{\theta}_{u}-\hat{\theta}_{l})-\sqrt{N}b_{N},\hat{\sigma}_{l}, \hat{\sigma}_{u},\hat{\rho})~&\geq~ \Phi ^{-1}(1-\alpha )\notag\\
F_{3,u}(\sqrt{N}(\hat{\theta}_{u}-\hat{\theta}_{l}), \sqrt{N}(\hat{\theta}_{u}-\hat{\theta}_{l})-\sqrt{N}b_{N},\hat{\sigma}_{l}, \hat{\sigma}_{u},\hat{\rho})~&\geq~ \Phi ^{-1}(1-\alpha ).
\label{eq:C3_LB}
\end{align}

Since $(\hat{\theta}_{l},\hat{\theta}_{u},\hat{\sigma} _{l},\hat{\sigma}_{u},\hat{\rho})$ satisfies OBS with parameter $({\theta } _{l}(P),{\theta }_{u}(P),{\sigma }_{l}(P),{\sigma }_{u}(P),{\rho }(P))$ and \eqref{eq:C3_limit_1},
\begin{equation}
(\sqrt{N}(\hat{\theta}_{l}-\theta _{l}(P_{N})),\sqrt{N}(\hat{\theta} _{u}-\theta _{u}(P_{N})),\hat{\sigma}_{l},\hat{\sigma}_{u}, \hat{\rho})~\overset{d}{\to}~(z_{1}\sigma _{l},(z_{2}\sqrt{1-\rho ^{2}}+z_{1}\rho )\sigma _{u},\sigma _{l},\sigma _{u},\rho),
\label{eq:C3_limit_3}
\end{equation}
where $(z_{1},z_{2})\sim \mathcal{N}(\mathbf{0}_{2\times 1},\mathbf{I}_{2\times 2})$. Then,
\begin{align}
&P_{N}(\theta _{N} \in CI_{\alpha}^{3}) \notag\\
& ~\overset{(1)}{=}~P_{N}\left( 
\begin{array}{c}
\sqrt{N}(\hat{\theta}_{l}-\theta _{l}(P_{N}))-\hat{\sigma}_{l}F_{3,l}(\sqrt{N}(\hat{\theta}_{u}-\hat{\theta}_{l}),\sqrt{N}(\hat{\theta}_{u}-\hat{\theta} _{l}-b_{N}),\hat{\sigma}_{l},\hat{\sigma}_{u},\hat{\rho}) \\ 
\leq \sqrt{N}(\theta _{N}-\theta _{l}(P_{N}))\leq  \sqrt{N}(\hat{\theta}_{u}-\theta _{u}(P_{N}))+\sqrt{N}(\theta _{u}(P_{N})-\theta _{l}(P_{N}))+\\
\hat{\sigma}_{u}F_{3,u}(\sqrt{N}(\hat{\theta} _{u}-\hat{\theta}_{l}),\sqrt{N}(\hat{\theta}_{u}-\hat{\theta} _{l}-b_{N}),\hat{\sigma}_{l},\hat{\sigma}_{u},\hat{\rho})
\end{array}
\right) \notag \\
& ~\overset{(2)}{\geq }~P_{N}\left( 
\begin{array}{c}
\sqrt{N}(\hat{\theta}_{l}-\theta _{l}(P_{N}))-\hat{\sigma}_{l}\Phi ^{-1}(1-\alpha )\leq \sqrt{N}(\theta _{N}-\theta _{l}(P_{N}))\leq  \\ 
\sqrt{N}(\hat{\theta}_{u}-\theta _{u}(P_{N}))+\sqrt{N}(\theta _{u}(P_{N})-\theta _{l}(P_{N}))+\\
\hat{\sigma}_{u}F_{3,u}(\sqrt{N}(\hat{\theta} _{u}-\hat{\theta}_{l}),\sqrt{N}(\hat{\theta}_{u}-\hat{\theta} _{l}-b_{N}),\hat{\sigma}_{l},\hat{\sigma}_{u},\hat{\rho})
\end{array}
\right)  \notag\\
&\overset{(3)}{\to}~P\left( z_{1}\sigma _{l}-\sigma _{l}\Phi^{-1}(1-\alpha )\leq -\Psi_{l}\right)  \notag\\
&\overset{(4)}{=}~\Phi (\Phi ^{-1}(1-\alpha )-\Psi_{l} /\sigma _{l}),\label{eq:limit3_4}
\end{align}
where (1) holds by Lemma \ref{lem:C3NotEmpty}, \eqref{eq:CI3}, and \eqref{eq:CI3_problem}, which give ${c}^3=F_{3}(\sqrt{N}(\hat{\theta}_{u}-\hat{\theta}_{l}),\sqrt{N}(\hat{\theta}_{u}-\hat{\theta}_{l}) - \sqrt{N}b_{N},\hat{ \sigma}_{l},\hat{\sigma}_{u},\hat{\rho})$ a.s., (2) by \eqref{eq:C3_LB}, (3) by \eqref{eq:C3_limit_1}, \eqref{eq:C3_LB}, \eqref{eq:C3_limit_3}, $\mu = \infty$, and the CMT, and (4) by $\sigma_{l} \geq \underline{\sigma }>0$ and $z_{1}\sim \mathcal{N}(0,1)$. The desired result is implied by \eqref{eq:limit3_4}.

\noindent \underline{Part (b).} Note that $\sqrt{N}\left( \theta _{u}(P_{N})-\theta _{l}(P_{N})-b_{N}\right) \to \eta =0$ and $b_{N}\to 0$ implies $\theta _{u}(P_{N})-\theta _{l}(P_{N})\to 0$. Lemma \ref{lem:near1} then implies that \eqref{eq:C3_limit_1} holds with $(\sigma _{l},\sigma _{u},\rho )=(\sigma ,\sigma ,1)$ for some $\sigma \in [\underline{\sigma },\bar{\sigma}]$. 

We now derive the desired limiting result. It suffices to show that, for any $\varepsilon >0$,
\begin{equation}
\limsup_{N\to \infty }P_{N}(\theta _{N}\in CI_{\alpha }^{3})~\leq~ \Phi ( \Phi ^{-1}(1-\alpha /2)+\varepsilon -\Psi _{l}/\sigma ) .
\label{eq:aux0_lemma3b}
\end{equation}
To see why, note that the desired limit follows from \eqref{eq:aux0_lemma3b} as $\varepsilon \downarrow 0$. To this end, we fix $\varepsilon >0$ arbitrarily for the remainder of this part.

Since $(\hat{\theta}_{l}, \hat{\theta}_{u},\hat{\sigma}_{l},\hat{\sigma}_{u},\hat{\rho})$ satisfies OBS with parameter $({\theta }_{l}(P),{\theta }_{u}(P),{\sigma }_{l}(P),{ \sigma }_{u}(P),{\rho }(P))$ and \eqref{eq:C3_limit_1}, 
\begin{equation}
(\sqrt{N}(\hat{\theta}_{l}-\theta _{l}(P_{N})),\sqrt{N}(\hat{\theta} _{u}-\theta _{u}(P_{N})),\hat{\sigma}_{l},\hat{\sigma}_{u},\hat{\rho}) ~\overset{d}{ \to }~(z\sigma ,z\sigma ,\sigma ,\sigma ,1),
\label{eq:aux3_lemma3b}
\end{equation}
where $z \sim \mathcal{N}(0,1)$.

Next, we introduce some definitions. Define $A_{N}=\{(\hat{\theta}_{u}-\hat{\theta}_{l})>b_{N}\}$. Let $C_{0}:[-1,1]\rightrightarrows {\bar{\mathbb{R}}^{2}}$ be the following correspondence: 
\begin{equation*}
C_{0}(\rho )=~\left\{ 
\begin{array}{l}
\left( c_{l},c_{u}\right) \in {\bar{\mathbb{R}}^{2}}~~~\text{ s.t.} \\ 
P(\{-c_{l}\leq z_{1}\}~\cap ~\{\rho z_{1}\leq c_{u}+\sqrt{1-\rho ^{2}} z_{2}\})~\geq ~1-\alpha ~\text{and} \\ 
P(\{-c_{l}+\sqrt{1-\rho ^{2}}z_{2}\leq \rho z_{1}\}~\cap ~\{z_{1}\leq c_{u}\})~\geq ~1-\alpha ,
\end{array}
\right\} ,
\end{equation*}
where $(z_{1},z_{2})\sim \mathcal{N}(\mathbf{0}_{2\times 1},\mathbf{I}_{2\times 2})$, let $S_{0}:[\underline{\sigma },\overline{\sigma }]\times [\underline{\sigma },\overline{\sigma }]\times [-1,1]\rightrightarrows {\bar{\mathbb{R}} ^{2}}$ be the following minimizer correspondence: 
\begin{equation}
S_{0}(\sigma _{l},\sigma _{u},\rho )~=~\underset{\left( c_{l},c_{u}\right) \in C_{0}(\rho )}{\text{arg min}}(\sigma _{l}c_{l}+\sigma _{u}c_{u}),
\label{eq:S4_defn}
\end{equation}
and let $F_{0}=(F_{0,l},F_{0,u}):[\underline{\sigma },\overline{\sigma } ]\times [\underline{\sigma },\overline{\sigma }]\times [ -1,1]\to {\bar{\mathbb{R}}^{2}}$ be the choice in $S_{0}(\sigma _{l},\sigma _{u},\rho )$ used to implement $CI_{\alpha}^3$ when $A_{N}^c$ occurs. By repeating arguments in part (b) of Lemma \ref{lem:C_is_UHC_3}, it follows that $C_{0}$ is continuous at $\rho = 1$. Also note that $S_{0}(\sigma,\sigma ,1 ) = \{(\Phi^{-1}(1-\alpha/2),\Phi^{-1}(1-\alpha/2))\}$. Then, by repeating the arguments of Lemma \ref{lem:Berge2}, we obtain that $F_{0}$ is continuous at $(\sigma ,\sigma ,1)$ and $F_{0}(\sigma ,\sigma ,1) = (\Phi^{-1}(1-\alpha/2),\Phi^{-1}(1-\alpha/2))$. By this and \eqref{eq:aux3_lemma3b},
\begin{equation}
F_{0}(\hat\sigma_{l} ,\hat\sigma_{u} ,\hat{\rho})~\overset{p}{\to}~(\Phi^{-1}(1-\alpha/2),\Phi^{-1}(1-\alpha/2)).
\label{eq:aux3_lemma3c}
\end{equation}

Let $C_2$ be defined as in \eqref{eq:C2_defn}, $S_2$ be defined as in \eqref{eq:S2_defn}, and $F_{2}=(F_{2,l},F_{2,u}):\bar{\mathbb{R}}_{+}\times [ \underline{\sigma },\overline{\sigma }]\times [\underline{\sigma }, \overline{\sigma }]\times [-1,1]  \to {\bar{\mathbb{R}}^{2}}$ be the choice in $S_{2}$ used to implement $CI_{\alpha}^3$ when $A_{N}$ occurs. Recall from Lemma \ref{lem:C2_is_continuous} that $C_2$ in \eqref{eq:C2_defn} is continuous at $(\infty,\sigma ,\sigma ,1)$. Also, note from the definition in \eqref{eq:S2_defn} and Lemmas \ref{lem:tildeH} and \ref{lem:criticalMin}, we obtain $S_{2}(\infty,\sigma ,\sigma ,1) = \{(\Phi^{-1}(1-\alpha),\Phi^{-1}(1-\alpha))\}$. From this and Lemma \ref{lem:Berge}, it follows that $F_{2}$ is continuous at $(\infty,\sigma ,\sigma ,1)$ and $F_{2}(\infty ,\sigma ,\sigma ,1) = (\Phi^{-1}(1-\alpha),\Phi^{-1}(1-\alpha))$. By this and \eqref{eq:aux3_lemma3b},
\begin{equation}
F_{2}(\sqrt{N}(\hat{\theta}_{u}-\hat{\theta}_{l}), \hat\sigma_{l} ,\hat\sigma_{u} ,\hat{\rho})~\overset{p}{\to}~(\Phi^{-1}(1-\alpha),\Phi^{-1}(1-\alpha)).
\label{eq:aux3_lemma3d}
\end{equation}

Next, define $E_{N}=\{ F_{3,l}(\sqrt{N}(\hat{\theta}_{u}-\hat{\theta}_{l}),\sqrt{N}(\hat{\theta}_{u}-\hat{\theta}_{l}-b_{N}),\hat{\sigma}_{l},\hat{\sigma}_{u},\hat{\rho})>\Phi ^{-1}(1-\alpha /2)+\varepsilon \} $. By construction,
\begin{align}
&F_{3,l}(\sqrt{N}(\hat{\theta}_{u}-\hat{\theta}_{l}),\sqrt{N}(\hat{\theta} _{u}-\hat{\theta}_{l}-b_{N}),\hat{\sigma}_{l},\hat{\sigma}_{u},\hat{\rho}) \notag\\
&=F_{2,l}(\sqrt{N}(\hat{\theta}_{u}-\hat{\theta}_{l}),\hat{\sigma}_{l},\hat{ \sigma}_{u},\hat{\rho})\times I\left\{ A_{N}\right\} +F_{0,l}(\hat{\sigma} _{l},\hat{\sigma}_{u},\hat{\rho})\times I\left\{ A_{N}^{c}\right\} .
\label{eq:aux3_lemma3e}
\end{align}
Then,
\begin{align}
P_{N}( E_{N})  &~=~P_{N}( E_{N}\cap A_{N}) +P_{N}( E_{N}\cap A_{N}^{c}) \notag \\
&~\overset{(1)}{=}~\left\{ 
\begin{array}{c}
P_{N}( \{ F_{2,l}(\sqrt{N}(\hat{\theta}_{u}-\hat{\theta}_{l}),\hat{ \sigma}_{l},\hat{\sigma}_{u},\hat{\rho})>\Phi ^{-1}(1-\alpha /2)+\varepsilon \} \cap A_{N})  \notag \\
+P_{N}( \{ F_{0,l}(\hat{\sigma}_{l},\hat{\sigma}_{u},\hat{\rho} )>\Phi ^{-1}(1-\alpha /2)+\varepsilon \} \cap A_{N}^{c}) 
\end{array}
\right\}  \notag \\
&~\leq~ \left\{ 
\begin{array}{c}
P_{N}( F_{2,l}(\sqrt{N}(\hat{\theta}_{u}-\hat{\theta}_{l}),\hat{\sigma} _{l},\hat{\sigma}_{u},\hat{\rho})>\Phi ^{-1}(1-\alpha /2)+\varepsilon )  \\ 
+P_{N}( F_{0,l}(\hat{\sigma}_{l},\hat{\sigma}_{u},\hat{\rho})>\Phi ^{-1}(1-\alpha /2)+\varepsilon ) 
\end{array}
\right\}  \notag \\
&~\overset{(2)}{ \to }~0, \label{eq:aux3_lemma3f}
\end{align}
where (1) holds by \eqref{eq:aux3_lemma3e} and (2) by \eqref{eq:aux3_lemma3c} and \eqref{eq:aux3_lemma3d}. To complete the proof, note that
\begin{align*}
& P_{N}(\theta _{N}\in CI_{\alpha }^{3})  \notag \\
& \overset{(1)}{=}~P_{N}\left( 
\begin{array}{c}
\sqrt{N}(\hat{\theta}_{l}-\theta _{l}(P_{N}))-\hat{\sigma}_{l}F_{3,l}(\sqrt{N }(\hat{\theta}_{u}-\hat{\theta}_{l}),\sqrt{N}(\hat{\theta}_{u}-\hat{\theta} _{l}-b_{N}),\hat{\sigma}_{l},\hat{\sigma}_{u},\hat{\rho}) \\ 
\leq \sqrt{N}(\theta _{N}-\theta _{l}(P_{N}))\leq \sqrt{N}(\hat{\theta} _{u}-\theta _{u}(P_{N}))+\sqrt{N}(\theta _{u}(P_{N})-\theta _{l}(P_{N}))+ \\ 
\hat{\sigma}_{u}F_{3,u}(\sqrt{N}(\hat{\theta}_{u}-\hat{\theta}_{l}),\sqrt{N}(\hat{\theta}_{u}-\hat{\theta}_{l}-b_{N}),\hat{\sigma}_{l},\hat{\sigma}_{u}, \hat{\rho})
\end{array}
\right)   \notag \\
& \leq ~P_{N}(\sqrt{N}(\hat{\theta}_{l}-\theta _{l}(P_{N}))-\hat{\sigma}_{l}F_{3,l}(\sqrt{N }(\hat{\theta}_{u}-\hat{\theta}_{l}),\sqrt{N}(\hat{\theta}_{u}-\hat{\theta} _{l}-b_{N}),\hat{\sigma}_{l},\hat{\sigma}_{u},\hat{\rho}) \leq \sqrt{N}(\theta _{N}-\theta _{l}(P_{N})) )   \notag \\
& \overset{(2)}{\leq} ~P_{N}( \sqrt{N}(\hat{\theta}_{l}-\theta _{l}(P_{N}))-\hat{ \sigma}_{l}(\Phi ^{-1}(1-\alpha /2)+\varepsilon) \leq \sqrt{N} (\theta _{N}-\theta _{l}(P_{N}))) +P_{N}( E_{N}) \\
&\overset{(3)}{\to}~ \Phi ( \Phi ^{-1}(1-\alpha /2)+\varepsilon -\Psi _{l}/\sigma ) ,
\end{align*}
where (1) holds by Lemma \ref{lem:C3NotEmpty}, \eqref{eq:CI3} and \eqref{eq:CI3_problem}, (2) by definition of $E_N$, and (3) by \eqref{eq:aux3_lemma3b} and \eqref{eq:aux3_lemma3f}. This derivation implies \eqref{eq:aux0_lemma3b}, and concludes the proof of this part.

\noindent \underline{Part (c).} We divide the analysis into two cases. 

Case 1: $\eta \in (0,\infty )$. By $\sqrt{N}(\theta _{u}(P_{N})-\theta _{l}(P_{N})-b_{N}) \to \eta \in (0,\infty )$ and $b_{N} \to 0$, we conclude that $\theta _{u}(P_{N})-\theta _{l}(P_{N}) \to 0$. Lemma \ref{lem:near1} then implies $(\sigma _{l}(P_{N}),\sigma _{u}(P_{N}),\rho (P_{N})) \to (\sigma ,\sigma ,1)$ with $\sigma \in [\underline{\sigma },\bar{\sigma}]$. 

We now study the continuity properties of $F_{3}$ for any sequence $\{(\delta _{M,1},\delta _{M,2},\sigma _{M,l},\sigma _{M,u},\rho _{M})\}_{M\in \mathbb{N}}$ with $\lim_{M\to \infty}  (\delta _{M,1},\delta _{M,2},\sigma _{M,l},\sigma _{M,u},\rho _{M}) = (\infty ,\eta ,\sigma ,\sigma ,1)$. By evaluating in \eqref{eq:C3_defn}, we get
\begin{equation*}
C_{3}(\infty ,\eta ,\sigma ,\sigma ,1)~=~\left\{ ( c_{l},c_{u}) \in {\bar{\mathbb{R}}^{2}}~\text{s.t.}~\Phi (c_{l})~\geq ~1-\alpha ~\text{and}~\Phi (c_{u})~\geq ~1-\alpha \right\}. 
\end{equation*}
In turn, Lemma \ref{lem:criticalMin} shows that 
\begin{align*}
S_{3}(\infty ,\eta ,\sigma ,\sigma ,1)~&=~\{(\Phi ^{-1}(1-\alpha ),\Phi ^{-1}(1-\alpha ))\}\\
F_{3}(\infty ,\eta ,\sigma ,\sigma ,1)~&=~(\Phi ^{-1}(1-\alpha ),\Phi ^{-1}(1-\alpha )).
\end{align*}
Furthermore, part (a) of Lemma \ref{lem:C_is_UHC_3} shows that $C_{3}$ is continuous at $(\delta _{1},\delta _{2},\sigma _{l},\sigma _{u},\rho )=(\infty ,\eta ,\sigma ,\sigma ,1)$. Under these conditions, Lemma \ref{lem:Berge2} shows that $F_{3}$ is continuous at $(\delta _{1},\delta _{2},\sigma _{l},\sigma _{u},\rho )=(\infty ,\eta ,\sigma ,\sigma ,1)$, i.e., 
\begin{equation}
\underset{(\delta _{M,1},\delta _{M,2},\sigma _{M,l},\sigma _{M,u},\rho _{M}) \to (\infty ,\eta ,\sigma ,\sigma ,1)}{\text{lim}}F_{3}(\delta _{M,1},\delta _{M,2},\sigma _{M,l},\sigma _{M,u},\rho _{M})~=~(\Phi ^{-1}(1-\alpha ),\Phi ^{-1}(1-\alpha )).\label{eq:limit3_6}
\end{equation}

Since $(\hat{\theta}_{l},\hat{\theta}_{u},\hat{\sigma}_{l},\hat{\sigma}_{u}, \hat{\rho})$ satisfies OBS with parameter $({\theta }_{l}(P),{\theta }_{u}(P),{\sigma }_{l}(P),{\sigma }_{u}(P),{\rho }(P))$ and \eqref{eq:C3_limit_1}, 
\begin{equation}
(\sqrt{N}(\hat{\theta}_{l}-\theta _{l}(P_{N})),\sqrt{N}(\hat{\theta} _{u}-\theta _{u}(P_{N})),\hat{\sigma}_{l},\hat{\sigma}_{u},\hat{\rho}) ~\overset{d}{ \to }~(z\sigma ,z\sigma ,\sigma ,\sigma ,1),\label{eq:limit3_7}
\end{equation}
where $z\sim \mathcal{N}(0,1)$. Then, 
\begin{align}
&P_{N}(\theta _{N}\in CI_{\alpha }^{3})  \notag \\
& \overset{(1)}{=}~P_{N}\left( 
\begin{array}{c}
\sqrt{N}(\hat{\theta}_{l}-\theta _{l}(P_{N}))-\hat{\sigma}_{l}F_{3,l}(\sqrt{N }(\hat{\theta}_{u}-\hat{\theta}_{l}),\sqrt{N}(\hat{\theta}_{u}-\hat{\theta} _{l}-b_{N}),\hat{\sigma}_{l},\hat{\sigma}_{u},\hat{\rho}) \\ 
\leq \sqrt{N}(\theta _{N}-\theta _{l}(P_{N}))\leq \sqrt{N}(\hat{\theta} _{u}-\theta _{u}(P_{N}))+\sqrt{N}(\theta _{u}(P_{N})-\theta _{l}(P_{N}))+ \\ 
\hat{\sigma}_{u}F_{3,u}(\sqrt{N}(\hat{\theta}_{u}-\hat{\theta}_{l}),\sqrt{N}( \hat{\theta}_{u}-\hat{\theta}_{l}-b_{N}),\hat{\sigma}_{l},\hat{\sigma}_{u}, 
\hat{\rho})
\end{array}
\right)   \notag \\
& \overset{(2)}{\to}~\Phi (\Phi ^{-1}(1-\alpha )-\Psi _{l}/\sigma ),
\end{align}
as desired, where (1) holds by Lemma \ref{lem:C3NotEmpty}, \eqref{eq:CI3}, and \eqref{eq:CI3_problem}, which give ${c}^{3}=F_{3}(\sqrt{N}(\hat{\theta}_{u}-\hat{\theta}_{l}),\sqrt{N}(\hat{\theta}_{u}-\hat{\theta} _{l}-b_{N}),\hat{\sigma}_{l},\hat{\sigma}_{u},\hat{\rho})$ a.s., and (2) by \eqref{eq:C3_limit_1}, \eqref{eq:limit3_6}, \eqref{eq:limit3_7}, and the CMT.  

Case 2: $\eta =\infty $. We now study continuity properties of $F_{3}$ for any convergent sequence $\{(\delta _{M,1},\delta _{M,2},\sigma _{M,l},\sigma _{M,u},\rho _{M})\}_{M\in \mathbb{N}}$ with $\lim_{M\to \infty}  (\delta _{M,1},\delta _{M,2},\sigma _{M,l},\sigma _{M,u},\rho _{M}) = (\infty ,\infty ,\sigma _{l},\sigma _{u},\rho )$ with $(\sigma _{l},\sigma _{u},\rho )\in [\underline{\sigma },\overline{ \sigma }]\times [\underline{\sigma },\overline{\sigma }]\times [-1,1]$. By evaluating in \eqref{eq:C3_defn}, we get
\begin{equation*}
C_{3}(\infty ,\infty ,\sigma _{l},\sigma _{u},\rho )~=~\left\{ ( c_{l},c_{u}) \in {\bar{\mathbb{R}}^{2}}~\text{s.t.}~\Phi (c_{l})~\geq ~1-\alpha ~\text{and}~\Phi (c_{u})~\geq ~1-\alpha \right\} . 
\end{equation*}
In turn, Lemma \ref{lem:criticalMin} shows that 
\begin{align*}
S_{3}(\infty ,\infty ,\sigma _{l},\sigma _{u},\rho )~&=~\{(\Phi ^{-1}(1-\alpha ),\Phi ^{-1}(1-\alpha ))\}\\
F_{3}(\infty ,\infty ,\sigma _{l},\sigma _{u},\rho )~&=~(\Phi ^{-1}(1-\alpha ),\Phi ^{-1}(1-\alpha )).
\end{align*}
Furthermore, part (a) of Lemma \ref{lem:C_is_UHC_3} shows that $C_{3}$ is continuous at $(\delta _{1},\delta _{2},\sigma _{l},\sigma _{u},\rho )=(\infty ,\infty ,\sigma _{l},\sigma _{u},\rho )$ with $(\sigma _{l},\sigma _{u},\rho )\in [\underline{\sigma },\overline{\sigma }]\times [ \underline{\sigma },\overline{\sigma }]\times [-1,1]$. Under these conditions, Lemma \ref{lem:Berge2} shows that $F_{3}$ is continuous at $(\delta _{1},\delta _{2},\sigma _{l},\sigma _{u},\rho )=(\infty ,\infty
,\sigma _{l},\sigma _{u},\rho )$, i.e.,
\begin{equation}
\underset{(\delta _{M,1},\delta _{M,2},\sigma _{M,l},\sigma _{M,u},\rho _{M}) \to (\infty ,\infty ,\sigma _{l},\sigma _{u},\rho )}{\text{lim}} F_{3}(\delta _{M,1},\delta _{M,2},\sigma _{M,l},\sigma _{M,u},\rho _{M})~=~(\Phi ^{-1}(1-\alpha ),\Phi ^{-1}(1-\alpha )).\label{eq:limit3_8}
\end{equation}

Since $(\hat{\theta}_{l},\hat{\theta}_{u},\hat{\sigma}_{l},\hat{\sigma}_{u}, \hat{\rho})$ satisfies OBS with parameter $({\theta }_{l}(P),{\theta } _{u}(P),{\sigma }_{l}(P),{\sigma }_{u}(P),{\rho }(P))$ and \eqref{eq:C3_limit_1}, 
\begin{equation}
(\sqrt{N}(\hat{\theta}_{l}-\theta _{l}(P_{N})),\sqrt{N}(\hat{\theta} _{u}-\theta _{u}(P_{N})),\hat{\sigma}_{l},\hat{\sigma}_{u},\hat{\rho}) ~\overset{d}{ \to }~(z_{1}\sigma _{l},(z_{2}\sqrt{1-\rho ^{2}} +z_{1}\rho )\sigma _{u},\sigma _{l},\sigma _{u},\rho ),\label{eq:limit3_9}
\end{equation}
where $(z_{1},z_{2})\sim \mathcal{N}(\mathbf{0}_{2\times 1},\mathbf{I}_{2\times 2})$. Then, 
\begin{align}
& P_{N}(\theta _{N}\in CI_{\alpha }^{3})  \notag \\
& ~\overset{(1)}{=}~P_{N}\left( 
\begin{array}{c}
\sqrt{N}(\hat{\theta}_{l}-\theta _{l}(P_{N}))-\hat{\sigma}_{l}F_{3,l}(\sqrt{N }(\hat{\theta}_{u}-\hat{\theta}_{l}),\sqrt{N}(\hat{\theta}_{u}-\hat{\theta} _{l}-b_{N}),\hat{\sigma}_{l},\hat{\sigma}_{u},\hat{\rho}) \\ 
\leq \sqrt{N}(\theta _{N}-\theta _{l}(P_{N}))\leq \sqrt{N}(\hat{\theta} _{u}-\theta _{u}(P_{N}))+\sqrt{N}(\theta _{u}(P_{N})-\theta _{l}(P_{N}))+ \\ 
\hat{\sigma}_{u}F_{3,u}(\sqrt{N}(\hat{\theta}_{u}-\hat{\theta}_{l}),\sqrt{N}( \hat{\theta}_{u}-\hat{\theta}_{l}-b_{N}),\hat{\sigma}_{l},\hat{\sigma}_{u}, \hat{\rho})
\end{array}
\right)   \notag \\
& ~\overset{(2)}{ \to }~\Phi (\Phi ^{-1}(1-\alpha )-\Psi _{l}/\sigma_{l} ),
\end{align}
as desired, where (1) holds by Lemma \ref{lem:C3NotEmpty}, \eqref{eq:CI3}, and \eqref{eq:CI3_problem}, which give ${c}^{3}=F_{3}(\sqrt{N}(\hat{\theta}_{u}-\hat{\theta}_{l}),\sqrt{N}(\hat{\theta}_{u}-\hat{\theta} _{l}-b_{N}),\hat{\sigma}_{l},\hat{\sigma}_{u},\hat{\rho})$ a.s., and (2) by \eqref{eq:C3_limit_1}, \eqref{eq:limit3_8}, \eqref{eq:limit3_9}, and the CMT.  

\noindent \underline{Part (d).} We now show that $\theta _{u}(P_{N})-\theta _{l}(P_{N}) \to 0$. To see this, note that we have two cases: $\eta \in (-\infty,0)$ or $\eta =-\infty $. If $\eta \in (-\infty,0)$, $\sqrt{N}(\theta _{u}(P_{N})-\theta _{l}(P_{N})-b_{N}) \to \eta \in (-\infty,0)$. By this and $ b_{N} \to 0$, we get that $\theta _{u}(P_{N})-\theta _{l}(P_{N}) \to 0$.  If $\eta =-\infty $, $\theta _{u}(P_{N})-\theta _{l}(P_{N})\leq -1/\sqrt{N}+b_{N}$ for all sufficiently large $N$. By this, $\theta _{u}(P_{N})-\theta _{l}(P_{N})\geq 0$, and $ b_{N} \to 0$, we get that $\theta _{u}(P_{N})-\theta _{l}(P_{N}) \to 0$. Since $\theta _{u}(P_{N})-\theta _{l}(P_{N}) \to 0$, Lemma \ref{lem:near1} then implies $(\sigma _{l}(P_{N}),\sigma _{u}(P_{N}),\rho (P_{N})) \to (\sigma ,\sigma ,1)$ with $\sigma \in [\underline{ \sigma },\bar{\sigma}]$. 

We now study the continuity properties of $F_{3}$ for any sequence $\{(\delta _{M,1},\delta _{M,2},\sigma _{M,l},\sigma _{M,u},\rho _{M})\}_{M\in \mathbb{N}}$ with $\lim_{M\to \infty}  (\delta _{M,1},\delta _{M,2},\sigma _{M,l},\sigma _{M,u},\rho _{M}) = (\infty ,\eta
,\sigma ,\sigma ,1)$. By evaluating in \eqref{eq:C3_defn}, we get
\begin{equation*}
C_{3}(\infty ,\eta ,\sigma ,\sigma ,1)~=~\left\{ ( c_{l},c_{u}) \in {\bar{\mathbb{R}}^{2}}~\text{ s.t.}~P(-c_{l}\leq z_{1}\leq c_{u})~\geq ~1-\alpha \right\}.
\end{equation*}
In turn, Lemma \ref{lem:criticalMin} shows that
\begin{align*}
S_{3}(\infty ,\eta ,\sigma ,\sigma ,1 )~=~\{(\Phi ^{-1}(1-\alpha /2),\Phi ^{-1}(1-\alpha /2))\}\\
F_{3}(\infty ,\eta ,\sigma ,\sigma ,1 )~=~(\Phi ^{-1}(1-\alpha /2),\Phi ^{-1}(1-\alpha /2)).
\end{align*}
Furthermore, part (a) of Lemma \ref{lem:C_is_UHC_3} shows that $C_{3}$ is continuous at $(\delta _{1},\delta _{2},\sigma _{l},\sigma _{u},\rho )=(\infty ,\eta ,\sigma ,\sigma ,1)$. Under these conditions, Lemma \ref{lem:Berge2} shows that $F_{3}$ is continuous at $(\delta _{1},\delta _{2},\sigma _{l},\sigma _{u},\rho )=(\infty ,\eta ,\sigma ,\sigma ,1)$, i.e., 
\begin{equation}
\underset{(\delta _{M,1},\delta _{M,2},\sigma _{M,l},\sigma _{M,u},\rho _{M}) \to (\infty ,\eta ,\sigma ,\sigma ,1)}{\text{lim}}F_{3}(\delta _{M,1},\delta _{M,2},\sigma _{M,l},\sigma _{M,u},\rho _{M})~=~(\Phi ^{-1}(1-\alpha /2),\Phi ^{-1}(1-\alpha /2)).\label{eq:limit3_10}
\end{equation}

Since $(\hat{\theta}_{l},\hat{\theta}_{u},\hat{\sigma}_{l},\hat{\sigma}_{u}, \hat{\rho})$ satisfies OBS with parameter $({\theta }_{l}(P),{\theta } _{u}(P),{\sigma }_{l}(P),{\sigma }_{u}(P),{\rho }(P))$ and \eqref{eq:C3_limit_1}, 
\begin{equation}
(\sqrt{N}(\hat{\theta}_{l}-\theta _{l}(P_{N})),\sqrt{N}(\hat{\theta} _{u}-\theta _{u}(P_{N})),\hat{\sigma}_{l},\hat{\sigma}_{u},\hat{\rho}) \overset{d}{ \to }(z\sigma ,z\sigma ,\sigma ,\sigma ,1),
\label{eq:limit3_11}
\end{equation}
where $z\sim \mathcal{N}(0,1)$. Then, 
\begin{align}
& P_{N}(\theta _{N}\in CI_{\alpha }^{3})  \notag \\
& \overset{(1)}{=}~P_{N}\left( 
\begin{array}{c}
\sqrt{N}(\hat{\theta}_{l}-\theta _{l}(P_{N}))-\hat{\sigma}_{l}F_{3,l}(\sqrt{N }(\hat{\theta}_{u}-\hat{\theta}_{l}),\sqrt{N}(\hat{\theta}_{u}-\hat{\theta} _{l}-b_{N}),\hat{\sigma}_{l},\hat{\sigma}_{u},\hat{\rho}) \\ 
\leq \sqrt{N}(\theta _{N}-\theta _{l}(P_{N}))\leq \sqrt{N}(\hat{\theta} _{u}-\theta _{u}(P_{N}))+\sqrt{N}(\theta _{u}(P_{N})-\theta _{l}(P_{N}))+ \\ 
\hat{\sigma}_{u}F_{3,u}(\sqrt{N}(\hat{\theta}_{u}-\hat{\theta}_{l}),\sqrt{N}(\hat{\theta}_{u}-\hat{\theta}_{l}-b_{N}),\hat{\sigma}_{l},\hat{\sigma}_{u},
\hat{\rho})
\end{array}
\right)   \notag \\
& \overset{(2)}{ \to }~\Phi (\Phi ^{-1}(1-\alpha /2)-\Psi _{l}/\sigma ),
\end{align}
as desired, where (1) holds by Lemma \ref{lem:C3NotEmpty}, \eqref{eq:CI3}, and \eqref{eq:CI3_problem}, which give ${c}^{3}=F_{3}(\sqrt{N}(\hat{\theta} _{u}-\hat{\theta}_{l}),\sqrt{N}(\hat{\theta}_{u}-\hat{\theta} _{l}-b_{N}),\hat{\sigma}_{l},\hat{\sigma}_{u},\hat{\rho})$ a.s., and (2) by \eqref{eq:C3_limit_1}, \eqref{eq:limit3_10}, \eqref{eq:limit3_11}, and the CMT. 

\noindent \underline{Part (e).} By $\sqrt{N}(\theta _{u}(P_{N})-\theta _{l}(P_{N})) \to \mu \in \mathbb{R}_{+}$ and $b_{N} \to 0$, we conclude that $\theta _{u}(P_{N})-\theta _{l}(P_{N}) \to 0$. Lemma \ref{lem:near1} then implies $(\sigma _{l}(P_{N}),\sigma _{u}(P_{N}),\rho (P_{N})) \to (\sigma ,\sigma ,1)$ with $\sigma \in [\underline{\sigma },\bar{\sigma}]$. Also, by $\sqrt{N}b_{N} \to \infty $, we conclude that $\sqrt{N}(\theta _{u}(P_{N})-\theta _{l}(P_{N})-b_{N}) \to -\infty $.

We now study the continuity properties of $F_{3}$ for any sequence  $\{(\delta _{M,1},\delta _{M,2},\sigma _{M,l},\sigma _{M,u},\rho _{M})\}_{M\in \mathbb{N}}$ with $\lim_{M\to \infty}  (\delta _{M,1},\delta _{M,2},\sigma _{M,l},\sigma _{M,u},\rho _{M}) =(\mu ,-\infty ,\sigma ,\sigma ,1)$ with $(\mu  ,\sigma )\in \mathbb{R}_{+}\times [\underline{\sigma },\overline{\sigma }]$. By evaluating in \eqref{eq:C3_defn}, we get
\begin{equation*}
C_{3}(\mu ,-\infty ,\sigma ,\sigma ,1)=\left\{ ( c_{l},c_{u}) \in {\bar{\mathbb{R}}^{2}}~\text{ s.t.}~P(-c_{l}\leq z_{1}\leq c_{u})~\geq ~1-\alpha \right\}.
\end{equation*}
In turn, Lemma \ref{lem:criticalMin} shows that
\begin{align*}
S_{3}(\mu ,-\infty,\sigma ,\sigma ,1 )~=~\{(\Phi ^{-1}(1-\alpha /2),\Phi ^{-1}(1-\alpha /2))\}\\
F_{3}(\mu ,-\infty ,\sigma ,\sigma ,1 )~=~(\Phi ^{-1}(1-\alpha /2),\Phi ^{-1}(1-\alpha /2)).
\end{align*}
Furthermore, part (a) of Lemma \ref{lem:C_is_UHC_3} shows that $C_{3}$ is continuous at $(\delta _{1},\delta _{2},\sigma _{l},\sigma _{u},\rho )=(\mu ,-\infty ,\sigma ,\sigma ,1)$. Under these conditions, Lemma \ref{lem:Berge2} shows that $F_{3}$ is continuous at $(\delta _{1},\delta _{2},\sigma _{l},\sigma _{u},\rho )=(\mu ,-\infty ,\sigma ,\sigma ,1)$, i.e., 
\begin{equation}
\underset{(\delta _{M,1},\delta _{M,2},\sigma _{M,l},\sigma _{M,u},\rho _{M}) \to (\mu ,-\infty ,\sigma ,\sigma ,1)}{\text{lim}}F_{3}(\delta _{M,1},\delta _{M,2},\sigma _{M,l},\sigma _{M,u},\rho _{M})~=~(\Phi ^{-1}(1-\alpha /2),\Phi ^{-1}(1-\alpha /2)).
\label{eq:limit3_12}
\end{equation}
Since $(\hat{\theta}_{l},\hat{\theta}_{u},\hat{\sigma}_{l},\hat{\sigma}_{u}, \hat{\rho})$ satisfies OBS with parameter $({\theta }_{l}(P),{\theta } _{u}(P),{\sigma }_{l}(P),{\sigma }_{u}(P),{\rho }(P))$ and \eqref{eq:C3_limit_1}, 
\begin{equation}
(\sqrt{N}(\hat{\theta}_{l}-\theta _{l}(P_{N})),\sqrt{N}(\hat{\theta} _{u}-\theta _{u}(P_{N})),\hat{\sigma}_{l},\hat{\sigma}_{u},\hat{\rho}) \overset{d}{ \to }(z\sigma ,z\sigma ,\sigma ,\sigma ,1),
\label{eq:limit3_13}
\end{equation}
where $z\sim \mathcal{N}(0,1)$. Then, 
\begin{align*}
& P_{N}(\theta _{N}\in CI_{\alpha }^{3}) \\
& \overset{(1)}{=}~P_{N}\left( 
\begin{array}{c}
\sqrt{N}(\hat{\theta}_{l}-\theta _{l}(P_{N}))-\hat{\sigma}_{l}F_{3,l}(\sqrt{N }(\hat{\theta}_{u}-\hat{\theta}_{l}),\sqrt{N}(\hat{\theta}_{u}-\hat{\theta} _{l}-b_{N}),\hat{\sigma}_{l},\hat{\sigma}_{u},\hat{\rho}) \\ 
\leq \sqrt{N}(\theta _{N}-\theta _{l}(P_{N}))\leq \sqrt{N}(\hat{\theta} _{u}-\theta _{u}(P_{N}))+\sqrt{N}(\theta _{u}(P_{N})-\theta _{l}(P_{N}))+ \\ 
\hat{\sigma}_{u}F_{3,u}(\sqrt{N}(\hat{\theta}_{u}-\hat{\theta}_{l}),\sqrt{N}( \hat{\theta}_{u}-\hat{\theta}_{l}-b_{N}),\hat{\sigma}_{l},\hat{\sigma}_{u}, \hat{\rho})
\end{array}
\right)  \\
& \overset{(2)}{ \to }~P\left( \sigma z-\sigma \Phi ^{-1}(1-\alpha /2)\leq -\Psi _{l}\leq z\sigma +\mu +\sigma \Phi ^{-1}(1-\alpha /2)\right)  \\
& \overset{(3)}{=}~\Phi \left( (\Psi _{l}+\mu )/\sigma +\Phi ^{-1}(1-\alpha /2)\right) -\Phi \left( \Psi _{l}/\sigma -\Phi ^{-1}(1-\alpha /2)\right),
\end{align*}
as desired, where (1) holds by Lemma \ref{lem:C3NotEmpty}, \eqref{eq:CI3}, and \eqref{eq:CI3_problem}, which give ${c}^{3}=F_{3}(\sqrt{N}(\hat{\theta}_{u}-\hat{\theta}_{l}),\sqrt{N}(\hat{\theta}_{u}-\hat{\theta}_{l}-b_{N}),\hat{\sigma}_{l},\hat{\sigma}_{u},\hat{\rho})$ a.s., (2) by \eqref{eq:C3_limit_1}, \eqref{eq:limit3_12}, \eqref{eq:limit3_13}, and the
CMT, and (3) by $\sigma \geq \underline{\sigma }>0$ and $z\sim \mathcal{N}(0,1)$.

\noindent \underline{Parts (f)-(j).} These are analogous to those in parts (a)-(e), respectively, and are thus omitted. 
\end{proof}

\begin{lemma}\label{lem:near1} 
Let $(\hat{\theta}_{l},\hat{\theta}_{u},\hat{\sigma} _{l}, \hat{\sigma}_{u},\hat{\rho})$ satisfy OBS with parameters $({\theta } _{l}(P), {\theta }_{u}(P), {\sigma }_{l}(P), {\sigma }_{u}(P), {\rho }(P))$ and $\mathcal{P}$. Then, for any sequence of distributions $\{ P_{N}\} _{N\geq 1}$ that satisfies 
\begin{align}
\theta _{u}( P_{N}) -\theta _{l}(P_{N}) &~ \to ~ 0,  \label{eq:near2} \\
( \rho ( P_{N}) ,\sigma _{l}( P_{N}) ,\sigma _{u}( P_{N}) ) &~ \to ~ ( \rho ,\sigma _{l},\sigma _{u}) \in [ -1,1] \times [ \underline{\sigma },\overline{ \sigma}] \times [ \underline{\sigma },\bar{\sigma}],  \label{eq:near3}
\end{align}
we have that $\rho=1$ and $\sigma _{l}=\sigma _{u}$.
\end{lemma}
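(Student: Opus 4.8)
The plan is to exploit the two features of OBS that have not yet been combined: the almost-sure ordering $P(\hat{\theta}_l \le \hat{\theta}_u)=1$ from part (a), and the joint asymptotic normality from part (b). The object that links them is the rescaled gap $\sqrt{N}(\hat{\theta}_u - \hat{\theta}_l)$, which by part (a) is nonnegative with probability one for every $N$, yet whose limiting law is pinned down by part (b). The strategy is to show that this limiting law is a Gaussian shifted by the finite constant $\mu$, and that a nondegenerate such law cannot be bounded below.

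First I would decompose
\[
\sqrt{N}(\hat{\theta}_u - \hat{\theta}_l) ~=~ \sqrt{N}(\hat{\theta}_u - \theta_u(P_N)) - \sqrt{N}(\hat{\theta}_l - \theta_l(P_N)) + \sqrt{N}(\theta_u(P_N) - \theta_l(P_N)).
\]
By the uniform asymptotic normality in OBS(b) applied along the drifting sequence $\{P_N\}$, together with the convergence of the variance parameters in \eqref{eq:near3}, the first two terms converge jointly to a centered bivariate normal, so their difference converges to some $Z \sim N(0,V)$ with $V = \sigma_l^2 + \sigma_u^2 - 2\rho\sigma_l\sigma_u$. The last term converges to $\mu$ by \eqref{eq:near2}. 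Hence, by the CMT, $\sqrt{N}(\hat{\theta}_u - \hat{\theta}_l) \overset{d}{\to} Z + \mu$.

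Next, since $\sqrt{N}(\hat{\theta}_u - \hat{\theta}_l) \ge 0$ almost surely for every $N$, the portmanteau theorem applied to the open set $(-\infty,0)$ forces the weak limit to satisfy $Z + \mu \ge 0$ almost surely, i.e., $Z \ge -\mu$ almost surely. Because $\mu$ is finite, this says the Gaussian $Z$ is bounded below by a finite constant; but a normal with strictly positive variance assigns positive mass to every half-line $(-\infty,t)$ and so admits no finite lower bound. Therefore $V=0$. The conclusion then follows from the algebraic identity
\[
V ~=~ (\sigma_u - \sigma_l)^2 + 2\sigma_l\sigma_u(1-\rho),
\]
in which both summands are nonnegative because $\rho \le 1$ and $\sigma_l,\sigma_u \ge \underline{\sigma}>0$. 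A sum of nonnegative terms vanishes only if each does, so $(\sigma_u-\sigma_l)^2=0$ and $2\sigma_l\sigma_u(1-\rho)=0$; since $\sigma_l\sigma_u \ge \underline{\sigma}^2>0$, these yield $\sigma_l=\sigma_u$ and $\rho=1$, as claimed.

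The only delicate step is the transfer of the almost-sure sign constraint from the prelimit variables to the weak limit, which I would handle through the portmanteau characterization rather than by applying the CMT to a discontinuous indicator. The strict positivity of $\underline{\sigma}$ plays a quiet but essential role: it keeps the limiting variance parameters in a compact set so that \eqref{eq:near3} delivers a genuine Gaussian limit, and it guarantees $\sigma_l\sigma_u$ is bounded away from zero when I extract both conclusions from $V=0$.
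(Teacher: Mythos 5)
Your proof is correct and follows essentially the same route as the paper: decompose the rescaled gap $\sqrt{N}(\hat{\theta}_u-\hat{\theta}_l)$ into the two centered estimation errors plus the drifting gap, invoke the almost-sure ordering from OBS(a) to force the limiting Gaussian $N(\mu,\sigma_l^2+\sigma_u^2-2\rho\sigma_l\sigma_u)$ to be degenerate, and then read off $\rho=1$ and $\sigma_l=\sigma_u$ from the resulting algebraic identity. The only (immaterial) differences are cosmetic: you justify the transfer of the sign constraint via the portmanteau theorem on the open half-line, where the paper argues that positive variance would give $P(Z\le 0)<1$ in the limit, and you write the variance as a sum of two nonnegative terms rather than as the equation $(\sigma_l-\sigma_u)^2=2(\rho-1)\sigma_l\sigma_u$.
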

\begin{proof}
By \eqref{eq:near2}, we can construct a subsequence $\{k_{N}\}_{N\geq 1}$ of $\{N\}_{N\geq 1}$ s.t.\ $\sqrt{N}(\theta _{u}(P_{k_{N}})-\theta _{l}(P_{k_{N}})) \to 0$. Construct the subsequence recursively as follows. First, set $k_{1}=1$, and so $P_{k_{1}}=P_{1}$. Next, for any $N\in \mathbb{N}$ with $ N>1$, find $k_{N}\in \mathbb{N}$ s.t.\ $k_{N}>k_{N-1}$ and $|\sqrt{N}(\theta _{u}(P_{k_{N}})-\theta _{l}(P_{k_{N}}))|\leq 1/N$. (This is always possible because of \eqref{eq:near2}). By repeating this construction iteratively, and taking limits as $N \to \infty $, we get $\sqrt{N}(\theta _{u}(P_{k_{N}})-\theta _{l}(P_{k_{N}})) \to 0$. Next, consider the sequence $\{\tilde{P}_{N}\}_{N\geq 1}$ with $\tilde{P}_{N}=P_{k_{N}}$. By construction, we get
\begin{equation}
\sqrt{N}(\theta _{u}(\tilde{P}_{N})-\theta _{l}(\tilde{P}_{N}))~ \to ~0.  \label{eq:near4}
\end{equation}
Since $\{\tilde{P}_{N}\}_{N\geq 1}$ is a subsequence of $\{P_{N}\}_{N\geq 1}$, \eqref{eq:near3} implies that 
\begin{equation}
(\rho (\tilde{P}_{N}),\sigma _{l}(\tilde{P}_{N}),\sigma _{u}(\tilde{P} _{N}))~ \to ~(\rho ,\sigma _{l},\sigma _{u}).  \label{eq:near5}
\end{equation}
To complete the proof, it suffices to show that $\rho =1$ and $\sigma _{l}=\sigma _{u}$.

First, note that OBS applied to $\{\tilde{P}_{N}\}_{N\geq 1}$ and \eqref{eq:near5} imply 
\begin{equation}
\sqrt{N}\left( 
\begin{array}{c}
\hat{\theta}_{l}-\theta _{l}(\tilde{P}_{N}), \\ 
\hat{\theta}_{u}-\theta _{u}(\tilde{P}_{N})
\end{array}%
\right) ~\overset{d}{ \to }~N\left( \mathbf{0}_{2\times 1},\left( 
\begin{tabular}{cc}
$\sigma _{l}^{2}$ & $\rho \sigma _{l}\sigma _{u}$ \\ 
$\rho \sigma _{l}\sigma _{u}$ & $\sigma _{u}^{2}$
\end{tabular}
 \right) \right) .  \label{eq:near6}
\end{equation}

Let $Z_{N}\equiv \sqrt{N}(\hat{\theta}_{l}-\hat{\theta}_{u})$. Then, consider the following derivation. 
\begin{align}
Z_{N}& ~=~\sqrt{N}(\hat{\theta}_{l}-\theta _{l}(\tilde{P}_{N}))-\sqrt{N}(\hat{\theta} _{u}-\theta _{u}(\tilde{P}_{N}))-\sqrt{N}(\theta _{u}(\tilde{P}_{N})-\theta _{l}(\tilde{P}_{N})) \nonumber \\
& ~{\overset{d}{ \to }}~Z\sim \mathcal{N}(0,\sigma _{l}^{2}+\sigma _{u}^{2}-2\rho \sigma _{l}\sigma _{u}).  \label{eq:near7}
\end{align}%
where the convergence holds by \eqref{eq:near4} and \eqref{eq:near6}. By OBS, $\tilde{P}_{N}(Z_{N}\leq 0)=\tilde{P}_{N}(\hat{\theta}_{l}\leq \hat{\theta}_{u})=1$. This and \eqref{eq:near7} then imply that $\sigma _{l}^{2}+\sigma _{u}^{2}-2\rho \sigma _{l}\sigma _{u}=0$ or, equivalently, 
\begin{equation}
(\sigma _{l}-\sigma _{u})^{2}~=~2(\rho -1)\sigma _{l}\sigma _{u}.\label{eq:near9}
\end{equation}
To see why, note that \eqref{eq:near7} and $\sigma _{l}^{2}+\sigma _{u}^{2}-2\rho \sigma _{l}\sigma _{u}>0$ imply $\tilde{P}_{N}(Z_{N}\leq 0) \to P(Z\leq 0)<1$, which contradicts $\tilde{P}_{N}(Z_{N}\leq 0)=1$.

We are now ready to conclude the proof. First, note that \eqref{eq:near9}, $(\sigma _{l}-\sigma _{u})^{2}\geq 0$, and $(\rho ,\sigma _{l},\sigma _{u})\in [-1,1]\times [\underline{\sigma },\overline{\sigma }]\times [\underline{\sigma },\overline{\sigma }]$ imply that $\rho =1$. In turn, \eqref{eq:near9} and $\rho =1$ imply that $(\sigma _{l}-\sigma _{u})^{2}=0$ or, equivalently, $\sigma _{l}=\sigma _{u}$.
\end{proof}

\begin{lemma}\label{lem:tildeH} 
Assume that $\alpha \in (0,1/2)$, and let $\underline{\sigma },\overline{\sigma }>0$ be as in Definition \ref{def:setup}. Let $H :[ \underline{\sigma },\overline{\sigma }] \times {\mathbb{R}}_{+}\times \bar{\mathbb{R}}_{+} \to \mathbb{R}$ be defined as follows: 
\begin{equation}
H( \sigma ,\mu ,\Psi ) ~\equiv~ \Phi \big( \tfrac{\Psi +\mu }{\sigma }+{G}( \tfrac{\mu }{\sigma }) \big) -\Phi\big( \tfrac{\Psi }{\sigma }-{G}( \tfrac{\mu }{\sigma })\big) ,
\label{eq:tildeH}
\end{equation}
where ${G}( y) :\mathbb{R}_{+} \to \mathbb{R}_{++}$ is implicitly defined as the unique $c>0$ that solves 
\begin{equation}
\Phi ( c+y) -\Phi ( -c) ~=~1-\alpha .  \label{eq:tildeG}
\end{equation}
Then, we have the following results.
\begin{enumerate}[(a)]
    \item $H( \sigma ,\mu ,\Psi )$ is weakly increasing in $\sigma$ for all $( \sigma ,\mu ,\Psi ) \in [ \underline{\sigma },\overline{\sigma }] \times {\mathbb{R}}_{+}\times \bar{\mathbb{R}}_{+}$,
    \item $H( \sigma ,\mu ,\Psi )$ is strictly increasing in $\sigma$ for all $( \sigma ,\mu ,\Psi ) \in [\underline{\sigma },\overline{\sigma }]\times {\mathbb{R}}_{+}\times \mathbb{R}_{++}$. 
\end{enumerate}
\end{lemma}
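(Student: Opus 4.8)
The plan is to fix $\mu$ and $\Psi$, treat the boundary value $\Psi=\infty$ trivially, and reduce the interior case to a single scalar inequality for Gaussian ordinates. When $\Psi=\infty$ we have $H(\sigma,\mu,\Psi)=\Phi(\infty)-\Phi(\infty)=0$ for every $\sigma$, so monotonicity is vacuous. For finite $\Psi$, since $\sigma\in[\underline{\sigma},\overline{\sigma}]$ with $\underline{\sigma}>0$ and $H$ is continuous, it suffices to show $\partial H/\partial\sigma\ge 0$; the real content is then a monotonicity statement about normal densities.

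First I would record the derivative of the implicitly defined critical value. Writing $c=G(y)$ with $y=\mu/\sigma$, the defining identity $\Phi(c+y)-\Phi(-c)=1-\alpha$ from \eqref{eq:tildeG} has partial derivative in $c$ equal to $\phi(c+y)+\phi(c)>0$, so by the implicit function theorem $G$ is $C^{1}$ and, abbreviating $p\equiv\phi(c+y)$ and $q\equiv\phi(c)$,
\[
G'(y)=-\frac{p}{p+q}\in(-1,0).
\]
Next, set $t=\Psi/\sigma$, $A=t+y+c$, and $B=t-c$, so that $H=\Phi(A)-\Phi(B)$. A routine chain-rule computation (using $G'$ above together with $dt/d\sigma=-t/\sigma$ and $dy/d\sigma=-y/\sigma$) gives
\[
\frac{\partial H}{\partial\sigma}=-\frac{1}{\sigma(p+q)}\Big[\phi(A)\big(t(p+q)+yq\big)-\phi(B)\big(t(p+q)+yp\big)\Big],
\]
so the sign condition $\partial H/\partial\sigma\ge 0$ reduces to the inequality
\[
\phi(A)\big(t(p+q)+yq\big)~\le~\phi(B)\big(t(p+q)+yp\big)\qquad(\star).
\]

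Finally I would prove $(\star)$ by writing its right-hand side minus its left-hand side as $t(p+q)\big(\phi(B)-\phi(A)\big)+y\big(p\phi(B)-q\phi(A)\big)$ and checking that each summand is nonnegative. For the first, the factorization $A^{2}-B^{2}=(y+2c)(2t+y)\ge 0$ yields $\phi(A)\le\phi(B)$, and $t\ge0$, $p+q>0$ give the sign. For the second, the genuinely nontrivial step, I would compare the Gaussian products $q\phi(A)=\phi(c)\phi(A)$ and $p\phi(B)=\phi(c+y)\phi(B)$ through their exponents, using the identity $[c^{2}+A^{2}]-[(c+y)^{2}+B^{2}]=2t(y+2c)\ge 0$, which yields $q\phi(A)\le p\phi(B)$ and hence (with $y\ge0$) the nonnegativity of the second summand. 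Together these establish $(\star)$ and thus the lemma. The main obstacle is precisely this second summand: the sign of $p\phi(B)-q\phi(A)$ is not apparent a priori, and it is the particular pairing of the density weights $p,q$ arising from $G'$ with the tail ordinates $\phi(A),\phi(B)$ that makes the explicit exponent identity $2t(y+2c)$ close with the correct sign.
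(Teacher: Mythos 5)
Your proof is correct, and it shares the paper's skeleton --- handle $\Psi=\infty$ trivially, differentiate $H$ in $\sigma$, use implicit differentiation of \eqref{eq:tildeG} to get $G'(y)=-\phi(c+y)/(\phi(c+y)+\phi(c))$, and reduce everything to the single ordinate inequality that is the paper's \eqref{eq:tildeH4} (your $(\star)$ is exactly that inequality after multiplying through by $\sigma/(p+q)$) --- but you close that inequality by a genuinely different and cleaner route. The paper splits into sub-cases ($\mu=0,\Psi=0$; $\mu=0,\Psi>0$; $\mu>0$), and in the main sub-case divides by $\mu$ and by $\Psi-G'\mu$ to reach a ratio-versus-exponential inequality (\eqref{eq:tildeH6}--\eqref{eq:tildeH7}) that it then verifies term by term "by algebra." You instead write the difference of the two sides of $(\star)$ as
\[
t(p+q)\bigl(\phi(B)-\phi(A)\bigr)+y\bigl(p\,\phi(B)-q\,\phi(A)\bigr)
\]
and kill each summand with an explicit quadratic identity: $A^{2}-B^{2}=(y+2c)(2t+y)\ge 0$ for the first, and $[c^{2}+A^{2}]-[(c+y)^{2}+B^{2}]=2t(y+2c)\ge 0$ for the second (both identities check out). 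This buys you a uniform argument with no case split on $\mu$, no division by quantities that could vanish, and fully explicit sign verification where the paper leaves the last step to the reader; the only thing the paper's version arguably makes more visible is how the inequality degenerates at $\mu=0$. Your observation that $G'\in(-1,0)$, i.e. the weight decomposition $1+G'=q/(p+q)$ and $-G'=p/(p+q)$, is precisely what makes the second summand pair up the right density weights with the right tail ordinates, and that is indeed the crux.
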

\begin{proof}
We first verify that \eqref{eq:tildeG} has a unique and positive solution. For any $y\in \mathbb{R}_{+}$, let $L(c) \equiv \Phi (c+y) -\Phi (-c) $ denote the left-hand side of \eqref{eq:tildeG}. Note that $L$ is continuous and strictly increasing because $L'(c)=\phi(c+y)+\phi(-c)>0$. Moreover, $L(0) =\Phi (y) -\Phi (0) \leq 0.5<1-\alpha $ and $\lim_{c\to\infty}L(c)=1>1-\alpha$. Thus, by the intermediate value theorem, \eqref{eq:tildeG} has a solution. Strict monotonicity gives uniqueness, and $L(0)<1-\alpha$ implies that the solution is positive.

Note that $H( \sigma ,\mu ,\Psi )$ is differentiable in $\sigma$. Therefore, part (a) follows from verifying that, for all $( \sigma ,\mu ,\Psi ) \in [ \underline{\sigma },\overline{\sigma }] \times {\mathbb{R}}_{+}\times \bar{\mathbb{R}}_{+}$,
\begin{equation}
\frac{\partial H( \sigma ,\mu ,\Psi ) }{\partial \sigma }~\geq~ 0.
\label{eq:tildeH0}
\end{equation}
Moreover, part (b) follows from verifying that \eqref{eq:tildeH0} becomes strict when $\Psi \in (0,\infty)$. 

We divide the proof into cases.

\noindent{Case 1}: $\Psi =\infty$. This case implies that $H( \sigma ,\mu ,\infty ) =0$, and so $\partial H( \sigma,\mu ,\infty ) /\partial \sigma =0$.

\noindent{Case 2}: $\Psi <\infty $. Note that
\begin{equation}
\frac{\partial H( \sigma ,\mu ,\Psi ) }{\partial \sigma }~=~- \frac{\left(
\phi ( \tfrac{\Psi +\mu }{\sigma }+G(\tfrac{\mu }{\sigma })) \left(\left( G^{\prime }(\tfrac{\mu }{\sigma })+1\right) \mu +\Psi \right)   
+\phi ( \tfrac{\Psi }{\sigma }-G(\tfrac{\mu }{\sigma })) \left(G^{\prime }(\tfrac{\mu }{\sigma })\mu -\Psi \right) 
\right)}{\sigma ^{2}}.  \label{eq:tildeH2}
\end{equation}
We complete our proof by showing that the numerator of \eqref{eq:tildeH2} is non-positive or, equivalently, 
\begin{align}
\exp \big( -{( \tfrac{\Psi }{\sigma }-G(\tfrac{\mu }{\sigma })) ^{2}}/{2}\big) ~( \Psi -G^{\prime }( \tfrac{\mu }{\sigma }) \mu ) \geq \exp \big( -{( \tfrac{\Psi +\mu }{\sigma }+G(\tfrac{\mu }{\sigma })) ^{2}}/{2}\big) ~ ( ( G^{\prime }(\tfrac{\mu }{\sigma })+1) \mu +\Psi ) , \label{eq:tildeH4}
\end{align}
and that this becomes strict when $\Psi >0$.

We now divide the analysis into three sub-cases.

\noindent{Case 2.1}: $\mu =0$ and $\Psi =0$. In this case, \eqref{eq:tildeH4} is equivalent to $0\geq 0$.

\noindent{Case 2.2}: $\mu =0$ and $\Psi \in (0,\infty )$. In this case, the strict inequality in \eqref{eq:tildeH4} is equivalent to $( \tfrac{\Psi }{\sigma }-G( 0)) ^{2}< ( \tfrac{\Psi }{\sigma }+G(0)) ^{2}$ which, in turn, is equivalent to $G(0)\tfrac{\Psi }{\sigma }>0$. This follows from $G(0)>0$ and $\sigma \geq \underline{\sigma }>0$.

\noindent{Case 2.3}: $\mu >0$ and $\Psi \in [0,\infty )$. As a preliminary result, we now show that
\begin{equation}
G^{\prime }( \tfrac{\mu }{\sigma }) ~=~\frac{-\phi ( {G}( \tfrac{\mu }{\sigma }) +\tfrac{\mu }{\sigma }) }{\phi ( {G}( \tfrac{\mu }{\sigma }) +\tfrac{\mu }{\sigma }) +\phi ( -{G}( \tfrac{\mu }{\sigma })) }~<~0.  \label{eq:tildeH1}
\end{equation}
To this end, note that for any $y\in \mathbb{R}_{+}$, ${G}( y) $ satisfies $F(y)\equiv \Phi ( G(y)+y) -\Phi ( -G(y)) -(1-\alpha )=0$. Since the relationship holds for all $y\in \mathbb{R}_{+}$, we have that $F^{\prime }( y) =0$, i.e.,
\begin{equation*}
\phi ( G(y)+y) ( G^{\prime }( y) +1) +\phi ( -G(y)) G^{\prime }( y) ~=~0. 
\end{equation*}
By evaluating on $y=\mu /\sigma $ with $( \sigma ,\mu ) \in [ \underline{\sigma },\overline{\sigma }] \times \bar{\mathbb{R}} _{+}$ and solving for $G^{\prime }( \mu /\sigma ) $, \eqref{eq:tildeH1} follows.

Since $G^{\prime }( \tfrac{\mu }{\sigma }) <0$ (by \eqref{eq:tildeH1}), $\Psi \geq 0$, and $\mu >0$, we have that $\Psi -G^{\prime }( \tfrac{\mu }{\sigma })\mu >0$, and so \eqref{eq:tildeH4} is equivalent to 
\begin{equation}
\exp \Big( \tfrac{1}{2}{( \tfrac{\Psi +\mu }{\sigma }+G( \tfrac{\mu }{ \sigma })) ^{2}}-\tfrac{1}{2}{( \tfrac{\Psi }{\sigma }-G( \tfrac{ \mu }{\sigma })) ^{2}}\Big) ~\geq~ \frac{( G^{\prime }( \tfrac{\mu }{\sigma })+1) \mu +\Psi }{\Psi -G^{\prime }( \tfrac{\mu }{\sigma })\mu }.\label{eq:tildeH6}
\end{equation}
By combining with \eqref{eq:tildeH1}, \eqref{eq:tildeH6} is equivalent to
\begin{align}
\Big( \tfrac{\Psi }{\mu }+\tfrac{\phi ( G( \tfrac{\mu }{\sigma }) +\tfrac{\mu }{\sigma }) }{\phi ( G( \tfrac{\mu }{ \sigma }) +\tfrac{\mu }{\sigma }) +\phi ( -G( \tfrac{ \mu }{\sigma })) }\Big) ~ \exp \big( \tfrac{\Psi \mu }{\sigma ^{2}}+\tfrac{\mu ^{2}}{2\sigma ^{2}}+G( \tfrac{\mu }{\sigma }) \tfrac{2\Psi +\mu }{\sigma }\big)  
& \geq ~\tfrac{\phi ( -G( \tfrac{\mu }{\sigma })) }{ \phi ( G( \tfrac{\mu }{\sigma }) +\tfrac{\mu }{\sigma }) +\phi ( -G( \tfrac{\mu }{\sigma })) }+\tfrac{ \Psi }{\mu }.  \label{eq:tildeH7}
\end{align}
To verify \eqref{eq:tildeH7}, it suffices to check that:
\begin{align}
( \Psi /\mu )~  \exp \big( \tfrac{\Psi \mu }{\sigma ^{2}}+\tfrac{\mu ^{2}}{2\sigma ^{2}}+G( \tfrac{\mu }{\sigma }) \tfrac{2\Psi +\mu }{\sigma }\big) & ~\geq~ \Psi /\mu ,\notag\\  
\phi ( G(\mu /\sigma )+\mu /\sigma ) ~ \exp \big( \tfrac{\Psi \mu }{\sigma ^{2}}+\tfrac{\mu ^{2}}{2\sigma ^{2}}+G( \tfrac{\mu }{\sigma }) \tfrac{2\Psi +\mu }{\sigma }\big) & ~\geq~ \phi ( -G(\mu /\sigma )) . \label{eq:tildeH9}
\end{align}
Both of these can be shown by algebra and the fact that $\Psi \geq 0$, $\mu >0$, $G( \tfrac{\mu }{\sigma }) >0$, and $\sigma \geq \underline{\sigma }>0$. 

Finally, we note that $\Psi>0$ further implies that both inequalities in \eqref{eq:tildeH9} hold strictly. By retracing our steps, this implies that  \eqref{eq:tildeH4}, \eqref{eq:tildeH6}, and \eqref{eq:tildeH7} all become strict in this case.
\end{proof}

\begin{lemma}\label{lem:criticalMin}
For any $\alpha <1/2$, $\delta \geq 0$, and $\sigma >0$, consider the problem:
\begin{align}
&\min_{c_{l},c_{u}\in \mathbb{R}}\left( c_{l}+c_{u}\right) \sigma~~~\text{s.t.}\notag\\
&\Phi \left( c_{u}+\delta /\sigma \right) -\Phi \left( -c_{l}\right) \geq 1-\alpha~~\text{and}~~\Phi \left( c_{u}\right) -\Phi \left( -c_{l}-\delta/\sigma \right) \geq 1-\alpha .
\label{eq:criticalMin}
\end{align}
This problem has a unique solution $\left( c_{l}^{\ast },c_{u}^{\ast }\right) =\left( G(\delta /\sigma),G(\delta /\sigma)\right) $, where ${G}( y) :\mathbb{R}_{+} \to \mathbb{R}_{++}$ is defined in Lemma \ref{lem:tildeH}.
\end{lemma}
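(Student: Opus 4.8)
The plan is to reparametrize with $y \equiv \delta/\sigma \ge 0$ and to drop the irrelevant positive factor $\sigma$, so that we minimize $c_l + c_u$. Using $\Phi(-c) = 1 - \Phi(c)$, the first constraint becomes $\Phi(c_l) + \Phi(c_u + y) \ge 2 - \alpha$ and the second becomes $\Phi(c_l + y) + \Phi(c_u) \ge 2 - \alpha$. In this form both the feasible set and the objective are symmetric under $(c_l, c_u) \mapsto (c_u, c_l)$, which already suggests that the optimum lies on the diagonal $c_l = c_u$.

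First I would verify that $(c_l, c_u) = (G(y), G(y))$ is feasible with both constraints binding. The defining identity of $G$ from Lemma~\ref{lem:tildeH}, namely $\Phi(G(y) + y) - \Phi(-G(y)) = 1 - \alpha$, is equivalent to $\Phi(G(y)) + \Phi(G(y) + y) = 2 - \alpha$, which is precisely the first constraint at this point; the second is identical by symmetry. This produces a candidate minimizer with objective value $2 G(y)$, and it remains to show that no feasible point attains a smaller value and that this minimizer is unique.

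The key step, and the one I would emphasize, is that feasibility alone forces $c_l, c_u > 0$. Since $\Phi \le 1$, the first constraint gives $\Phi(c_l) \ge 2 - \alpha - \Phi(c_u + y) \ge 1 - \alpha$, hence $c_l \ge \Phi^{-1}(1 - \alpha) > 0$ because $\alpha < 1/2$; the same argument applied to the second constraint gives $c_u \ge \Phi^{-1}(1 - \alpha) > 0$. This confines every feasible point to the region where the auxiliary function $h(t) \equiv \Phi(t) + \Phi(t + y)$ is strictly increasing and strictly concave: indeed $h'(t) = \phi(t) + \phi(t + y) > 0$ and $h''(t) = -t\phi(t) - (t + y)\phi(t + y) < 0$ for $t > 0$.

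With this in place the conclusion is short. Summing the two reparametrized constraints yields $h(c_l) + h(c_u) \ge 2(2 - \alpha) = 2 h(G(y))$, while strict concavity gives $h(c_l) + h(c_u) \le 2 h(\tfrac{c_l + c_u}{2})$; combining the two and using that $h$ is increasing forces $\tfrac{c_l + c_u}{2} \ge G(y)$, i.e. $c_l + c_u \ge 2 G(y)$, which establishes optimality of $(G(y), G(y))$. For uniqueness, any other minimizer satisfies $c_l + c_u = 2 G(y)$, so the above inequalities collapse to equalities; equality in the strict-concavity (Jensen) step forces $c_l = c_u$, and together with $c_l + c_u = 2 G(y)$ this yields $c_l = c_u = G(y)$. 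I expect the only genuine obstacle to be the non-convexity of the feasible region (as $\Phi$ is concave only on $[0, \infty)$); the positivity bound in the third paragraph is exactly what neutralizes it, after which the symmetrization via $h$ is routine.
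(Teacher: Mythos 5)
Your proof is correct, but it takes a genuinely different route from the paper's. The paper argues by contradiction about properties of a minimizer: it first shows $c_l^{\ast},c_u^{\ast}>0$ at any solution, then that at least one constraint must bind, then establishes uniqueness by showing that two distinct solutions would produce (via strict concavity of $\Phi$ on $(0,\infty)$) a midpoint solution at which neither constraint binds, then invokes the $(c_l,c_u)\mapsto(c_u,c_l)$ symmetry to force $c_l^{\ast}=c_u^{\ast}$, and only at the end identifies the common value with $G(\delta/\sigma)$. You instead sum the two reparametrized constraints into the single symmetric inequality $h(c_l)+h(c_u)\ge 2(2-\alpha)=2h(G(y))$ with $h(t)=\Phi(t)+\Phi(t+y)$, confine all feasible points to $t\ge\Phi^{-1}(1-\alpha)>0$ where $h$ is strictly increasing and strictly concave, and apply Jensen to get the explicit bound $c_l+c_u\ge 2G(y)$ over the entire feasible set, with the equality case of Jensen delivering uniqueness. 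Your version buys several things: it is a direct verification rather than a contradiction, it does not presuppose that a minimizer exists (the paper's argument reasons about "any solution" without establishing existence, whereas you exhibit a feasible point attaining a lower bound valid everywhere), and it merges the paper's separate "one constraint binds / uniqueness / symmetry" steps into one inequality chain. It also avoids two small blemishes in the paper's write-up (the claim that $\Phi$ is "strictly convex" for $x>0$, where concave is meant, and the garbled swapped-solution notation in the symmetry step). The one point worth stating explicitly in a final write-up is that Jensen is applied on the interval $[\min(c_l,c_u),\max(c_l,c_u)]\subset(0,\infty)$, which is exactly what your positivity bound guarantees; with that noted, the argument is complete.
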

\begin{proof}
First, we show that $c_{l}^{\ast },c_{u}^{\ast }> 0$. To see why, note that $c_{l}^{\ast }\leq 0$ implies a violation of the first constraint in \eqref{eq:criticalMin} and $c_{u}^{\ast }\leq 0$ would imply a violation of the second constraint in \eqref{eq:criticalMin}. For brevity, we only show the first one. To see this, note that $c_{l}^{\ast }\leq 0$ implies 
\begin{equation*}
\Phi ( c_{u}^{\ast }+\delta /\sigma ) -\Phi ( -c_{l}^{\ast }) ~\leq~ 1-\Phi ( -c_{l}^{\ast })~ =~\Phi ( c_{l}^{\ast }) ~\leq~ \Phi ( 0) ~=~1/2~<~1-\alpha.
\end{equation*}
This is a violation of the first constraint, as desired. 

By $c_{l}^{\ast },c_{u}^{\ast }> 0$ and $\sigma >0$, we have that \eqref{eq:criticalMin} is equivalent to 
\begin{align} 
&\min_{c_{l},c_{u}\in \mathbb{R}}( c_{l}+c_{u})~~~\text{s.t.}~~\notag\\ &\Phi( c_{u}+\delta /\sigma ) +\Phi ( c_{l}) \geq 2-\alpha ~~\text{and}~~\Phi ( c_{u}) +\Phi ( c_{l}+\delta /\sigma ) \geq 2-\alpha . \label{eq:criticalMin3}
\end{align} 
We focus on \eqref{eq:criticalMin3} for the remainder of the proof.

We now show that at any solution $( c_{l}^{\ast },c_{u}^{\ast }) $ one of the two constraints must be binding. To see why, suppose that this is not the case and $( c_{l}^{\ast },c_{u}^{\ast }) $ minimizes $( c_{l}+c_{u}) $ with $\Phi ( c_{u}+\delta /\sigma ) +\Phi ( c_{l}) \geq 2-\alpha $ and $\Phi ( c_{u}) +\Phi ( c_{l}+\delta /\sigma ) \geq 2-\alpha $. Since $\Phi $ is continuous and the constraints are slack, we can then find an alternative candidate solution $( \tilde{c}_{l}^{\ast },\tilde{c}_{u}^{\ast })$ with $0\leq\tilde{c}_{l}^{\ast } <c_{l}^{\ast }$ and $0\leq\tilde{c}_{u}^{\ast } <c_{u}^{\ast }$ that also satisfies the constraints. Since $\tilde{c}_{l}^{\ast }+\tilde{c}_{u}^{\ast }<c_{l}^{\ast }+c_{u}^{\ast }$, this contradicts that $( c_{l}^{\ast },c_{u}^{\ast }) $ was a minimizer.

We now show that the solution is unique. Suppose that this is not the case, and that $( c_{l}^{\ast },c_{u}^{\ast }) $ and $( \tilde{c}_{l}^{\ast },\tilde{c}_{u}^{\ast }) $ are solutions to the problem and $( c_{l}^{\ast },c_{u}^{\ast }) \neq ( \tilde{c}_{l}^{\ast },\tilde{c}_{u}^{\ast }) $. Let $V=c_{l}^{\ast }+c_{u}^{\ast }=\tilde{c}_{l}^{\ast }+\tilde{c}_{u}^{\ast }$ denote their (common) minimized value. 

Then, consider $( ( c_{l}^{\ast }+\tilde{c}_{l}^{\ast }) /2,( c_{u}^{\ast }+\tilde{c}_{u}^{\ast }) /2) $ as another candidate solution. Note that this achieves the same minimized value $( c_{l}^{\ast }+c_{u}^{\ast }) /2+( \tilde{c}_{l}^{\ast }+\tilde{c}_{u}^{\ast }) /2=V$. Then, we have
\begin{align}
\Phi ( \tfrac{c_{u}^{\ast }+\tilde{c}_{u}^{\ast }}{2}+\delta /\sigma ) +\Phi ( \tfrac{c_{l}^{\ast }+\tilde{c}_{l}^{\ast }}{2}) &~\overset{(1)}{>}~\frac{ \Phi ( c_{u}^{\ast }+\delta /\sigma ) +\Phi ( c_{l}^{\ast })   +\Phi ( \tilde{c}_{u}^{\ast }+\delta /\sigma ) +\Phi ( \tilde{c}_{l}^{\ast }) }{2}
 ~\overset{(3)}{\geq}~ 2-\alpha \notag \\
\Phi ( \tfrac{c_{u}^{\ast }+\tilde{c}_{u}^{\ast }}{2}) +\Phi( \tfrac{c_{l}^{\ast }+\tilde{c}_{l}^{\ast }}{2}+\delta /\sigma )  &~\overset{(2)}{>}~\frac{\Phi ( c_{u}^{\ast }) +\Phi ( c_{l}^{\ast }+\delta /\sigma)  +\Phi ( \tilde{c}_{u}^{\ast }) +\Phi ( \tilde{c}_{l}^{\ast}+\delta /\sigma ) 
}{2} ~\overset{(4)}{\geq}~ 2-\alpha ,\label{eq:criticalMin4}
\end{align}
where (1) and (2) hold by $c_{l}^{\ast },c_{u}^{\ast },\tilde{c}_{l}^{\ast },\tilde{c}_{u}^{\ast }> 0$ and $\delta /\sigma \geq 0$, $\Phi ( x) $ is strictly concave for $x>0$, and either $c_{l}^{\ast } \neq \tilde{c}_{l}^{\ast }$ or $c_{u}^{\ast } \neq \tilde{c}_{u}^{\ast }$, and (3) and (4) hold because $( c_{l}^{\ast },c_{u}^{\ast })$ and $( \tilde{c}_{l}^{\ast },\tilde{c}_{u}^{\ast }) $ satisfy the constraints. Note that \eqref{eq:criticalMin4} implies that $( ( c_{l}^{\ast }+\tilde{c}_{l}^{\ast }) /2,( c_{u}^{\ast }+\tilde{c}_{u}^{\ast }) /2) $ is an optimal solution where neither constraint is binding (i.e., an interior solution), which is a contradiction.

Next, we show that the unique solution $( c_{l}^{\ast },c_{u}^{\ast }) $ satisfies $c_{l}^{\ast }=c_{u}^{\ast }$. Suppose this is not the case, i.e., $c_{l}^{\ast } \neq c_{u}^{\ast }$. From here, it is easy to see that $( (c_{l}^{\ast }+c_{u}^{\ast })/2,(c_{l}^{\ast }+c_{u}^{\ast })/2) $ is a different candidate that is feasible and achieves the same minimized value. But this then implies that the solution is not unique, which is a contradiction.

By combining all previous arguments, we get that the unique solution is $( c_{l}^{\ast },c_{u}^{\ast }) =( c,c) $ with $c>0$ that satisfies $\Phi ( c+\delta /\sigma ) -\Phi ( -c) \geq 1-\alpha $ or $\Phi ( c) -\Phi ( -c-\delta /\sigma ) \geq 1-\alpha $, one of the two constraints with equality. We now show that this implies that both hold with equality. To this end, suppose that the first one holds with equality, i.e., $\Phi ( c+\delta /\sigma ) -\Phi ( -c) =1-\alpha $. This is equivalent to $\Phi ( c+\delta /\sigma ) +\Phi ( c) =2-\alpha $ or, equivalently, $\Phi ( c) -\Phi ( -c-\delta /\sigma ) =1-\alpha $, which is the second equality. Finally, we note that the first constraint with equality is exactly how we define $G(\delta /\sigma)$, completing the proof.
\end{proof}

\begin{lemma}\label{lem:C2_is_continuous}
The correspondence $C_2:\bar{\mathbb{R}}_{+}\times [\underline{\sigma },\overline{\sigma }]
\times [\underline{\sigma },\overline{\sigma }]\times [-1,1]
\rightrightarrows \bar{\mathbb{R}}^{2}$ in \eqref{eq:C2_defn} is continuous (i.e., both UHC and LHC) at the following values:
\begin{enumerate}[(a)]
\item $(\tilde{\delta},\tilde{\sigma}_{l},\tilde{\sigma}_{u},\tilde{\rho})
=(\infty,\sigma_l,\sigma_u,\rho)$ with
$(\sigma_l,\sigma_u,\rho)\in [\underline{\sigma},\overline{\sigma}]
\times [\underline{\sigma},\overline{\sigma}]\times [-1,1]$.
\item $(\tilde{\delta},\tilde{\sigma}_{l},\tilde{\sigma}_{u},\tilde{\rho})
=(\delta,\sigma,\sigma,1)$ with
$(\delta,\sigma)\in \bar{\mathbb{R}}_{+}\times [\underline{\sigma},\overline{\sigma}]$.
\end{enumerate}
\end{lemma}
\begin{proof} 
\noindent\underline{Part (a).} For any $(\sigma_l,\sigma_u,\rho)\in [\underline{\sigma},\overline{\sigma}]\times [\underline{\sigma},\overline{\sigma}]\times [-1,1]$, consider $(\tilde{\delta},\tilde{\sigma}_{l},\tilde{\sigma}_{u},\tilde{\rho})=(\infty,\sigma_l,\sigma_u,\rho)$.

We first show UHC. Let $\lim_{M\to \infty} (\delta_M,\sigma_{M,l},\sigma_{M,u},\rho_M)=(\infty,\sigma_l,\sigma_u,\rho)$ and $(c_{M,l},c_{M,u}) \in \\C_2(\delta_M,\sigma_{M,l},\sigma_{M,u},\rho_M)$ with $\lim_{M\to \infty}  (c_{M,l},c_{M,u})= (c_l,c_u)$. By \eqref{eq:C2_defn}, $(c_{M,l},c_{M,u})\in \\C_2(\delta_M,\sigma_{M,l},\sigma_{M,u},\rho_M)$ implies
\begin{align}
1-\alpha~&\leq~P(\{-c_{M,l}\leq z_1\}\cap \{\rho_M z_1\leq c_{M,u}+{\delta_M}/{\sigma_{M,u}}+\sqrt{1-\rho_M^2}z_2\})~\leq~1-\Phi(-c_{M,l})\notag \\
1-\alpha~&\leq~P(\{-c_{M,l}-{\delta_M}/{\sigma_{M,l}}+\sqrt{1-\rho_M^2}z_2\leq \rho_M z_1\}\cap\{z_1\leq c_{M,u}\})~\leq~\Phi(c_{M,u}).\label{eq:C_is_continuous_1}
\end{align}
Taking limits in both equations in \eqref{eq:C_is_continuous_1} as $M\to \infty$ yields $1-\Phi(-c_l)\geq 1-\alpha$ and $\Phi(c_u)\geq 1-\alpha$. By \eqref{eq:C2_defn}, this is equivalent to $(c_l,c_u)\in C_2(\infty,\sigma_l,\sigma_u,\rho)$, proving UHC.

We now show LHC. Let $\lim_{M\to \infty}  (\delta_M,\sigma_{M,l},\sigma_{M,u},\rho_M)=(\infty,\sigma_l,\sigma_u,\rho)$ and fix $(c_l,c_u)\in C_2(\infty,\sigma_l,\sigma_u,\rho)$. For each $k\in\mathbb N$, set $\tilde{c}_{k,l}=c_l+1/k$ and $\tilde{c}_{k,u}=c_u+1/k$. Then $\lim_{k\to \infty}  (\tilde{c}_{k,l},\tilde{c}_{k,u})=(c_l,c_u)$, $1-\Phi(-\tilde{c}_{k,l})>1-\alpha$, and $\Phi(\tilde{c}_{k,u})>1-\alpha$. Moreover, for each fixed $k\in\mathbb N$, $\lim_{M\to \infty}  (\delta_M,\sigma_{M,l},\sigma_{M,u},\rho_M)=(\infty,\sigma_l,\sigma_u,\rho)$ and the dominated convergence theorem give
\begin{equation*}
\begin{aligned}
&\lim_{M \to \infty} P\big(\{-\tilde{c}_{k,l}\leq z_1\}\cap \big\{\rho_M z_1\leq \tilde{c}_{k,u}+{\delta_M}/{\sigma_{M,u}} +\sqrt{1-\rho_M^2}z_2\big\}\big)~=~1-\Phi(-\tilde{c}_{k,l})~>~1-\alpha,\\
&\lim_{M \to \infty}  P\big(\big\{-\tilde{c}_{k,l}-{\delta_M}/{\sigma_{M,l}}+\sqrt{1-\rho_M^2}z_2\leq \rho_M z_1\big\}\cap\{z_1\leq \tilde{c}_{k,u}\}\big)~=~\Phi(\tilde{c}_{k,u})~>~1-\alpha,
\end{aligned}
\end{equation*}
where $(z_1,z_2)\sim\mathcal N({\bf 0}_{2\times 1},{\bf I}_{2\times 2})$. Hence, for each fixed $k\in \mathbb{N}$, there exists $M_k\in \mathbb{N}$ such that $(\tilde{c}_{k,l},\tilde{c}_{k,u})\in C_2(\delta_M,\sigma_{M,l},\sigma_{M,u},\rho_M)$ for all $M\geq M_k$. 
Without loss of generality, we can take $M_k \in \mathbb{N}$ strictly increasing in $k$. Then, for any $M \in \mathbb{N}$, proceed as follows. For $M <M_1$, set $(c_{M,l},c_{M,u})= (\infty,\infty)$. For $M \in [M_k,M_{k+1})$ with $k\in \mathbb{N}$, set $(c_{M,l},c_{M,u})=(\tilde{c}_{k,l},\tilde{c}_{k,u})$. By construction, this means that $(c_{M,l},c_{M,u}) \in C_2(\delta_M,\sigma_{M,l},\sigma_{M,u},\rho_M)$ for all $M \in \mathbb{N}$. Moreover, since $M_k$ is strictly increasing, the index $k$ corresponding to the interval $M\in [M_k,M_{k+1})$ satisfies $k\to\infty$ as $M\to\infty$. Therefore, as $\lim_{M\to \infty} (c_{M,l},c_{M,u})=\lim_{M\to \infty} (\tilde{c}_{k,l},\tilde{c}_{k,u})=(c_l,c_u)$. Thus, we have constructed a sequence $(c_{M,l},c_{M,u})$ satisfying  $(c_{M,l},c_{M,u}) \in C_2(\delta_M,\sigma_{M,l},\sigma_{M,u},\rho_M)$ with $\lim_{M\to \infty} (c_{M,l},c_{M,u})=(c_l,c_u)$, which proves LHC.

\noindent\underline{Part (b).} For any $(\delta,\sigma)\in \bar{\mathbb R}_{+}\times[\underline{\sigma},\overline{\sigma}]$, consider $(\tilde{\delta},\tilde{\sigma}_{l},\tilde{\sigma}_{u},\tilde{\rho})=(\delta,\sigma,\sigma,1)$. If $\delta=\infty$, the result follows from part (a), so suppose that $\delta<\infty$.

We first show UHC. Let $\lim_{M\to \infty}  (\delta_M,\sigma_{M,l},\sigma_{M,u},\rho_M) = (\delta,\sigma,\sigma,1)$ and $(c_{M,l},c_{M,u})\in \\C_2(\delta_M,\sigma_{M,l},\sigma_{M,u},\rho_M)$ with $\lim_{M\to \infty}  (c_{M,l},c_{M,u})=(c_l,c_u)$. Since $\delta<\infty$, we have $\delta_M<\infty$ for all sufficiently large $M$. By \eqref{eq:C2_defn}, $(c_{M,l},c_{M,u})\in C_2(\delta_M,\sigma_{M,l},\sigma_{M,u},\rho_M)$ implies that
\begin{align}
&P\big(\{-c_{M,l}\leq z_1\}\cap \big\{\rho_M z_1\leq c_{M,u}+{\delta_M}/{\sigma_{M,u}}+\sqrt{1-\rho_M^2}z_2\big\}\big)~\geq~1-\alpha,\notag\\
&P\big(\big\{-c_{M,l}-{\delta_M}/{\sigma_{M,l}}+\sqrt{1-\rho_M^2}z_2\leq \rho_M z_1\big\}\cap\{z_1\leq c_{M,u}\}\big)~\geq~1-\alpha .\label{eq:C_is_continuous_10}
\end{align}
Moreover, $\lim_{M\to \infty}  (\delta_M,\sigma_{M,l},\sigma_{M,u},\rho_M)=(\delta,\sigma,\sigma,1)$ and the dominated convergence theorem give
\begin{align}
&\lim_{M \to \infty} P\big(\{-c_{M,l}\leq z_1\}\cap \big\{\rho_M z_1\leq c_{M,u}+{\delta_M}/{\sigma_{M,u}} +\sqrt{1-\rho_M^2}z_2\big\}\big)~=~\Phi(c_u+{\delta}/{\sigma})-\Phi(-c_l),\notag\\
&\lim_{M \to \infty}  P\big(\big\{-c_{M,l}-{\delta_M}/{\sigma_{M,l}}+\sqrt{1-\rho_M^2}z_2\leq \rho_M z_1\big\}\cap\{z_1\leq c_{M,u}\}\big)~=~\Phi(c_u)-\Phi(-c_l-{\delta}/{\sigma}),\label{eq:C_is_continuous_11}
\end{align}
where $(z_1,z_2)\sim\mathcal N({\bf 0}_{2\times 1},{\bf I}_{2\times 2})$. Combining \eqref{eq:C_is_continuous_10} and \eqref{eq:C_is_continuous_11} implies that $\Phi(c_u+{\delta}/{\sigma})-\Phi(-c_l)\geq 1-\alpha$ and $\Phi(c_u)-\Phi(-c_l-{\delta}/{\sigma})\geq 1-\alpha$. By \eqref{eq:C2_defn}, this is equivalent to $(c_l,c_u)\in C_2(\delta,\sigma,\sigma,1)$, proving UHC.

We now show LHC. Let $\lim_{M\to \infty}  (\delta_M,\sigma_{M,l},\sigma_{M,u},\rho_M) = (\delta,\sigma,\sigma,1)$ and fix $(c_l,c_u)\in C_2(\delta,\sigma,\sigma,1)$. For each $k\in\mathbb N$, let $\tilde{c}_{k,l}=c_l+1/k$ and $\tilde{c}_{k,u}=c_u+1/k$. Then $\lim_{k\to \infty}  (\tilde{c}_{k,l},\tilde{c}_{k,u})=(c_l,c_u)$, $\Phi(\tilde{c}_{k,u}+\tfrac{\delta}{\sigma})-\Phi(-\tilde{c}_{k,l})>1-\alpha$, and $\Phi(\tilde{c}_{k,u})-\Phi(-\tilde{c}_{k,l}-\tfrac{\delta}{\sigma})>1-\alpha$. Moreover, for each fixed $k\in\mathbb N$, $\lim_{M\to \infty}  (\delta_M,\sigma_{M,l},\sigma_{M,u},\rho_M)= (\delta,\sigma,\sigma,1)$ and the dominated convergence theorem give
\begin{equation*}
\begin{aligned}
&\lim_{M \to \infty} P\big(\{-\tilde{c}_{k,l}\leq z_1\}\cap \big\{\rho_M z_1\leq \tilde{c}_{k,u}+\tfrac{\delta_M}{\sigma_{M,u}} +\sqrt{1-\rho_M^2}z_2\big\}\big)~=~\Phi(\tilde{c}_{k,u}+\tfrac{\delta}{\sigma})-\Phi(-\tilde{c}_{k,l})~>1-\alpha,\\
&\lim_{M \to \infty}  P\big(\big\{-\tilde{c}_{k,l}-\tfrac{\delta_M}{\sigma_{M,l}}+\sqrt{1-\rho_M^2}z_2\leq \rho_M z_1\big\}\cap\{z_1\leq \tilde{c}_{k,u}\}\big)~=~\Phi(\tilde{c}_{k,u})-\Phi(-\tilde{c}_{k,l}-\tfrac{\delta}{\sigma})~>1-\alpha,
\end{aligned}
\end{equation*}
where $(z_1,z_2)\sim\mathcal N({\bf 0}_{2\times 1},{\bf I}_{2\times 2})$. Hence, for each fixed $k\in \mathbb{N}$, there exists $M_k\in \mathbb{N}$ such that $(\tilde{c}_{k,l},\tilde{c}_{k,u})\in C_2(\delta_M,\sigma_{M,l},\sigma_{M,u},\rho_M)$ for all $M\geq M_k$. Without loss of generality, we can take $M_k \in \mathbb{N}$ strictly increasing in $k$. Then, proceed as follows for every $M \in \mathbb{N}$. For $M <M_1$, set $(c_{M,l},c_{M,u})= (\infty,\infty)$. For $M \in [M_k,M_{k+1})$ with $k\in \mathbb{N}$, set $(c_{M,l},c_{M,u})=(\tilde{c}_{k,l},\tilde{c}_{k,u})$. By construction, this means that $(c_{M,l},c_{M,u}) \in C_2(\delta_M,\sigma_{M,l},\sigma_{M,u},\rho_M)$ for all $M \in \mathbb{N}$. Moreover, since $M_k$ is strictly increasing, the index $k$ corresponding to the interval $M\in [M_k,M_{k+1})$ satisfies $k\to\infty$ as $M\to\infty$. Therefore, $\lim_{M\to \infty} (c_{M,l},c_{M,u})= \lim_{M\to \infty} (\tilde{c}_{M_k,l},\tilde{c}_{M_k,u})=(c_l,c_u)$. Thus, we have constructed a sequence $(c_{M,l},c_{M,u})\in C_2(\delta_M,\sigma_{M,l},\sigma_{M,u},\rho_M)$ with $\lim_{M\to \infty}  (c_{M,l},c_{M,u})=(c_l,c_u)$, which proves LHC.
\end{proof}

\begin{lemma}\label{lem:C_is_UHC_3} 
The correspondence $C_3:\bar{\mathbb{R}}_{+}\times \bar{ \mathbb{R}}\times [\underline{\sigma },\overline{ \sigma }]\times [ \underline{\sigma },\overline{\sigma }]\times [-1,1]  \rightrightarrows {\bar{\mathbb{R}} ^{2}}$ in \eqref{eq:C3_defn} is continuous (i.e., UHC and LHC) at the following values:
\begin{enumerate}[(a)]
\item $(\tilde{\delta}_{1},\tilde{\delta}_{2},\tilde{\sigma}_{l},\tilde{ \sigma}_{u},\tilde{\rho})=(\infty ,\eta ,\sigma_l ,\sigma_u ,\rho)$ with $(\eta ,\sigma_l, \sigma_u, \rho )\in (\bar{\mathbb{R}}\backslash \{0\})\times [\underline{\sigma }, \overline{\sigma }] \times [\underline{\sigma }, \overline{\sigma }] \times [-1, 1].$
\item $(\tilde{\delta}_{1},\tilde{\delta}_{2},\tilde{\sigma}_{l},\tilde{ \sigma}_{u},\tilde{\rho})=(\delta ,-\infty ,\sigma ,\sigma ,1)$ with $ (\delta ,\sigma )\in \bar{\mathbb{R}}_{+}\times [\underline{\sigma }, \overline{\sigma }].$
\end{enumerate}
\end{lemma}
\begin{proof}
By \eqref{eq:C2_defn} and \eqref{eq:C3_defn}, $C_3(\delta_1,\delta_2,\sigma_l,\sigma_u,\rho)
= C_2(\delta_1 1[\delta_2>0],\sigma_l,\sigma_u,\rho)$ for all $(\delta_1,\delta_2,\sigma_l,\sigma_u,\rho)$. We divide the rest of the argument into parts.

\noindent\underline{Part (a).} Consider
$(\tilde{\delta}_{1},\tilde{\delta}_{2},\tilde{\sigma}_{l},\tilde{\sigma}_{u},\tilde{\rho})=(\infty,\eta,\sigma_l,\sigma_u,\rho)$ with
$(\eta,\sigma_l,\sigma_u,\rho)\in(\bar{\mathbb R}\setminus\{0\})\times[\underline{\sigma},\overline{\sigma}]\times[\underline{\sigma},\overline{\sigma}]\times[-1,1]$.
Let $\{(\delta_{M,1},\delta_{M,2},\sigma_{M,l},\sigma_{M,u},\rho_M)\}_{M \in  \mathbb{N}}$ be a convergence sequence that satisfies $ (\delta_{M,1},\delta_{M,2},\sigma_{M,l},\sigma_{M,u},\rho_M)
\to (\infty,\eta,\sigma_l,\sigma_u,\rho)$ as $M\to\infty$.

First, suppose that $\eta>0$. Then $\delta_{M,2}>0$ for all sufficiently large $M$, and so $\delta_{M,1}1[\delta_{M,2}>0]=\delta_{M,1}\to\infty $. 
Therefore, for all sufficiently large $M$, $C_3(\delta_{M,1},\delta_{M,2},\sigma_{M,l},\sigma_{M,u},\rho_M)=  C_2(\delta_{M,1},\sigma_{M,l},\sigma_{M,u},\rho_M)$,
and $C_3(\infty,\eta,\sigma_l,\sigma_u,\rho)= C_2(\infty,\sigma_l,\sigma_u,\rho)$. 
Thus, the UHC and LHC of $C_3$ follow from Part (a) of Lemma \ref{lem:C2_is_continuous}.

Now suppose that $\eta<0$. Then $\delta_{M,2}\leq 0$ for all sufficiently large $M$, and so $\delta_{M,1}1[\delta_{M,2}>0]=0$. Hence, for all sufficiently large $M$, $C_3(\delta_{M,1},\delta_{M,2},\sigma_{M,l},\sigma_{M,u},\rho_M)=C_2(0,\sigma_{M,l},\sigma_{M,u},\rho_M)$ and $C_3(\infty,\eta,\sigma_l,\sigma_u,\rho)
=C_2(0,\sigma_l,\sigma_u,\rho).$

We first show UHC in this case. Let
$(c_{M,l},c_{M,u})\in C_3(\delta_{M,1},\delta_{M,2},\sigma_{M,l},\sigma_{M,u},\rho_M)$ with
\\$\lim_{M\to \infty}  (c_{M,l},c_{M,u}) = (c_l,c_u)$. By \eqref{eq:C3_defn}, membership implies that, for all sufficiently large $M$,
\begin{align}
&P\big(\{-c_{M,l}\leq z_1\}\cap\{\rho_M z_1\leq c_{M,u}+\sqrt{1-\rho_M^2}z_2\}\big)~\geq~1-\alpha,\notag\\
&P\big(\{-c_{M,l}+\sqrt{1-\rho_M^2}z_2\leq \rho_M z_1\}\cap\{z_1\leq c_{M,u}\}\big)~\geq~1-\alpha .\label{eq:C3_cont_eta_neg_1}
\end{align}
Moreover, the dominated convergence theorem gives
\begin{align}
&\lim_{M\to\infty}P\big(\{-c_{M,l}\leq z_1\}\cap\{\rho_M z_1\leq c_{M,u}+\sqrt{1-\rho_M^2}z_2\}\big)~=~P\big(\{-c_l\leq z_1\}\cap\{\rho z_1\leq c_u+\sqrt{1-\rho^2}z_2\}\big),\notag\\
&\lim_{M\to\infty}P\big(\{-c_{M,l}+\sqrt{1-\rho_M^2}z_2\leq \rho_M z_1\}\cap\{z_1\leq c_{M,u}\}\big)~=~P\big(\{-c_l+\sqrt{1-\rho^2}z_2\leq \rho z_1\}\cap\{z_1\leq c_u\}\big),\label{eq:C3_cont_eta_neg_2}
\end{align}
where $(z_1,z_2)\sim\mathcal N({\bf 0}_{2\times 1},{\bf I}_{2\times 2})$. Combining \eqref{eq:C3_cont_eta_neg_1} and \eqref{eq:C3_cont_eta_neg_2} implies that
$(c_l,c_u)\in C_2(0,\sigma_l,\sigma_u,\rho)$. Since
$C_3(\infty,\eta,\sigma_l,\sigma_u,\rho)=C_2(0,\sigma_l,\sigma_u,\rho)$, this proves UHC.

We now show LHC in the case with $\eta<0$. Fix
$(c_l,c_u)\in C_3(\infty,\eta,\sigma_l,\sigma_u,\rho)=C_2(0,\sigma_l,\sigma_u,\rho)$. For each $k\in\mathbb N$, set
$\tilde c_{k,l}=c_l+1/k$ and $\tilde c_{k,u}=c_u+1/k$. Then
$\lim_{k\to \infty}  (\tilde c_{k,l},\tilde c_{k,u})=(c_l,c_u)$, and the two defining inequalities for $C_2(0,\sigma_l,\sigma_u,\rho)$ hold strictly at $(\tilde c_{k,l},\tilde c_{k,u})$. Moreover, for each fixed $k\in\mathbb N$, the dominated convergence theorem gives
\begin{align*}
&\lim_{M\to\infty}P\big(\{-\tilde c_{k,l}\leq z_1\}\cap\{\rho_M z_1\leq \tilde c_{k,u}+\sqrt{1-\rho_M^2}z_2\}\big)\\
&~=~P\big(\{-\tilde c_{k,l}\leq z_1\}\cap\{\rho z_1\leq \tilde c_{k,u}+\sqrt{1-\rho^2}z_2\}\big)~>~1-\alpha,
\end{align*}
and 
\begin{align*}
&\lim_{M\to\infty}P\big(\{-\tilde c_{k,l}+\sqrt{1-\rho_M^2}z_2\leq \rho_M z_1\}\cap\{z_1\leq \tilde c_{k,u}\}\big)\\
&~=~P\big(\{-\tilde c_{k,l}+\sqrt{1-\rho^2}z_2\leq \rho z_1\}\cap\{z_1\leq \tilde c_{k,u}\}\big)~>~1-\alpha.
\end{align*}
Hence, for each fixed $k\in\mathbb N$, there exists $M_k\in\mathbb N$ such that
$(\tilde c_{k,l},\tilde c_{k,u})\in C_3(\delta_{M,1},\delta_{M,2},\sigma_{M,l},\sigma_{M,u},\rho_M)$ for all $M\geq M_k$. Without loss of generality, we can take $M_k\in\mathbb N$ strictly increasing in $k$. For $M<M_1$, set $(c_{M,l},c_{M,u})=(\infty,\infty)$. For $M\in[M_k,M_{k+1})$ with $k\in\mathbb N$, set
$(c_{M,l},c_{M,u})=(\tilde c_{k,l},\tilde c_{k,u})$. By construction, $(c_{M,l},c_{M,u})\in C_3(\delta_{M,1},\delta_{M,2},\sigma_{M,l},\sigma_{M,u},\rho_M)$ for all $M\in\mathbb N$. Moreover, since $M_k$ is strictly increasing, the index $k$ corresponding to the interval $M\in[M_k,M_{k+1})$ satisfies $k\to\infty$ as $M\to\infty$. Therefore, $\lim_{M\to \infty} (c_{M,l},c_{M,u})=(\tilde c_{M_k,l},\tilde c_{M_k,u}) = (c_l,c_u)$. This proves LHC, completing the proof of part (a).

\noindent\underline{Part (b).} Consider
$(\tilde{\delta}_{1},\tilde{\delta}_{2},\tilde{\sigma}_{l},\tilde{\sigma}_{u},\tilde{\rho})=(\delta,-\infty,\sigma,\sigma,1)$ with
$(\delta,\sigma)\in\bar{\mathbb R}_{+}\times[\underline{\sigma},\overline{\sigma}]$.
Let \\
$\lim_{M\to \infty}  (\delta_{M,1},\delta_{M,2},\sigma_{M,l},\sigma_{M,u},\rho_M) = (\delta,-\infty,\sigma,\sigma,1)$. Since $\delta_{M,2}\leq 0$ for all sufficiently large $M$, $\delta_{M,1}1[\delta_{M,2}>0]=0$ for all sufficiently large $M$. Therefore, for all sufficiently large $M$,
$C_3(\delta_{M,1},\delta_{M,2},\sigma_{M,l},\sigma_{M,u},\rho_M)=C_2(0,\sigma_{M,l},\sigma_{M,u},\rho_M)$ and $C_3(\delta,-\infty,\sigma,\sigma,1)=C_2(0,\sigma,\sigma,1)$. The UHC and LHC of $C_3$ at this point therefore follow directly from Part (b) of Lemma \ref{lem:C2_is_continuous}, applied with $\delta=0$.
\end{proof}

\begin{lemma}\label{lem:Berge}
If the correspondence $C_2:\bar{\mathbb{R}}_{+}\times [\underline{\sigma },\overline{ \sigma }]\times [\underline{\sigma },\overline{\sigma }]\times [-1,1] \rightrightarrows {\bar{\mathbb{R}}^{2}}$ in \eqref{eq:C2_defn} is continuous at $(\tilde{\delta},\tilde{\sigma}_{l},\tilde{ \sigma}_{u},\tilde{\rho})$ and the correspondence $S_2:\bar{\mathbb{R}}_{+}\times [ \underline{\sigma },\overline{\sigma }]\times [  \underline{\sigma },\overline{\sigma }]\times [ -1,1]  \rightrightarrows {\bar{\mathbb{R}}^{2}}$ in \eqref{eq:S2_defn} is a singleton at $(\tilde{\delta}, \tilde{\sigma}_{l},\tilde{\sigma}_{u},\tilde{\rho})$, the function $F_2:\bar{\mathbb{R}}_{+}\times [ \underline{\sigma },\overline{\sigma }]\times [ \underline{\sigma },\overline{\sigma }]\times [ -1,1]  \to {\bar{\mathbb{R}}^{2}}$ that arbitrarily selects a solution from $S_2$ is continuous at $(\tilde{\delta}, \tilde{\sigma}_{l},\tilde{\sigma}_{u},\tilde{\rho})$.
\end{lemma}
\begin{proof}
This result follows by restricting the second coordinate of the argument in $C_3$, $S_3$, and $F_3$ in Lemma \ref{lem:Berge2} to $(0,\infty]$.
\end{proof}

\begin{lemma}\label{lem:Berge2}
If the correspondence $C_3:\bar{\mathbb{R}}_{+}\times \bar{ \mathbb{R}}\times [\underline{\sigma },\overline{ \sigma }]\times [ \underline{\sigma },\overline{\sigma }]\times [-1,1]  \rightrightarrows {\bar{\mathbb{R}} ^{2}}$ in \eqref{eq:C3_defn} is continuous at $(\tilde{\delta}_{1},\tilde{\delta}_{2},\tilde{\sigma}_{l},\tilde{\sigma}_{u},\tilde{\rho})$ and the correspondence $S_3:\bar{\mathbb{R}}_{+}\times \bar{\mathbb{R}}\times [ \underline{\sigma },\overline{\sigma }]\times [ \underline{\sigma },\overline{\sigma }]\times [-1,1] \rightrightarrows \bar{\mathbb{R}}^{2}$ in \eqref{eq:S3_defn} is a singleton at $(\tilde{\delta}_{1},\tilde{\delta}_{2},\tilde{\sigma}_{l},\tilde{\sigma}_{u},\tilde{\rho})$, the function $F_3:\bar{\mathbb{R}}_{+}\times \bar{\mathbb{R}}\times [ \underline{\sigma },\overline{\sigma }]\times [ \underline{\sigma },\overline{\sigma }]\times [-1,1]  \to \bar{\mathbb{R}}^{2}$  that arbitrarily selects a solution from $S_3$ is continuous at $(\tilde{\delta}_{1},\tilde{\delta}_{2},\tilde{\sigma}_{l},\tilde{\sigma}_{u},\tilde{\rho})$.
\end{lemma}
\begin{proof}
We divide the proof into two parts.

\noindent \underline{Part 1.} Show that $S_3$ is UHC at $(\tilde{\delta}_{1},\tilde{\delta}_{2},\tilde{\sigma}_{l},\tilde{ \sigma}_{u},\tilde{\rho})$.

We prove this by contradiction: we assume that there is $\{(\delta _{M,1},\delta _{M,2},\sigma _{M,l},\sigma _{M,u},\rho _{M})\}_{M \in \mathbb{N}}$ that satisfies $( \delta _{M,1}, \delta _{M,2}, \sigma _{M,l}, \sigma _{M,u}, \rho _{M}) \to (\tilde{\delta}_{1},\tilde{\delta}_{2},\tilde{\sigma}_{l},\tilde{\sigma}_{u},\tilde{\rho})$ as $M\to\infty$, and $\{(c_{M,l},c_{M,u})\}_{M \in \mathbb{N}}$ with $(c_{M,l},c_{M,u})\in S_3(\delta _{M,1},\delta _{M,2},\sigma _{M,l},\sigma _{M,u},\rho _{M})$ for all $M\in \mathbb{N}$ with $(c_{M,l},c_{M,u}) \to (c_{l}^{\ast },c_{u}^{\ast })$ as $M\to\infty$, and $(c_{l}^{\ast },c_{u}^{\ast })\not\in S_3(\tilde{\delta}_{1},\tilde{\delta}_{2},\tilde{\sigma}_{l},\tilde{\sigma}_{u},\tilde{\rho})$. Since $S_3(\tilde{\delta}_{1},\tilde{\delta}_{2},\tilde{\sigma}_{l},\tilde{\sigma}_{u},\tilde{\rho})$ is a singleton, we have $(c_{l}^{\ast },c_{u}^{\ast })\neq (\tilde{c}_{l},\tilde{c}_{u})$. Since $S_3(\tilde{\delta}_{1},\tilde{\delta}_{2},\tilde{\sigma}_{l},\tilde{\sigma}_{u},\tilde{\rho})=\{(\tilde{c}_{l},\tilde{c}_{u})\}$, we have that $(\tilde{c}_{l},\tilde{c}_{u})\in C_3(\tilde{\delta}_{1},\tilde{\delta}_{2},\tilde{\sigma}_{l},\tilde{\sigma}_{u},\tilde{\rho})$. Since $C_3$ is LHC at $(\tilde{\delta}_{1},\tilde{\delta}_{2},\tilde{\sigma}_{l},\tilde{\sigma}_{u},\tilde{\rho})$, we can find a subsequence $\{k_{M}\}_{M \in \mathbb{N}}$ s.t.\ $\{(\delta _{k_{M},1},\delta _{k_{M},2},\sigma _{k_{M},l},\sigma _{k_{M},u},\rho _{k_{M}})\}_{M\geq 1}$ with $ (\delta _{k_{M},1},\delta _{k_{M},2},\sigma _{k_{M},l},\sigma _{k_{M},u},\rho _{k_{M}}) \to(\tilde{\delta}_{1},\tilde{\delta}_{2},\tilde{\sigma}_{l},\tilde{\sigma}_{u},\tilde{\rho})$ as $M\to\infty$, and $\{(\tilde{c}_{k_{M},l},\tilde{c}_{k_{M},u})\}_{M \in \mathbb{N}}$ with $(\tilde{c}_{k_{M},l},\tilde{c}_{k_{M},u})\in C_3(\delta _{k_{M},1},\delta _{k_{M},2},\sigma _{k_{M},l},\sigma _{k_{M},u},\rho _{k_{M}})$ for all $M\in \mathbb{N}$ with $ (\tilde{c}_{k_{M},l},\tilde{c}_{k_{M},u}) \to (\tilde{c}_{l},\tilde{c}_{u})$ as $M\to\infty$. We focus on this subsequence for the remainder of the part. Along this subsequence, we have
\begin{equation}
\lim_{M\to\infty} \sigma _{k_{M},l}c_{k_{M},l}+\sigma _{k_{M},u}c_{k_{M},u}~  = ~\tilde{\sigma}_{l}c_{l}^{\ast }+\tilde{\sigma}_{u}c_{u}^{\ast }.
\label{eq:Berge1}
\end{equation}
Since $(c_{k_{M},l},c_{k_{M},u})\in S_3(\delta _{k_{M},1},\delta _{k_{M},2},\sigma _{k_{M},l},\sigma _{k_{M},u},\rho _{k_{M}})\subseteq C_3(\delta _{k_{M},1},\delta _{k_{M},2},\sigma _{k_{M},l},\sigma _{k_{M},u},\rho _{k_{M}})$ for all $M\in \mathbb{N}$, $(c_{k_{M},l},c_{k_{M},u})\in C_3(\delta _{k_{M},1},\delta _{k_{M},2},\sigma _{k_{M},l},\sigma _{k_{M},u},\rho _{k_{M}})$ for all $M\in \mathbb{N}$. Thus, we have that $\{(\delta _{k_{M},1},\delta _{k_{M},2},\sigma _{k_{M},l},\sigma _{k_{M},u},\rho _{k_{M}})\}_{M\geq 1}$ with $\lim_{M\to\infty} (\delta _{k_{M},1},\delta _{k_{M},2},\sigma _{k_{M},l},\sigma _{k_{M},u},\rho _{k_{M}})  = (\tilde{\delta}_{1},\tilde{\delta}_{2},\tilde{\sigma}_{l},\tilde{\sigma}_{u},\tilde{\rho})$ and $\{(c_{k_{M},l},c_{k_{M},u})\}_{M\geq 1}$ with $(c_{k_{M},l},c_{k_{M},u})\in C_3(\delta _{k_{M},1},\delta _{k_{M},2},\sigma _{k_{M},l},\sigma _{k_{M},u},\rho _{k_{M}})$ for all $M\in \mathbb{N}$ with $\lim_{M\to\infty} (c_{k_{M},l},c_{k_{M},u}) =(c_{l}^{\ast },c_{u}^{\ast })$. This implies that 
\begin{equation}
\lim_{M\to\infty}\sigma _{k_{M},l}\tilde{c}_{k_{M},l}+\sigma _{k_{M},u}\tilde{c} _{k_{M},u}~ =~\tilde{\sigma}_{l}\tilde{c}_{l}+\tilde{\sigma}_{u} \tilde{c}_{u}.  \label{eq:Berge2}
\end{equation}
Since $C_3$ is UHC at $(\tilde{\delta}_{1},\tilde{\delta}_{2},\tilde{\sigma}_{l},\tilde{\sigma}_{u},\tilde{\rho})$, $(c_{l}^{\ast },c_{u}^{\ast })\in C_3(\tilde{\delta}_{1},\tilde{\delta}_{2},\tilde{\sigma}_{l},\tilde{\sigma}_{u},\tilde{\rho})$. By $(c_{l}^{\ast },c_{u}^{\ast })\in C_3(\tilde{\delta}_{1},\tilde{\delta}_{2},\tilde{\sigma}_{l},\tilde{\sigma}_{u},\tilde{\rho})$ and $(c_{l}^{\ast },c_{u}^{\ast })\not\in S_3(\tilde{\delta}_{1},\tilde{\delta}_{2},\tilde{\sigma}_{l},\tilde{\sigma}_{u},\tilde{\rho})=\{(\tilde{c}_{l},\tilde{c}_{u})\}$, we deduce that
\begin{equation}
\tilde{\sigma}_{l}c_{l}^{\ast }+\tilde{\sigma}_{u}c_{u}^{\ast }~>~\tilde{\sigma}_{l}\tilde{c}_{l}+\tilde{\sigma}_{u}\tilde{c}_{u}.  \label{eq:Berge3}
\end{equation}
By \eqref{eq:Berge1}, \eqref{eq:Berge2}, and \eqref{eq:Berge3}, we deduce that $\exists \bar{M}$ s.t.\ 
\begin{equation}
\sigma _{k_{M},l}c_{k_{M},l}+\sigma _{k_{M},u}c_{k_{M},u}~>~\sigma _{k_{M},l}\tilde{c}_{k_{M},l}+\sigma _{k_{M},u}\tilde{c}_{k_{M},u}~~~\text{for all}~M\geq \bar{M}.  \label{eq:Berge4}
\end{equation}
However, \eqref{eq:Berge4} contradicts $(c_{k_{M},l},c_{k_{M},u}) \in S_3(\delta _{k_{M},1},\delta _{k_{M},2},\sigma _{k_{M},l},\sigma _{k_{M},u},\rho _{k_{M}})$ and $(\tilde{c}_{k_{M},l},\tilde{c}_{k_{M},u})\in C_3(\delta _{k_{M},1},\delta _{k_{M},2},\sigma _{k_{M},l},\sigma _{k_{M},u},\rho _{k_{M}})$ for all $M\in \mathbb{N}$. 

\noindent \underline{Part 2.} Show that $F_3:\bar{\mathbb{R}}_{+}\times \bar{\mathbb{R}} \times [ \underline{\sigma },\overline{\sigma }]\times [ \underline{ \sigma },\overline{\sigma }]\times [ -1,1]  \to {\bar{\mathbb{R}} ^{2}}$ is continuous at $(\tilde{\delta}_{1},\tilde{\delta}_{2},\tilde{\sigma }_{l},\tilde{\sigma}_{u},\tilde{\rho})$.

We prove this by contradiction: we assume $\{(\delta _{k_{M},1},\delta _{k_{M},2},\sigma _{k_{M},l},\sigma _{k_{M},u},\rho _{k_{M}})\}_{M\in \mathbb{N}}$ satisfies $ (\delta _{k_{M},1},\delta _{k_{M},2},\sigma _{k_{M},l},\sigma _{k_{M},u},\rho _{k_{M}}) \to (\tilde{\delta}_{1},\tilde{\delta}_{2},\tilde{\sigma}_{l},\tilde{\sigma}_{u},\tilde{\rho})$ as $M\to\infty$, but the sequence $\{F_3(\delta _{k_{M},1},\delta _{k_{M},2},\sigma _{k_{M},l},\sigma _{k_{M},u},\rho _{k_{M}})\}$ does not converge to $F_3(\tilde{\delta}_{1},\tilde{\delta}_{2},\tilde{\sigma}_{l},\tilde{\sigma}_{u},\tilde{\rho})=(\tilde c_l,\tilde c_u)$ as $M\to\infty$. By this and the compactness of $\bar{\mathbb R}^{2}$, it follows that there is a subsequence $\{a_M\}_{M\in\mathbb{N}}$ of $\{k_M\}_{M\in\mathbb{N}}$ s.t.\ $F_3(\delta _{a_{M},1},\delta _{a_{M},2},\sigma _{a_{M},l},\sigma _{a_{M},u},\rho _{a_{M}}) \to (c_l^*,c_u^*)\neq (\tilde c_l,\tilde c_u)$ as $M\to\infty$. 
Since $F_3$ selects from $S_3$, $(c_l^*,c_u^*)$ is the limit of a sequence with each element in $S_3(\delta _{a_{M},1},\delta _{a_{M},2},\sigma _{a_{M},l},\sigma _{a_{M},u},\rho _{a_{M}})$.

By part 1, $S_3$ is UHC at $(\tilde{\delta}_{1},\tilde{\delta}_{2},\tilde{\sigma}_{l},\tilde{\sigma}_{u},\tilde{\rho})$. This implies that $(c_{l}^{\ast },c_{u}^{\ast })\in S_3(\tilde{\delta}_{1},\tilde{\delta}_{2},\tilde{\sigma}_{l},\tilde{\sigma}_{u},\tilde{\rho})$. Since $S_3(\tilde{\delta}_{1},\tilde{\delta}_{2},\tilde{\sigma}_{l},\tilde{\sigma}_{u},\tilde{\rho})=\{(\tilde{c}_{l},\tilde{c}_{u})\}$, we conclude that $(c_{l}^{\ast },c_{u}^{\ast })=(\tilde{c}_{l},\tilde{c}_{u})$, which is a contradiction.
\end{proof}

\begin{lemma}\label{lem:C3NotEmpty}
If $(\hat{\theta}_{l},\hat{\theta} _{u},\hat{\sigma}_{l},\hat{\sigma}_{u},\hat{\rho})$ satisfies OBS and $\alpha  \in (0, 0.5)$,
\begin{equation}
P\big( ~CI_{\alpha }^{3}=[ \hat{\theta}_{l}-{\hat{\sigma}_{l}c_{l}^{3}}/{\sqrt{N}},~\hat{\theta}_{u}+{\hat{\sigma}_{u}c_{u}^{3}}/{\sqrt{N}}] ~\big) ~=~1.\label{eq:nonEmpty}
\end{equation}
\end{lemma}
\begin{proof}
We first show that $(c_{l}^{3},c_{u}^{3})$ satisfies $c_{l}^{3},c_{u}^{3}\geq 0$. We only show $c_{l}^{3}\geq 0$, as the proof of $c_{u}^{3}\geq 0$ is analogous. Suppose that $c_{l}^{3}\geq 0$ fails, i.e., $c_{l}^{3}<0$. Then,
\begin{align}
0.5 &~\overset{(1)}{>}~P( -c_{l}^{3}\leq z_{1})   \notag \\
&~\geq ~ P\left( \{ -c_{l}^{3}\leq z_{1}\} \cap \left\{ \hat{\rho}z_{1}\leq c_{u}^{3}+\tfrac{\sqrt{N}(\hat{\theta}_{u}-\hat{\theta}_{l})1[(\hat{\theta}_{u}-\hat{\theta}_{l})>b_{N}]}{\hat{\sigma}_{u}}+\sqrt{1-\hat{\rho}^{2}}z_{2}\right\} \left|\hat{\theta}_{l},\hat{\theta}_{u},\hat{\sigma}_{l},\hat{\sigma}_{u},\hat{\rho}\right)\right.   \notag \\
&~\overset{(2)}{\geq}~ 1-\alpha ,\label{eq:nonEmpty2}
\end{align}
where (1) holds by $c_{l}^{3}<0$ and $z_{1}\sim \mathcal{N}(0,1)$, and (2) by \eqref{eq:CI3_problem}. Note that \eqref{eq:nonEmpty2} contradicts $\alpha <0.5$, which proves the desired result.

Since $(\hat{\theta}_{l},\hat{\theta} _{u},\hat{\sigma}_{l},\hat{\sigma}_{u},\hat{\rho})$ satisfies OBS, we have that $P(\hat{\theta}_{l}\leq \hat{\theta}_{u})=1$ and $\hat{\sigma}_{l},\hat{\sigma}_{u} \geq 0$. If we combine these with $c_{l}^{3},c_{u}^{3}\geq 0$, we get
\begin{equation}
P( \hat{\theta}_{l}-{\hat{\sigma}_{l}c_{l}^{3}}/{\sqrt{N}}\leq \hat{\theta}_{u}+{\hat{\sigma}_{u}c_{u}^{3}}/{\sqrt{N}}) ~=~1.\label{eq:nonEmpty3}
\end{equation}
To conclude the proof, note that \eqref{eq:CI3} and \eqref{eq:nonEmpty3} imply \eqref{eq:nonEmpty}, as desired.
\end{proof}

\subsection{An example that does not satisfy OBS}\label{sec:OBSfails}

\begin{example} 
\label{ex:NoOBS_Ex}
Consider a linear regression model with missing outcome data. In this example, $\{ ( Y_{i},Z_{i},X_{i}) \} _{i=1}^{N}$ are i.i.d.\ from a distribution $P$, where $Y_i\geq 0$ is a nonnegative outcome variable, $X_i$ is a scalar regressor with support $S_X=\{-1,1\}$, and $Z_i\in\{0,1\}$ is a binary variable that indicates whether $Y_i$ is observed ($Z_i=1$) or not ($Z_i=0$). Assume that these variables satisfy the following linear regression model:
\begin{equation}
Y_i~=~1+X_i\theta+\varepsilon_i~~\text{with}~E_P[\varepsilon_i|X_i]~=~0,
\label{eq:NoOBS_Ex0}
\end{equation}
where $\theta\in\Theta=\mathbb R$ is the parameter of interest and $\varepsilon_i\in\mathbb R$ is the unobserved regression residual. The data are not assumed to be missing at random, i.e., $E_P[\varepsilon_i|X_i,Z_i]$ is not necessarily zero.

In this framework, $Y_i$ is only observed when $Z_i=1$. Thus, the available data are the i.i.d.\ sample $\{(Y_iZ_i,Z_i,X_i)\}_{i=1}^N$. The econometric model implies the following moment inequalities:
\begin{equation}
1+x\theta
~\overset{(1)}{=}~
E_P[Y_iZ_i+(1-Z_i)Y_i | X_i=x]
~\overset{(2)}{\geq}~
E_P[Y_iZ_i | X_i=x]
\quad\text{for }x\in S_X,
\label{eq:NoOBS_Ex1}
\end{equation}
where (1) holds by \eqref{eq:NoOBS_Ex0}, and (2) by $Y_i\geq0$. By evaluating \eqref{eq:NoOBS_Ex1} at $x\in S_X=\{-1,1\}$, we deduce that the identified set for $\theta$ is given by $\Theta_I(P)=[\theta_l(P),\theta_u(P)],$
where $\theta_l(P)=E_P[Y_iZ_i| X_i=1]-1$ and $\theta_u(P)=1-E_P[Y_iZ_i| X_i=-1]$. We also assume that $P(Z_i=1,X_i=x)>0$ and $V_P[Y_iZ_i|X_i=x]\in(0,\infty)$ for all $x\in\{-1,1\}$.

In this context, it is natural to estimate the bounds using sample analogs:
\begin{equation*}
(\hat\theta_l,\hat\theta_u)
~=~
\left(
\tfrac{\sum_{i=1}^N Y_iZ_i I\{X_i=1\}}{\sum_{i=1}^N I\{X_i=1\}}-1,
~
1-\tfrac{\sum_{i=1}^N Y_iZ_i I\{X_i=-1\}}{\sum_{i=1}^N I\{X_i=-1\}}
\right).
\end{equation*}
Standard asymptotic arguments imply
\begin{align}
&\sqrt{N}(\hat\theta_l-\theta_l(P),\hat\theta_u-\theta_u(P)) \notag \\
&~=~
\left(
\tfrac{\frac{1}{\sqrt N}\sum_{i=1}^N(Y_iZ_i-E_P[Y_iZ_i \mid X_i=1])I\{X_i=1\}}{\frac{1}{N}\sum_{i=1}^N I\{X_i=1\}},
~
-\tfrac{\frac{1}{\sqrt N}\sum_{i=1}^N(Y_iZ_i-E_P[Y_iZ_i \mid X_i=-1])I\{X_i=-1\}}{\frac{1}{N}\sum_{i=1}^N I\{X_i=-1\}}
\right) \notag \\
&~\overset{d}{\to}~
N\left(
\mathbf 0_{2\times1},
\left(
\begin{array}{cc}
V_P[Y_iZ_i \mid X_i=1]/P(X_i=1) & 0 \\
0 & V_P[Y_iZ_i \mid X_i=-1]/P(X_i=-1)
\end{array}
\right)
\right).
\label{eq:NoOBS_Ex2}
\end{align}

It is not hard to verify that OBS can fail in this econometric model. For example, consider the case with $E_P[Y_iZ_i | X_i=1]-1=1-E_P[Y_iZ_i | X_i=-1]$, 
or, equivalently, $\theta_l(P)=\theta_u(P)$, so the model is point identified. Then,
\begin{align}
P(\hat\theta_l\leq\hat\theta_u)~=~P(\sqrt N(\hat\theta_l-\theta_l(P))\leq\sqrt N(\hat\theta_u-\theta_u(P))) ~\overset{(1)}{\to}~\frac{1}{2},
\label{eq:NoOBS_Ex3}
\end{align}
where (1) holds by \eqref{eq:NoOBS_Ex2}. Note that \eqref{eq:NoOBS_Ex3} implies that $P(\hat\theta_l\leq\hat\theta_u)<1$ for all $N$ large enough.
\end{example}

\section{Online supplement}

\subsection{Proof of Theorems}

\begin{proof}[Proof of part (c) of Theorem \ref{thm:CIcomparison}] We prove \eqref{eq:CI4main4} by contradiction. That is, suppose that ${\lim \inf}_{N\to\infty } (P_{N}(\theta
_{N}\in CI_{\alpha }^{4})-P_{N}(\theta _{N}\in CI_{\alpha }^{j}))<0$ for some $j=1,2,3$. We can then find a subsequence $\{k_{N}\}_{N\in \mathbb{N}}$ s.t. 
\begin{equation}
\lim_{N \to \infty} (P_{k_{N}}(\theta _{k_{N}}\in CI_{\alpha }^{j})-P_{k_{N}}(\theta _{k_{N}}\in CI_{\alpha }^{4})) ~ >~0~~\text{ for some }j=1,2,3.
\label{eq:CIcomparison_8}
\end{equation}
The proof is completed by showing that \eqref{eq:CIcomparison_8} cannot hold.

By possibly taking a further subsequence, 
\begin{align}
&\Bigg( 
\begin{array}{c}
\theta _{l}( P_{k_{N}}) ,\theta _{u}( P_{k_{N}}) ,\sigma _{l}( P_{k_{N}}) ,\sigma _{u}( P_{k_{N}}) ,\rho ( P_{k_{N}}) , \sqrt{k_{N}}( \theta _{u}( P_{k_{N}}) -\theta _{l}( P_{k_{N}}) ), \\ 
\sqrt{k_{N}}(\theta _{u}(P_{k_{N}})-\theta _{l}(P_{k_{N}})-b_{k_{N}}) ,\sqrt{ k_{N} }( \theta _{l}( P_{k_{N}}) -\theta _{k_{N}}) ,\sqrt{k_{N}}( \theta _{k_{N}}-\theta _{u}( P_{k_{N}}) )
\end{array}
\Bigg)  \notag \\
&~\to~ ( \theta _{l},\theta _{u},\sigma _{l},\sigma _{u},\rho ,\mu ,\eta ,\Psi _{l},\Psi _{u}) .  \label{eq:CIcomparison_9}
\end{align}
where $\{b_{N}\}_{N \geq 1}$ is the tuning parameter sequence used to implement $CI_{\alpha }^{3}$.

We then divide the argument into six exhaustive cases, depending on possible values of $( \mu ,\Psi _{l},\Psi _{u}) $. In this regard, note that $\theta _{N}\in \Theta _{I}(P_{N})^{c}$ implies that either (i) $\sqrt{k_{N}}(\theta _{l}(P_{k_{N}})-\theta _{k_{N}})>0$ or (ii) $\sqrt{k_{N}}(\theta _{k_{N}}-\theta _{u}(P_{k_{N}}))>0$. By taking limits, we conclude that either (i) $\Psi _{l}\geq 0$ or (ii) $\Psi _{u}\geq 0$. The result is completed by showing that none of the following exhaustive cases satisfy \eqref{eq:CIcomparison_8}.

\noindent {Case 1:} $\mu =\infty $, $\Psi _{l}\geq 0$, and $\eta \in (0,\infty ]$. Then, for any $j=1,2,3,$ consider the following derivation.
\begin{align}
P_{k_{N}}(\theta _{k_{N}}\in CI_{\alpha }^{4})& ~\overset{(1)}{\geq }~P_{k_{N}}(\theta _{k_{N}}\in CI_{\alpha }^{4,a})  \notag \\
& ~=~P_{k_{N}}(\hat{\theta}_{l}-{\hat{\sigma}_{l}c(\hat{\rho})}/\sqrt{k_{N}}\leq \theta _{k_{N}}\leq \hat{\theta}_{u}+{\hat{\sigma}_{u}c(\hat{\rho})}/\sqrt{k_{N}})  \notag \\
& ~=~P_{k_{N}}\left( 
\begin{array}{c}
\{\sqrt{k_{N}}(\hat{\theta}_{l}-\theta _{l}(P_{k_{N}}))/\hat{\sigma}_{l}-c(\hat{\rho})\leq -\sqrt{k_{N}}(\theta _{l}(P_{k_{N}})-\theta _{k_{N}})/\hat{\sigma}_{l}\}\cap  \\ 
\left\{ 
\begin{array}{c}
-\sqrt{k_{N}}(\theta _{l}(P_{k_{N}})-\theta _{k_{N}})/\hat{\sigma}_{u}-\sqrt{k_{N}}(\theta _{u}(P_{k_{N}})-\theta _{l}(P_{k_{N}}))/\hat{\sigma}_{u}\leq \\ 
\sqrt{k_{N}}(\hat{\theta}_{u}-\theta _{u}(P_{k_{N}}))/\hat{\sigma}_{u}+c(\hat{\rho})
\end{array}
\right\} 
\end{array}
\right)   \notag \\
& ~\overset{(2)}{ \to }~P(\{z_{1}-c(\rho )\leq -\Psi _{l}/\sigma_{l}\}\cap \{-(\Psi _{l}+\mu )/\sigma _{u}\leq z_{2}\sqrt{1-\rho ^{2}}+z_{1}\rho +c(\rho )\})  \notag \\
& ~\overset{(3)}{=}~P(\{z_{1}-c(\rho )\leq -\Psi _{l}/\sigma _{l}\})  \notag\\
& ~=~\Phi (c(\rho )-\Psi _{l}/\sigma _{l})  \notag \\
& ~\overset{(4)}{\geq }~\Phi (\Phi ^{-1}(1-\alpha )-\Psi _{l}/\sigma _{l}) \notag \\
&~\overset{(5)}{=}~\lim P_{k_{N}}(\theta _{k_{N}} \in CI_{\alpha }^{j}), \label{eq:CIcomparison_10}
\end{align}
where (1) holds by $CI_{\alpha }^{4,a}\subseteq CI_{\alpha }^{4}$ with $ CI_{\alpha }^{4,a}=[\hat{\theta}_{l}-{\hat{\sigma}_{l}c(\hat{\rho})}/\sqrt{ k_{N}},\hat{\theta}_{u}+{\hat{\sigma}_{u}c(\hat{\rho})}/\sqrt{k_{N}}]$ with $ c(\cdot )$ as in Definition \ref{def:c_rho}, (2) by \eqref{eq:CIcomparison_9}, Lemma \ref{lem:cp_CTS}, and OBS (Definition \ref{def:setup}), (3) by $\mu
=\infty $, (4) by Lemma \ref{lem:cp_geq_PhiInv1malpha}, and (5) by part (a) of Lemma \ref{lem:C1_limit}, $F_{1}( \infty ,\sigma _{l},\sigma _{u}) =\Phi ^{-1}(1-\alpha )$, part (a) of Lemma \ref{lem:C2_limit}, and part (c) of Lemma \ref{lem:C3_limit}. Note that \eqref{eq:CIcomparison_10} implies that \eqref{eq:CIcomparison_8} fails.

\noindent {Case 2:} $\mu =\infty $, $\Psi _{l}\geq 0$, and $\eta \in [ -\infty ,0]$. By this and Lemma \ref{lem:near1}, we deduce that $\rho =1$ and $\sigma =\sigma _{l}=\sigma _{u}$. Then, for any $j=1,2,3,$ consider the following derivation. 
\begin{align}
P_{k_{N}}(\theta _{k_{N}}\in CI_{\alpha }^{4})& ~\overset{(1)}{\geq } ~P_{k_{N}}(\theta _{k_{N}}\in CI_{\alpha }^{4,a})  \notag \\
& ~=~P_{k_{N}}\left( 
\begin{array}{c}
\{\sqrt{k_{N}}(\hat{\theta}_{l}-\theta _{l}(P_{k_{N}}))/\hat{\sigma}_{l}-c( \hat{\rho})\leq -\sqrt{k_{N}}(\theta _{l}(P_{k_{N}})-\theta _{k_{N}})/\hat{ \sigma}_{l}\}\cap  \\ 
\left\{ 
\begin{array}{c}
-\sqrt{k_{N}}(\theta _{l}(P_{k_{N}})-\theta _{k_{N}})/\hat{\sigma}_{u}-\sqrt{ k_{N}}(\theta _{u}(P_{k_{N}})-\theta _{l}(P_{k_{N}}))/\hat{\sigma}_{u} \\ 
\leq \sqrt{k_{N}}(\hat{\theta}_{u}-\theta _{u}(P_{k_{N}}))/\hat{\sigma} _{u}+c(\hat{\rho})
\end{array}
\right\} 
\end{array}
\right)   \notag \\
& ~\overset{(2)}{ \to }~P(\{z_{1}-c(1)\leq -\Psi _{l}/\sigma \}) \notag \\
& ~\overset{(3)}{=}~P(\{z_{1}-\Phi ^{-1}(1-\alpha /2)\leq -\Psi _{l}/\sigma \})  \notag \\
& ~=~\Phi (\Phi ^{-1}(1-\alpha /2)-\Psi _{l}/\sigma )  \notag \\
& ~\overset{(4)}{\geq }~\limsup_{N\to\infty} P_{k_{N}}(\theta _{k_{N}} \in CI_{\alpha }^{j}),\label{eq:CIcomparison_11}
\end{align}
where (1) holds by $CI_{\alpha }^{4,a}\subseteq CI_{\alpha }^{4}$ with $ CI_{\alpha }^{4,a}=[\hat{\theta}_{l}-{\hat{\sigma}_{l}c(\hat{\rho})}/\sqrt{k_{N}},\hat{\theta}_{u}+{\hat{\sigma}_{u}c(\hat{\rho})}/\sqrt{k_{N}}]$ with $c(\cdot )$ as in Definition \ref{def:c_rho}, (2) by \eqref{eq:CIcomparison_9}, Lemma \ref{lem:cp_CTS}, and OBS (Definition \ref{def:setup}), (3) by Lemma  \ref{lem:c1}, and (4) by $\Phi ^{-1}(1-\alpha /2)\geq \Phi ^{-1}(1-\alpha )$, Lemma \ref{lem:C1_limit}, $F_{1}( \infty ,\sigma,\sigma ) =\Phi ^{-1}(1-\alpha )$, part (a) of Lemmas \ref{lem:C1_limit} and \ref{lem:C2_limit}, and parts (b) and (d) of Lemma \ref{lem:C3_limit}. Note that \eqref{eq:CIcomparison_11} implies that \eqref{eq:CIcomparison_8} fails.

\noindent {Case 3:} $\mu <\infty $ and $\Psi _{l}\geq 0$. By $\mu \in \mathbb{R}_{+}$, it follows that $\theta _{u}(P_{k_{N}})-\theta _{l}(P_{k_{N}})\to 0$. By this and Lemma \ref{lem:near1}, we deduce that $\rho =1$ and $\sigma =\sigma _{l}=\sigma _{u}$. Also, by repeating the argument in \eqref{eq:CIcomparison_4}, we deduce that $\Phi ^{-1}(1-\alpha /2)\geq G(\mu /\sigma )$. Then, for any $j=1,2,3,$ consider the following derivation.
\begin{align}
P_{k_{N}}(\theta _{k_{N}}\in CI_{\alpha }^{4})& ~\overset{(1)}{\geq }~P_{k_{N}}(\theta _{k_{N}}\in CI_{\alpha }^{4,a})  \notag \\
& ~=~P_{k_{N}}\left( 
\begin{array}{c}
\{\sqrt{k_{N}}(\hat{\theta}_{l}-\theta _{l}(P_{k_{N}}))/\hat{\sigma}_{l}-c(\hat{\rho})\leq -\sqrt{k_{N}}(\theta _{l}(P_{k_{N}})-\theta _{k_{N}})/\hat{\sigma}_{l}\}\cap  \\ 
\left\{ 
\begin{array}{c}
-\sqrt{k_{N}}(\theta _{l}(P_{k_{N}})-\theta _{k_{N}})/\hat{\sigma}_{u}-\sqrt{k_{N}}(\theta _{u}(P_{k_{N}})-\theta _{l}(P_{k_{N}}))/\hat{\sigma}_{u} \\ 
\leq \sqrt{k_{N}}(\hat{\theta}_{u}-\theta _{u}(P_{k_{N}}))/\hat{\sigma}_{u}+c(\hat{\rho})
\end{array}
\right\} 
\end{array}
\right)   \notag \\
& ~\overset{(2)}{ \to }~P(\{z_{1}-c(1)\leq -\Psi _{l}/\sigma \}\cap \{-(\Psi _{l}+\mu )/\sigma \leq z_{1}+c(1)\})  \notag \\
& ~\overset{(3)}{=}~\Phi ((\Psi _{l}+\mu )/\sigma +\Phi ^{-1}(1-\alpha /2))-\Phi (\Psi _{l}/\sigma -\Phi ^{-1}(1-\alpha /2))  \notag \\
&~\overset{(4)}{\geq}~\lim_{N\to\infty} P_{k_{N}}(\theta _{k_{N}} \in CI_{\alpha }^{j}),\label{eq:CIcomparison_12}
\end{align}
where (1) holds by $CI_{\alpha }^{4,a}\subseteq CI_{\alpha }^{4}$ with $ CI_{\alpha }^{4,a}=[\hat{\theta}_{l}-{\hat{\sigma}_{l}c(\hat{\rho})}/\sqrt{ k_{N}},\hat{\theta}_{u}+{\hat{\sigma}_{u}c(\hat{\rho})}/\sqrt{k_{N}}]$ with $ c(\cdot )$ as in Definition \ref{def:c_rho}, (2) by \eqref{eq:CIcomparison_9}, Lemma \ref{lem:cp_CTS}, and OBS (Definition \ref{def:setup}), (3) by Lemma \ref{lem:c1}, and (4) by $\Phi ^{-1}(1-\alpha /2)\geq G(\mu /\sigma )$, Lemma \ref{lem:C1_limit}, $F_{1}( \mu ,\sigma ,\sigma ) =G(\mu /\sigma )$, part (a) of Lemma \ref{lem:C2_limit}, and part (e) of Lemma \ref{lem:C3_limit}. Note that \eqref{eq:CIcomparison_12} implies that \eqref{eq:CIcomparison_8} fails.

\noindent {Case 4:} $\mu =\infty $, $\Psi _{u}\geq 0$, and $\eta \in (0,\infty ]$. This case is analogous to case 1 and, therefore, omitted.

\noindent {Case 5:} $\mu =\infty $, $\Psi _{u}\geq 0$, and $\eta \in [ -\infty ,0]$. This case is analogous to case 2 and, therefore, omitted.

\noindent {Case 6:} $\mu <\infty $ and $\Psi _{u}\geq 0$. This case is analogous to case 3 and, therefore, omitted.

To conclude the proof, it suffices to verify \eqref{eq:CI4main4} holds strictly for suitably chosen sequences $\{(P_{N},\theta _{N})\in \mathcal{P}\times \Theta _{I}(P_{N})^{c}\}_{N\in \mathbb{N}}$. We use Example \ref{ex:OBS_Ex} with $\underline{Y}=0$, $\overline{Y}=1$, $\{Y_{i}|Z_{i}=1\}\sim Be(1/2)$, and $Z_{i}\sim Be(1-\pi (P_{N}))$, where $\pi (P_{N})~=~(\ln N)/\sqrt{N} + a_1/\sqrt{N}~\downarrow ~0$ for some $a_1 >0$. We consider coverage of $\theta _{N}=\theta _{l}(P_{N})-a_2/ \sqrt{N}\in\Theta _{I}(P_{N})^{c}$ for some $a_2>0$ and $CI_{\alpha }^{3}$ implemented with $b_{N}=(\ln N)/\sqrt{N}$ and the full sample the data (i.e., $\lambda = 1$).

By repeating arguments presented in the proof of Theorem \ref{thm:CI1}, we deduce that the vector $(\theta _{l}, \theta _{u}, \sigma _{l}, \sigma _{u}, \rho, \mu, \eta ,\Psi _{l})= (1/2,1/2,1/2,1/2,1,\infty ,a_1, a_2)$, and part (c) of Lemma \ref{lem:C3_limit} implies that
\begin{align}
\lim_ {N \to \infty} P_{N}(\theta _{N}\in CI_{\alpha }^{3})~=~\Phi (\Phi ^{-1}(1-\alpha )-2\sqrt{\lambda } a_2).\label{eq:CIcomparison_14}
\end{align}
In turn, Lemmas \ref{lem:C1_limit}, \ref{lem:C2_limit}, \ref{lem:c1}, and \ref{lem:limits} imply that 
\begin{align}
\lim_{N\to\infty} P_{N}(\theta _{N}\in CI_{\alpha }^{j})~&=~ \Phi (\Phi ^{-1}(1-\alpha )-2\sqrt{\lambda } a_2)~~~~\text{ for }j=1,2,\notag\\
\lim_{N\to\infty} P_{N}(\theta _{N}\in CI_{\alpha }^{4})~&=~\Phi (\Phi ^{-1}(1-\alpha /2)-2\sqrt{\lambda } a_2).\label{eq:CIcomparison_15}
\end{align}
The desired result follows from \eqref{eq:CIcomparison_14}, \eqref{eq:CIcomparison_15}, and $\Phi ^{-1}(1-\alpha )<\Phi ^{-1}(1-\alpha/2)$. For instance, using $a_1 =1.5$, $a_2  = 0.01$, and  $\alpha = 0.05$ yields $\lim_{N\to\infty}P_N(\theta_N\in CI_{\alpha}^{4}) = 0.974 > 0.948 = \lim_{N\to\infty}P_N(\theta_N\in CI_{\alpha}^{1}) =\lim_{N\to\infty}P_N(\theta_N\in CI_{\alpha}^{2})=\lim_{N\to\infty}P_N(\theta_N\in CI_{\alpha}^{3})$.
\end{proof}

\begin{proof}[Proof of Theorem \ref{thm:CI4}] To show this result, we construct specific sequences where \eqref{eq:orderbounds1} and \eqref{eq:orderbounds2} can occur. We focus on sequences $\{(P_{N},\theta _{N})\in \mathcal{P}\times \Theta _{I}(P_{N})^{c}\}_{N\in \mathbb{N}}$  s.t.\  
\begin{align}
&\Bigg( 
\begin{array}{c}
\theta _{l}( P_{N}) ,\theta _{u}( P_{N}) ,\sigma _{l}^{E}( P_{N}) ,\sigma _{u}^{E}( P_{N}) ,\rho ^{E}( P_{N}) ,\sigma _{l}^{I}( P_{N}) ,\sigma _{u}^{I}( P_{N}) ,\rho ^{I}( P_{N})  \\ 
\sqrt{N}( \theta _{u}( P_{N}) -\theta _{l}( P_{N}) ) ,\sqrt{N}( \theta _{l}( P_{N}) -\theta _{N}) ,\sqrt{N}( \theta _{N}-\theta _{u}( P_{N}) ) 
\end{array}
\Bigg)  \notag \\
&\to ( \theta _{l},\theta _{u},\sigma _{l}^{E},\sigma _{u}^{E},\rho ^{E},\sigma _{l}^{I},\sigma _{u}^{I},\rho ^{I}, \mu ,\Psi _{l},\Psi _{u}) .
\label{eq:CI4_comp}
\end{align}
with $\Psi _{l}\geq 0$ and $\mu \in \mathbb{R}_{+}$. By Lemma \ref{lem:near1}, $\rho^{E}=\rho^{I} =1$ and $\sigma _{l}^E=\sigma _{u}^E$, and $\sigma _{l}^{I}=\sigma _{u}^{I}$.

We can construct concrete sequences in the context of Example \ref{ex:OBS_Ex}. In particular, we use Example \ref{ex:OBS_Ex} with $\underline{Y}=0$, $\overline{Y}=1$, $\{Y_{i}|Z_{i}=1\}\sim Be(1/2)$, and $Z_{i}\sim Be(1-\pi (P_{N}))$, where $\pi (P_{N})~=~a_1/\sqrt{N}~\downarrow ~0$ for some $a_1 >0$. We consider coverage of $\theta _{N}=\theta _{l}(P_{N}) - a_2/\sqrt{N}\in\Theta _{I}(P_{N})^{c}$ for some $a_2>0$ and $CI_{\alpha }^{4}$ implemented with the subsample of the data $ \{(Y_{i},Z_{i})\}_{i=1}^{\lfloor \lambda N\rfloor }$ for $\lambda \in (0,1]$. In this context, we consider two estimators for the bounds. The efficient estimator uses the full sample, while the inefficient estimator uses only a fraction $\lambda = a_3 \in (0,1)$ of the sample. By repeating arguments in Theorem \ref{thm:CI1}, we deduce that \eqref{eq:CI4_comp} holds with
\begin{equation*}
     ( \theta _{l},\theta _{u},\sigma _{l}^{E},\sigma _{u}^{E},\rho ^{E},\sigma _{l}^{I},\sigma _{u}^{I},\rho ^{I}, \mu ,\Psi _{l}, \Psi_{u}) ~=~ \Big(\frac{1}{2},\frac{1}{2}, \frac{1}{2},\frac{1}{2},1, \frac{1}{2\sqrt{a_3}},\frac{1}{2\sqrt{a_3}},1,a_1,a_2,-(a_1+a_2)\Big).
\end{equation*}

Part (b) of Lemma \ref{lem:limits} then yields
\begin{align*}
\lim_{N \to \infty} P_{N}(\theta _{N} \in CI_{\alpha }^{4,E})& ~=~\Phi \big( 2( a_2+a_1)+\Phi ^{-1}( 1-\alpha /2) \big)-
\Phi \big( 2a_2-\Phi ^{-1}( 1-\alpha /2) \big), \\
\lim_{N \to \infty} P_{N}(\theta _{N} \in CI_{\alpha }^{4,I})&~=~
\Phi \big( 2\sqrt{a_3}(a_2+a_1 ) +\Phi ^{-1}( 1-\alpha /2) \big)-\Phi \big( 2\sqrt{a_3}a_2-\Phi ^{-1}( 1-\alpha /2) \big).
\end{align*}
We now verify the strict inequalities. To obtain \eqref{eq:orderbounds1}, set $a_1=a_2=1$, $a_3=0.3$, and $\alpha=0.05$, which give
\begin{align*}
\lim_{N\to \infty} P_{N}(\theta _{N} \in CI_{\alpha }^{4,E})~=~
0.48~<~0.81~=~\lim_{N\to \infty} P_{N}(\theta _{N} \in CI_{\alpha }^{4,I}).
\end{align*}
To obtain \eqref{eq:orderbounds2}, set $a_1=0.15$, $a_2=0.01$, $a_3=0.3$, and $\alpha=0.05$, which give
\begin{align*}
\lim_{N\to \infty} P_{N}(\theta _{N} \in CI_{\alpha }^{4,E})~=~0.963 ~>~ 0.958~=~\lim_{N\to \infty} P_{N}(\theta _{N} \in CI_{\alpha }^{4,I}).
\end{align*}
This completes the proof.
\end{proof}

\subsection{Auxiliary material}

This section collects auxiliary definitions and intermediate results that are useful for proving parts (e)-(f) of Theorem \ref{thm:CIcomparison} and Theorem \ref{thm:CI4}.

\begin{definition}\label{def:c_rho}
For any $\rho \in[ -1,1] $, $c( \rho )$ is the unique $c$ that solves 
\begin{equation*}
\inf_{\Delta \geq 0}P\Bigg( 
\begin{array}{c}
\{ z_{1}-\Delta -c\leq 0\leq \rho z_{1}+z_{2}\sqrt{1-\rho ^{2}}+c\}~ \cup  \\ 
\{ \vert ( 1+\rho ) z_{1}+z_{2}\sqrt{1-\rho ^{2}}-\Delta \vert \leq \sqrt{2+2\rho }\Phi ^{-1}( 1-\alpha /2) \} 
\end{array}
\Bigg) ~=~1-\alpha ,
\end{equation*}
where $z=( z_{1},z_{2}) \sim \mathcal{N}( \mathbf{0}_{2\times 1},\mathbf{I}_{2\times 2}) $.
\end{definition}

\begin{remark}
By definition, \cite{stoye:2020} uses $c^{4}=c(\hat{\rho})$. 
To see this, observe that $z=(z_{1},z_{2})\sim \mathcal{N}(\mathbf{0}_{2\times 1},\mathbf{I}_{2\times 2})$ is equivalent to $\tilde{z}( \rho )= (\tilde z_{1}(\rho),\tilde z_{2}(\rho)):=(z_1, \rho  z_{1}+z_{2}\sqrt{1-\rho ^{2}}) \sim \mathcal{N}( \mathbf{0}_{2\times 1},[ 1,\rho ;\rho ,1] ) $.
Therefore, $c(\rho)$ can equivalently be characterized as the unique $c$ solving
\begin{equation*}
\inf_{\Delta \geq 0}P\Bigg( 
\begin{array}{c}
\{ \tilde z_{1}(\rho)-\Delta-c\leq 0\leq \tilde z_{2}(\rho)+c\} \cup  \\
\{ |\tilde z_{1}(\rho)+\tilde z_{2}(\rho)-\Delta| \leq \sqrt{2+2\rho}\,\Phi^{-1}(1-\alpha/2) \}
\end{array}
\Bigg)~=~1-\alpha .
\end{equation*}
If we set $\rho =\hat \rho$ and condition on $\hat \rho$, the above expression coincides with the one in  \cite{stoye:2020}.
\end{remark}

\begin{lemma}\label{lem:cp_CTS}
$c( \rho ) $ in Definition \ref{def:c_rho} is a continuous function of $\rho \in [-1,1]$.
\end{lemma}
\begin{proof}
Recall $\bar{\mathbb{R}}_{+} = [0,\infty]$. Define the function $G:[-1,1]\times \bar{\mathbb{R}}_{+}\times \bar{\mathbb{R}}_{+} \to [0,1]$ as follows:
\begin{equation*}
G( \rho ,\Delta ,c)~=~P\left(
\begin{array}{c}
\{ z_{1}-\Delta -c\leq 0\leq \rho z_{1}+z_{2}\sqrt{1-\rho ^{2}}+c\}\cup\\
\{\vert( 1+\rho ) z_{1}+z_{2}\sqrt{1-\rho ^{2}}-\Delta\vert\leq \sqrt{2+2\rho }\Phi ^{-1}( 1-\alpha /2)\}
\end{array}
\right)~,
\end{equation*}
where $G(\rho,\Delta,\infty) = 1$ and $G(\rho,\infty,c) = \Phi(c)$. Define also $\widetilde{G}(\rho,c) = \inf_{\Delta \ge 0} G(\rho, \Delta, c)$. It is easy to see that $\widetilde{G}$ is continuous in all its arguments.\footnote{For any sequence $(\rho_n,c_n) \to (\rho,c)$, $ \exists \Delta_n, \Delta_0  \in \bar{\mathbb{R}}_{+}$ s.t. $\widetilde{G}(\rho_n,c_n) = G(\rho_n,\Delta_n,c_n)$ and $\widetilde{G}(\rho,c) = G(\rho,\Delta_0,c)$. By taking subsequence if necessary we can assume that $\Delta_n \to \bar{\Delta}$ for some $\bar{\Delta}$. When $\rho >-1$, the continuity of $G$ on  $(-1,1] \times  \bar{\mathbb{R}}_{+}\times \bar{\mathbb{R}}_{+}$ and definition of $\widetilde{G}$ imply $\widetilde{G}(\rho,c) \le  G(\rho,\bar{\Delta},c) = \lim_{n \to \infty} \widetilde{G}(\rho_n,c_n) \le \lim_{n \to \infty} G(\rho_n,\Delta_0,c_n) = \widetilde{G}(\rho,c)$; hence $\widetilde{G}(\rho,c) = \lim_{n \to \infty} \widetilde{G}(\rho_n,c_n)$. When $\rho=-1$, the result follows by similar arguments, using $\widetilde G(-1,c)=\Phi(c)$, the continuity of $G$ on $[-1,1]\times(0,\infty]\times\bar{\mathbb R}_+$ for the upper bound, and the fact that, uniformly over $\Delta\ge0$, the first event contains ${z_1\le c,\ \rho z_1+z_2\sqrt{1-\rho^2}\ge-c}$ for the lower bound.} By definition,
\[
c( \rho )~=~\Big\{c: \widetilde{G}(\rho,c)=1-\alpha\Big\},
\]
where the uniqueness of the solution follows from \cite{stoye:2020}.

Suppose $c( \rho ) $ is not a continuous function of $\rho \in [-1,1]$, i.e., $\{\rho_m\}_{m\in\mathbb{N}}$ with $\rho_m\to \rho\in[-1,1]$ and $c(\rho_m)\not\to c(\rho)$. By possibly taking a subsequence, we have that $c(\rho_{m})\to \bar{c} \neq c(\rho)$ as $m\to \infty$, where  $\bar{c} \in [0,\infty]$. By definition of $c(\rho_{m})$,
\begin{equation}
\widetilde{G}( \rho _{m},c( \rho _{m}))~=~1-\alpha.
\label{eq:c_defn_proof}
\end{equation}
Since $c(\rho)$ is unique and $\bar{c} \neq c(\rho)$, we have $\widetilde{G}( \rho,\bar{c})\neq 1-\alpha$. There are two cases: $\widetilde{G}( \rho,\bar{c})>1-\alpha$ or $\widetilde{G}( \rho  ,\bar{c})<1-\alpha$. To complete the proof, it suffices to show that both cases are contradictory.

\noindent {Case 1:} $\widetilde{G}( \rho, \bar{c})>1-\alpha$. Then, $\exists \varepsilon>0$ s.t.\ $\widetilde{G}( \rho, \bar{c})\ge 1-\alpha+\varepsilon.$ By continuity of $\widetilde{G}$, $\lim_{m\to \infty }\widetilde{G}( \rho _{m}, c( \rho _{m}))=\widetilde{G}( \rho ,\bar{c})\ge 1-\alpha+\varepsilon$. Therefore, $\exists M$  s.t.\  for all $m\ge M$, $\widetilde{G}( \rho _{m} ,c( \rho _{m}))\ge 1-\alpha+\varepsilon/2$, which is a contradiction to \eqref{eq:c_defn_proof}.

\noindent {Case 2:} $\widetilde{G}( \rho  ,\bar{c})<1-\alpha$. Then, $\exists \varepsilon>0$  s.t.\ $\widetilde{G} (\rho, \bar{c})  \le 1-\alpha-\varepsilon.$ By continuity of $\widetilde{G}$, $\lim_{m\to \infty }\widetilde{G}( \rho _{m}, c( \rho _{m}))=\widetilde{G}( \rho  ,\bar{c})\le 1-\alpha-\varepsilon/2$. Therefore, $\exists M$  s.t.\  for all $m\ge M$, $\widetilde{G}( \rho _{m},c( \rho _{m}))\le 1-\alpha-\varepsilon/4$, which is a contradiction to \eqref{eq:c_defn_proof}.
\end{proof}

\begin{lemma}\label{lem:cp_geq_PhiInv1malpha}
For any $\rho \in [-1,1]$ and $\alpha \in (0,1)$, $c( \rho ) $ in Definition \ref{def:c_rho} satisfies $c( \rho )\ge \Phi^{-1}(1-\alpha)$.
\end{lemma}
\begin{proof}
Fix $\rho \in [-1,1]$ arbitrarily. Assume the result is false, i.e., $c( \rho )<\Phi^{-1}(1-\alpha)$. Then, $\exists \varepsilon>0$ s.t. $c( \rho )\le\Phi^{-1}(1-\alpha) - \varepsilon$. Define
\begin{equation*}
\Pi(c)
~=~\inf_{\Delta \geq 0}
P\left(
\begin{array}{c}
\{ z_{1}-\Delta -c\leq 0\leq \rho z_{1}+z_{2}\sqrt{1-\rho ^{2}}+c\}
\cup
\\
\left\{
\begin{array}{c}
 \Delta -\sqrt{2+2\rho }\Phi ^{-1}( 1-\alpha /2) \leq \\
 z_{1}( 1+\rho ) +z_{2}\sqrt{1-\rho ^{2}}\leq \Delta +\sqrt{2+2\rho }\Phi ^{-1}( 1-\alpha /2)
\end{array}
\right\}
\end{array}
\right).
\end{equation*}
It is easy to see that $\Pi(c)$ is nondecreasing in $c$.

By $\Pi(c(\rho))=1-\alpha$ and $c( \rho )\le\Phi^{-1}(1-\alpha) - \varepsilon$, the monotonicity of $\Pi(c)$ implies that $\Pi(\Phi^{-1}(1-\alpha)-\varepsilon) \ge 1-\alpha$. 
Then, for any $\Delta \ge 0$, 
$$\Pi(\Phi^{-1}(1-\alpha)-\varepsilon)$$ is less than or equal to 
\begin{align*} 
&P\left(
\begin{array}{c}
\{ z_{1}-\Delta -\Phi ^{-1}( 1-\alpha ) + \varepsilon \leq 0\leq \rho z_{1}+z_{2}\sqrt{1-\rho ^{2}}+\Phi ^{-1}( 1-\alpha ) - \varepsilon \}
\cup
\\
\{ \Delta -\sqrt{2+2\rho }\Phi ^{-1}( 1-\alpha /2) \leq z_{1}( 1+\rho ) +z_{2}\sqrt{1-\rho ^{2}}\leq \Delta +\sqrt{2+2\rho }\Phi ^{-1}( 1-\alpha /2) \}
\end{array}
\right)~.
\end{align*}
Now evaluate this inequality for a sequence $\Delta_m\to \infty$ to obtain
\[
\Pi(\Phi^{-1}(1-\alpha)-\varepsilon) \le P(0\le \rho z_{1}+z_{2}\sqrt{1-\rho ^{2}}+\Phi ^{-1}(1-\alpha) -\varepsilon)  = \Phi\left( \Phi^{-1}(1-\alpha)-\varepsilon\right)~,
\]
where we used $\tilde z=\rho z_{1}+z_{2}\sqrt{1-\rho ^{2}}\sim N(0,1)$ in the last equality. Finally, we obtain
\begin{align*}
\Phi\left( \Phi^{-1}(1-\alpha) \right)~\leq~ \Pi(\Phi^{-1}(1-\alpha)-\varepsilon) ~\le~\Phi\left( \Phi^{-1}(1-\alpha)-\varepsilon\right),
\end{align*}
which is a contradiction.
\end{proof}

\begin{lemma}\label{lem:c1}
For any $\alpha \in (0,1)$, $c( \rho ) $ in Definition \ref{def:c_rho} satisfies $c(1)=\Phi^{-1}(1-\alpha/2)$.
\end{lemma}
\begin{proof}
By definition,
\begin{align*}
c(1)&~=~\left\{c:\inf_{\Delta \geq 0}P\left(
\begin{array}{c}
\{-c\leq z\leq \Delta+c\}~~~ \cup \\
\{\Delta/2-\Phi^{-1}(1-\alpha/2)\leq z\leq \Phi^{-1}(1-\alpha/2)+\Delta/2\}
\end{array}
\right)=1-\alpha\right\}.
\end{align*}

Given any $c \in  \mathbb{R}$, consider the minimization of
\begin{align*}
G(\Delta)&=~\Phi(\max\{\Phi^{-1}(1-\alpha/2)+\Delta/2,\Delta+c\})-\Phi(\min\{\Delta/2-\Phi^{-1}(1-\alpha/2),-c\})
\end{align*}
with respect to $\Delta\ge 0$. We have two cases.

\noindent {Case 1:} $2\Phi^{-1}(1-\alpha/2)-2c \ge 0$. If $\Delta \geq 2\Phi^{-1}(1-\alpha/2)-2c $, then, $\Delta+c\geq \Phi^{-1}(1-\alpha/2)+\Delta/2$ and $-c\leq \Delta/2-\Phi^{-1}(1-\alpha/2)$, and so $G(\Delta)=\Phi(\Delta+c)-\Phi(-c).$ Hence $G$ is increasing in $\Delta$ on $[2\Phi^{-1}(1-\alpha/2)-2c,\infty)$. If  $\Delta <2\Phi^{-1}(1-\alpha/2)-2c$, then, $G(\Delta)=\Phi(\Phi^{-1}(1-\alpha/2)+\Delta/2)-\Phi(\Delta/2-\Phi^{-1}(1-\alpha/2))$. The function $G$ is decreasing in $\Delta$ on $[0,2\Phi^{-1}(1-\alpha/2)-2c]$. Therefore, in this case, $G$ is minimized by setting $\Delta=2\Phi^{-1}(1-\alpha/2)-2c$.

\noindent {Case 2:} $2\Phi^{-1}(1-\alpha/2)-2c<0$. Since $\Delta \ge 0$, we have $\Delta \geq 2\Phi^{-1}(1-\alpha/2)-2c$.  The same argument with the constraint $\Delta\ge 0$ yields $\Delta=0$.

Combining all cases,
\begin{align}
\inf_{\Delta \geq 0}G(\Delta)~=~\left\{
\begin{array}{c}
I\{\Phi^{-1}(1-\alpha/2)\ge c\}(\Phi(2\Phi^{-1}(1-\alpha/2)-c)-\Phi(-c)) \\
+ I\{\Phi^{-1}(1-\alpha/2)<c\}(\Phi(c)-\Phi(-c))
\end{array}
\right\}.
\label{eq:G_RHS}
\end{align}
Now, we solve for $c$ s.t.\ \eqref{eq:G_RHS} equals $1-\alpha$ which, by definition, equals $c(1)$. 

We now show that $\Phi^{-1}(1-\alpha/2)\geq c$. Suppose otherwise that $\Phi^{-1}(1-\alpha/2)<c$. Then, the right-hand side of \eqref{eq:G_RHS} equals $\Phi(c)-\Phi(-c)$. Then, the desired solution $c$ satisfies $\Phi(c)-\Phi(-c)=1-\alpha$, which yields $\Phi(c)=1-\alpha/2$, and, so, $c=\Phi^{-1}(1-\alpha/2)$, which contradicts the premise that $\Phi^{-1}(1-\alpha/2)<c$.

Since $\Phi^{-1}(1-\alpha/2)\geq c$, the right-hand side of \eqref{eq:G_RHS} equals $\Phi(2\Phi^{-1}(1-\alpha/2)-c)-\Phi(-c)$. Then, the desired solution $c$ satisfies
\begin{equation}
\Phi(2\Phi^{-1}(1-\alpha/2)-c)-\Phi(-c)~=~1-\alpha.
\label{eq:G_RHS2}
\end{equation}
Note that the left-hand side of \eqref{eq:G_RHS2} is strictly increasing in $c$ whenever $\Phi^{-1}(1-\alpha/2)\geq c$ holds.  
Therefore, the unique solution is $c=\Phi^{-1}(1-\alpha/2)$, as desired.
\end{proof}

\begin{lemma}
\label{lem:limits}
Let $\alpha\in(0,0.5)$ and Assumption \ref{ass:1} hold. Let $\{(P_{N},\theta_{N})\in \mathcal{P}\times \Theta_{I}(P_{N})^{c}\}_{N\in\mathbb{N}}$ be a sequence  s.t.\ 
\begin{align}
&\bigg(
\begin{array}{c}
\theta_{l}(P_{N}),\theta_{u}(P_{N}),\sigma_{l}(P_{N}),\sigma_{u}(P_{N}),\rho(P_{N}), \\
\sqrt{N}(\theta_{u}(P_{N})-\theta_{l}(P_{N})),\sqrt{N}(\theta_{l}(P_{N})-\theta_{N}),\sqrt{N}(\theta_{N}-\theta_{u}(P_{N}))
\end{array}
\bigg)\notag\\
&\to~ (\theta_{l},\theta_{u},\sigma_{l},\sigma_{u},\rho,\mu,\Psi_{l},\Psi_{u}).\label{eq:sequences2}
\end{align}
Moreover, assume that $\Psi_{l}\ge 0$. Then,
\begin{enumerate}[(a)]
\item If $\mu=\infty$, $P_{N}(\theta_{N}\in CI_{\alpha}^{4})\to \Phi(c(\rho)-\Psi_{l}/\sigma_{l}).$
\item If $\mu\in\mathbb{R}_{+}$, then $\rho=1$, $\sigma_{l}=\sigma_{u}$, and, if we denote $\sigma=\sigma_{l}=\sigma_{u}$, we get
\begin{equation*}
P_{N}(\theta_{N}\in CI_{\alpha}^{4})~\to~ \Phi((\Psi_{l}+\mu)/\sigma+\Phi^{-1}(1-\alpha/2)) - \Phi(\Psi_{l}/\sigma-\Phi^{-1}(1-\alpha/2)).
\end{equation*}
\end{enumerate}
\end{lemma}
\begin{proof}
As a preliminary result, note that
\begin{align}
\Psi_{u}~=~-\lim (\sqrt{N}(\theta_{u}(P_{N})-\theta_{l}(P_{N}))+\sqrt{N}(\theta_{l}(P_{N})-\theta_{N})) 
~=~-\mu-\Psi_{l}. 
\label{eq:Psi_connect}
\end{align}
We divide the proof into two parts.

\noindent \underline{Part (a)}: In this case, $\mu=\infty$. Then $\Psi_{l}\ge 0$ and, by \eqref{eq:Psi_connect}, $\Psi_{u}=-\infty$. Hence,
\begin{align*}
&P_{N}(\theta_{N}\in CI_{\alpha}^{4})\\
&=~P_{N}(\theta_{N}\in CI_{\alpha}^{4,a}\cup CI_{\alpha}^{4,b}) \\
&=~ P_{N}\left(
\begin{array}{c}
\left\{
\begin{array}{c}
\{\sqrt{N}(\hat{\theta}_{l}-\theta_{l}(P_{N}))/\hat{\sigma}_{l}-c(\hat{\rho})\le -\sqrt{N}(\theta_{l}(P_{N})-\theta_{N})/\hat{\sigma}_{l}\}
\cap\\
\left\{
\begin{array}{c}
-\sqrt{N}(\theta_{l}(P_{N})-\theta_{N})/\hat{\sigma}_{u}-\sqrt{N}(\theta_{u}(P_{N})-\theta_{l}(P_{N}))/\hat{\sigma}_{u}
\\
\le \sqrt{N}(\hat{\theta}_{u}-\theta_{u}(P_{N}))/\hat{\sigma}_{u}+c(\hat{\rho})
\end{array}
\right\}
\end{array}
\right\}\\
\cup\left\{
\begin{array}{c}
\sqrt{N}(\theta_{N}-\theta_{u}(P_{N}))/\hat{\sigma}_{u}-\sqrt{N}(\theta_{l}(P_{N})-\theta_{N})/\hat{\sigma}_{l}-\sqrt{2+2\hat{\rho}}\Phi^{-1}(1-\alpha/2)\\
\le \sqrt{N}(\hat{\theta}_{l}-\theta_{l}(P_{N}))/\hat{\sigma}_{l}+\sqrt{N}(\hat{\theta}_{u}-\theta_{u}(P_{N}))/\hat{\sigma}_{u}\le\\
\sqrt{N}(\theta_{N}-\theta_{u}(P_{N}))/\hat{\sigma}_{u}-\sqrt{N}(\theta_{l}(P_{N})-\theta_{N})/\hat{\sigma}_{l}+\sqrt{2+2\hat{\rho}}\Phi^{-1}(1-\alpha/2)
\end{array}
\right\}
\end{array}
\right)\\
&~\overset{(1)}{\to}~
P\left(
\begin{array}{c}
\{\{z_{1}-c(\rho)\le -\Psi_{l}/\sigma_{l}\}\cap\{-\Psi_{l}/\sigma_{u}-\mu/\sigma_{u}\le \rho z_{1}+z_{2}\sqrt{1-\rho^{2}}+c(\rho)\}\}\cup\\
\left\{
\begin{array}{c}
\Psi_{u}/\sigma_{u}-\Psi_{l}/\sigma_{l}-\sqrt{2+2\rho}\Phi^{-1}(1-\alpha/2) \le (1+\rho)z_{1}+z_{2}\sqrt{1-\rho^{2}}\\
\le \Psi_{u}/\sigma_{u}-\Psi_{l}/\sigma_{l}+\sqrt{2+2\rho}\Phi^{-1}(1-\alpha/2)
\end{array}
\right\}
\end{array}
\right)
\\
&~\overset{(2)}{=}~ P(z_{1}-c(\rho)\le -\Psi_{l}/\sigma_{l})~=~\Phi(c(\rho)-\Psi_{l}/\sigma_{l}),
\end{align*}
as desired, where (1) holds by \eqref{eq:sequences2}, Lemma \ref{lem:cp_CTS}, and OBS (Definition \ref{def:setup}), and (2) by $\mu=\infty$, $\Psi_{l}\geq 0$, and \eqref{eq:Psi_connect}, which implies that $\Psi_{u}=-\infty$.

\noindent \underline{Part (b)}: In this case,  $\mu\in\mathbb{R}_{+}$. Lemma \ref{lem:near1} then implies that $\rho=1$ and $\sigma_{l}=\sigma_{u}$. Let $\sigma=\sigma_{l}=\sigma_{u}$. By a similar derivation as in part (a),
\begin{align*}
P_{N}(\theta_{N}\in CI_{\alpha}^{4})
&~\overset{(1)}{\to}~
P\left(
\begin{array}{c}
\{\{z_{1}-c(1)\le -\Psi_{l}/\sigma\}\cap \{-(\Psi_{l}+\mu)/\sigma\le z_{1}+c(1)\}\}\\
\cup\left\{
\begin{array}{c}
(\Psi_{u}-\Psi_{l})/\sigma-2\Phi^{-1}(1-\alpha/2)\le 2z_{1}\\
\le (\Psi_{u}-\Psi_{l})/\sigma+2\Phi^{-1}(1-\alpha/2)
\end{array}
\right\}
\end{array}
\right)\\
&~\overset{(2)}{=}~P(-(\Psi_{l}+\mu)/\sigma-\Phi^{-1}(1-\alpha/2)\le z_{1}\le -\Psi_{l}/\sigma+\Phi^{-1}(1-\alpha/2))\\
&~=~\Phi((\Psi_{l}+\mu)/\sigma+\Phi^{-1}(1-\alpha/2)) - \Phi(\Psi_{l}/\sigma-\Phi^{-1}(1-\alpha/2)),
\end{align*}
where (1) holds by \eqref{eq:sequences2}, Lemma \ref{lem:cp_CTS}, and OBS (Definition \ref{def:setup}), and (2) by Lemma \ref{lem:c1}, $\mu \geq 0$, $\Psi_{l}\geq 0$, and \eqref{eq:Psi_connect}, which implies that $\Psi_{u}=-\mu-\Psi_{l}$.
\end{proof}

\end{appendix}

\bibliography{BIBLIOGRAPHY}

\begin{thebibliography}{19}
\newcommand{\enquote}[1]{``#1''}
\expandafter\ifx\csname natexlab\endcsname\relax\def\natexlab#1{#1}\fi

\bibitem[\protect\citeauthoryear{Andrews and Soares}{Andrews and Soares}{2010}]{andrews/soares:2010}
\textsc{Andrews, D. W.~K. and G.~Soares} (2010): \enquote{Inference for Parameters Defined by Moment Inequalities Using Generalized Moment Selection,} \emph{Econometrica}, 78, 119--157.

\bibitem[\protect\citeauthoryear{Bugni}{Bugni}{2010}]{bugni:2010}
\textsc{Bugni, F.~A.} (2010): \enquote{Bootstrap Inference in Partially Identified Models Defined by Moment Inequalities: Coverage of the Identified Set,} \emph{Econometrica}, 78, 735--753.

\bibitem[\protect\citeauthoryear{Bugni}{Bugni}{2015}]{bugni:2015}
---\hspace{-.1pt}---\hspace{-.1pt}--- (2015): \enquote{A comparison of inferential methods in partially identified models in terms of error in coverage probability,} \emph{Econometric Theory}, 1--56.

\bibitem[\protect\citeauthoryear{Bugni, Gao, Obradovic, and Velez}{Bugni et~al.}{2024}]{bugni/gao/obradovic/velez:2024a}
\textsc{Bugni, F.~A., M.~Gao, F.~Obradovic, and A.~Velez} (2024): \enquote{Identification and Inference on Treatment Effects under Covariate-Adaptive Randomization and Imperfect Compliance,} Working paper.

\bibitem[\protect\citeauthoryear{Canay, Illanes, and Velez}{Canay et~al.}{2023}]{canay/illanes/velez:2026}
\textsc{Canay, I.~A., G.~Illanes, and A.~Velez} (2023): \enquote{A user’s guide for inference in models defined by moment inequalities,} \emph{Journal of Econometrics}, 105558.

\bibitem[\protect\citeauthoryear{Canay and Shaikh}{Canay and Shaikh}{2017}]{canay/shaikh:2017}
\textsc{Canay, I.~A. and A.~M. Shaikh} (2017): \enquote{Practical and Theoretical Advances for Inference in Partially Identified Models,} in \emph{Advances in Economics and Econometrics: Eleventh World Congress}, ed. by B.~Honor\'{e}, A.~Pakes, M.~Piazzesi, and L.~Samuelson, Cambridge University Press, vol.~2, chap.~9, 271--306.

\bibitem[\protect\citeauthoryear{Hahn}{Hahn}{1998}]{hahn:1998}
\textsc{Hahn, J.} (1998): \enquote{On the Role of the Propensity Score in Efficient Semiparametric Estimation of Average Treatment Effects,} \emph{Econometrica}, 66, 315--332.

\bibitem[\protect\citeauthoryear{Hirano, Imbens, and Ridder}{Hirano et~al.}{2003}]{hirano/imbens/ridder:2003}
\textsc{Hirano, K., G.~Imbens, and G.~Ridder} (2003): \enquote{Efficient Estimation of Average Treatment Effects Using the Estimated Propensity Score,} \emph{Econometrica}, 71, 1161--1189.

\bibitem[\protect\citeauthoryear{Ho and Rosen}{Ho and Rosen}{2017}]{ho/rosen:2017}
\textsc{Ho, K. and A.~M. Rosen} (2017): \enquote{Partial Identification in Applied Research: Benefits and Challenges,} in \emph{Advances in Economics and Econometrics: Eleventh World Congress}, ed. by B.~Honor\'{e}, A.~Pakes, M.~Piazzesi, and L.~Samuelson, Cambridge University Press, vol.~2, chap.~10, 307--359.

\bibitem[\protect\citeauthoryear{Horowitz and Manski}{Horowitz and Manski}{2000}]{horowitz/manski:2000}
\textsc{Horowitz, J.~L. and C.~F. Manski} (2000): \enquote{Nonparametric Analysis of Randomized Experiments with Missing Covariate and Outcome Data,} \emph{Journal of the American Statistical Association}, 95, 77--88.

\bibitem[\protect\citeauthoryear{Imbens and Manski}{Imbens and Manski}{2004}]{imbens/manski:2004}
\textsc{Imbens, G. and C.~F. Manski} (2004): \enquote{Confidence Intervals for Partially Identified Parameters,} \emph{Econometrica}, 72, 1845--1857.

\bibitem[\protect\citeauthoryear{Manski}{Manski}{1989}]{manski:1989}
\textsc{Manski, C.~F.} (1989): \enquote{Anatomy of the Selection Problem,} \emph{The Journal of Human Resources}, 24, 343--360.

\bibitem[\protect\citeauthoryear{Manski}{Manski}{1990}]{manski:1990}
---\hspace{-.1pt}---\hspace{-.1pt}--- (1990): \enquote{Nonparametric bounds on treatment effects,} \emph{The American Economic Review}, 80, 319--323.

\bibitem[\protect\citeauthoryear{Manski}{Manski}{1994}]{manski:1994}
---\hspace{-.1pt}---\hspace{-.1pt}--- (1994): \enquote{The Selection Problem,} in \emph{Advances in Econometrics: Sixth World Congress}, ed. by C.~Sims, Cambridge University Press, 143--170.

\bibitem[\protect\citeauthoryear{Manski}{Manski}{1995}]{manski:1995}
---\hspace{-.1pt}---\hspace{-.1pt}--- (1995): \emph{Identification Problems in the Social Sciences}, Harvard University Press.

\bibitem[\protect\citeauthoryear{Molinari}{Molinari}{2020}]{molinari:2020}
\textsc{Molinari, F.} (2020): \enquote{Chapter 5 - Microeconometrics with partial identification,} in \emph{Handbook of Econometrics, Volume 7A}, ed. by S.~N. Durlauf, L.~P. Hansen, J.~J. Heckman, and R.~L. Matzkin, Elsevier, vol.~7 of \emph{Handbook of Econometrics}, 355--486.

\bibitem[\protect\citeauthoryear{Stoye}{Stoye}{2009}]{stoye:2009}
\textsc{Stoye, J.} (2009): \enquote{More on Confidence Intervals for Partially Identified Parameters,} \emph{Econometrica}, 77, 1299--1315.

\bibitem[\protect\citeauthoryear{Stoye}{Stoye}{2020}]{stoye:2020}
---\hspace{-.1pt}---\hspace{-.1pt}--- (2020): \enquote{A Simple, Short, but Never-Empty Confidence Interval for Partially Identified Parameters,} Working paper, arXiv:2010.10484.

\bibitem[\protect\citeauthoryear{Tamer}{Tamer}{2010}]{tamer:2010}
\textsc{Tamer, E.} (2010): \enquote{Partial Identification in Econometrics,} \emph{Annual Review of Economics}, 2, 167--195.

\end{thebibliography}

\end{document}